\documentclass[letterpaper,11pt]{article}

\usepackage[english]{babel}
\usepackage[utf8x]{inputenc}
\usepackage[T1]{fontenc}

\usepackage[margin=1in]{geometry}
\usepackage{todonotes}

\usepackage{amsthm,amsfonts,amssymb,amsmath,amsxtra}
\usepackage{algorithm,algpseudocode}
\usepackage{graphicx, xcolor}
\usepackage{lmodern,hyperref,paralist}
\usepackage{bookmark}
\usepackage{multicol,xspace}
\usepackage{tikz}
\usepackage{comment}
\usepackage{mathtools}

\usepackage{subcaption}
\usepackage{authblk}

\definecolor{DarkDesaturatedBlue}{HTML}{3A3556}
\definecolor{PureOrange}{HTML}{FFBA00}
\definecolor{VerySoftBlue}{HTML}{B5AFDB}

%


\newcommand{\OPT}{\operatorname{OPT}}
\newcommand{\opt}{\mathrm{opt}}
\newcommand{\alg}{\mathrm{alg}}

\newcommand{\cvca}{\textsc{\bf cycle VCA}\xspace}

\newcommand{\ALG}{\operatorname{ALG}}


\usepackage{etoolbox}

\newtheorem{theorem}{Theorem}
\newtheorem{lemma}[theorem]{Lemma}
\newtheorem{proposition}[theorem]{Proposition}

\newtheorem{definition}[theorem]{Definition}

%
\title{Approximation Algorithms for Vertex-Connectivity Augmentation on the Cycle\footnote{Full version of the extended abstract accepted at WAOA 2021. In that extended abstract an approximation ratio of $1.8703$ was claimed, but it has been corrected to $1.8704$ in this version. We apologize for the inconvenience.}}
\author[1]{Waldo G\'alvez\thanks{Email: \texttt{waldo.galvez@uoh.cl} This project was carried out when the author was a postdoctoral researcher at the Department of Computer Science of the Technical University of Munich in Germany. Supported by the European Research Council, Grant Agreement No. 691672, project APEG.}}
\author[2]{Francisco Sanhueza-Matamala\thanks{Email: \texttt{fsanhueza@dim.uchile.cl} Partially supported by grants ANID-PFCHA/Mag\'ister Nacional/2020- 22201780 and
FONDECYT Regular 1190043, and by ANID via Fondecyt 1181180 and PIA AFB170001.}}
\author[2,3]{Jos\'e A. Soto\thanks{Email: \texttt{jsoto@dim.uchile.cl} Partially supported by ANID via Fondecyt 1181180 and PIA AFB170001.}}

\affil[1]{Institute of Engineering Sciences, Universidad de O'Higgins, Chile}
\affil[2]{Departamento de Ingenier\'ia Matem\'atica, Universidad de Chile, Chile}
\affil[3]{Centro de Modelamiento Matem\'atico, IRL 2807 CNRS, Universidad de Chile, Chile}

\date{}

\begin{document}

\bibliographystyle{plainurl}

\makeatletter
\newcounter{phase}[algorithm]
\newlength{\phaserulewidth}
\newcommand{\setphaserulewidth}{\setlength{\phaserulewidth}}
\newcommand{\phase}[1]{%
  \vspace{-2.25ex}
  \Statex\leavevmode\llap{\rule{\dimexpr\labelwidth+\labelsep}{\phaserulewidth}}\rule{\linewidth}{\phaserulewidth}
  \Statex\strut\refstepcounter{phase}\textit{Phase~\thephase~--~#1}
  \vspace{-2ex}\Statex\leavevmode\llap{\rule{\dimexpr\labelwidth+\labelsep}{\phaserulewidth}}\rule{\linewidth}{\phaserulewidth}}
\makeatother


\setphaserulewidth{.7pt}


\maketitle              
\thispagestyle{empty}
\begin{abstract}
Given a $k$-vertex-connected graph $G$ and a set $S$ of extra edges (links), the  goal of the $k$-vertex-connectivity augmentation problem is to find a set $S' \subseteq S$ of minimum size such that adding $S'$ to $G$ makes it $(k+1)$-vertex-connected. Unlike the edge-connectivity augmentation problem, research for the vertex-connectivity version has been sparse.

In this work we present the first polynomial time approximation algorithm that improves the known ratio of 2 for $2$-vertex-connectivity augmentation, for the case in which $G$ is a cycle.  This is the first step for attacking the more general problem of augmenting a $2$-connected graph.

Our algorithm is based on local search and attains an approximation ratio of $1.8704$. To derive it, we prove novel results on the structure of minimal solutions.

\end{abstract}

\newpage
\setcounter{page}{1}
\section{Introduction}

In the field of \textbf{Survivable Network Design}, one of the main goals is to construct robust networks (e.g.~transportation, telecommunication or electric power supply networks), meaning that they are resilient to failures of nodes or edges (for instance due to attacks or malfunctioning), at a low cost. A classical example is the \textbf{Connectivity Augmentation} problem, where we are given a $k$-connected\footnote{For $k\in \mathbb{N}$, a $k$-connected graph is a graph $G=(V,E)$ satisfying that, for any $V'\subseteq V$ with $|V|\leq k-1$, $G$ remains connected after the deletion of $V'$. If the definition holds when replacing nodes by edges, the graph is said to be $k$-edge-connected.} ($k$-edge-connected, resp.) undirected graph $G=(V,E)$ and a collection $S$ of extra edges (\emph{links}), and the goal is to select the smallest possible set of links $S'\subseteq S$ so that $G'=(V,E\cup S')$ becomes $(k+1)$-connected ($(k+1)$-edge-connected, resp.).

Most of the variants of Connectivity Augmentation are known to be NP-hard. In the case of augmenting the edge-connectivity of a graph, a classical result of Khuller and Vishkin ~\cite{Khuller1994} provides already a $2$-approximation algorithm for this problem, and crucial breakthroughs have been developed in recent years (see~\cite{Grandoni20,Cecchetto20}). However, results for the case of vertex-connectivity augmentation are more scarce and the developed techniques are arguably more involved.

In this work, we focus on the case of augmenting the vertex-connectivity of a given $2$-connected graph by one. For this case, Auletta et al.~\cite{Auletta99} provide a $2$-approximation for the problem even when links have different weights and the goal is to minimize the total weight of the solution, but no improvement on the problem has been developed ever since.

\subsection{Our results}

In order to make progress on the latter problem, we study the case where the input $2$-connected graph is a cycle, which we denote as the \emph{cycle vertex-connectivity augmentation} problem (\cvca). Even for this simpler case, the best-known result is the aforementioned $2$-approximation due to Auletta et al.~\cite{Auletta99}, while the case of augmenting the edge-connectivity of a cycle by one admits much better approximation guarantees~\cite{Cecchetto20}, even via simple iterative algorithms~\cite{Galvez19}. Similarly to the case of edge-connectivity, it is possible to prove that \cvca is APX-hard (see Section~\ref{hardness_of_approximation}), so our focus will be on the design of approximation algorithms.

The following theorem summarizes the central result of this work.

\begin{theorem}
\label{thm:local_search}
	There is a $1.8704$-approximation for the cycle vertex-connectivity augmentation problem. 
\end{theorem}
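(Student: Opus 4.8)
The plan is to design a local-search algorithm and to bound the cost of its local optima against a fixed optimal solution via a careful charging argument, the quantitative core of which is a factor-revealing analysis over a small set of structural configurations.

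First I would recast \cvca as a covering problem. A subset $S'\subseteq S$ makes the cycle $C$ $3$-connected if and only if, for every pair $\{u,v\}$ of cycle vertices whose removal splits $C$ into two nonempty arcs, $S'$ contains a link with exactly one endpoint in each arc; call such a pair \emph{dangerous} and say the link \emph{covers} it. In particular, a link is useful only when its endpoints are non-adjacent on $C$, and a link covers precisely the dangerous pairs crossed by the chord it induces. The algorithm then maintains a feasible $S'$ and repeatedly performs a $(p,q)$-\emph{swap}: whenever there exist $D\subseteq S'$ and $A\subseteq S\setminus S'$ with $|D|\le p$, $|A|<|D|$, and $(S'\setminus D)\cup A$ still feasible, replace $S'$ by $(S'\setminus D)\cup A$, where $p$ is a fixed constant tuned to the analysis. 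Starting from any polynomial-size feasible solution (e.g.\ the one given by the known $2$-approximation) and since $|S'|$ strictly decreases and stays polynomial, only polynomially many swaps occur, each checkable in polynomial time by enumerating the $O(|S|^{p})$ candidate pairs $(D,A)$ and testing feasibility. The output $\ALG$ is a \emph{local optimum}; in particular it is a \emph{minimal} feasible solution, meaning that removing any single link destroys feasibility.

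Next comes the structural analysis of minimal feasible solutions, which is where the ``novel results'' of the abstract enter. The basic observation is that in a minimal $S'$ every link $\ell$ has a \emph{private witness}: a dangerous pair $X_\ell$ covered by $\ell$ and by no other link of $S'$. I then expect to show that these witnesses force the chords of a minimal solution into $O(1)$-complexity pieces: concretely, that a minimal solution decomposes along the cycle into internally disjoint ``blocks'', each being a minimal solution of an independent sub-instance and belonging to one of a finite list of structural types (short ``paths'' of pairwise-crossing chords, ``star''-like bundles, and a few variants), with no link buried under others. This bounded-complexity/locality statement is exactly what makes a constant-size local move able to repair any inefficiency. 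Given $\ALG$ and an optimal solution $\OPT$, I would then charge every link $a\in\ALG$ to the link(s) of $\OPT$ that cover its private witness $X_a$, distributing one unit per link of $\ALG$, and claim that each $o\in\OPT$ receives at most $1.8704$ units. If not, then within the bounded region around $o$ described by the structural theorem the links of $\ALG$ are too dense relative to $\OPT$, and one can exhibit $D\subseteq\ALG$ with $|D|\le p$ whose removal leaves uncovered only dangerous pairs that are re-coverable by $o$ together with strictly fewer than $|D|$ other $\OPT$-links in that region, contradicting local optimality. Converting ``too dense'' into the exact threshold $1.8704$ means enumerating the finitely many local configurations permitted by the structural theorem, encoding the constraints (numbers of $\ALG$- and $\OPT$-links of each type in a region and which witnesses they cover) as a factor-revealing linear program, and reading off its optimum; summing the per-$\OPT$-link bounds gives $|\ALG|\le 1.8704\,|\OPT|$.

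The main obstacle is concentrated in the last two steps and is two-fold. First, one must prove a structural theorem about minimal solutions that is simultaneously strong enough — delivering genuine locality, so that every over-charging situation is certified by a constant-size improving swap — and correct, bearing in mind that the vertex-connectivity condition is genuinely more delicate than the edge-connectivity one (for cycle-plus-chords, $3$-edge-connectivity together with minimum degree $3$ does \emph{not} imply $3$-vertex-connectivity, so the \ceca structure cannot simply be imported). Second, one must pin down the constant itself, i.e.\ identify the worst-case recurring gadget and verify through the factor-revealing LP that no configuration surviving in a local optimum does worse than ratio $1.8704$ — a computation whose fragility is already visible in the $1.8703\to 1.8704$ correction mentioned in the title footnote.
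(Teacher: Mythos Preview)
Your proposal is essentially a template rather than a proof, and the template does not match what the paper actually does. Two concrete problems.

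First, the ``structural theorem'' you rely on --- that a minimal feasible solution decomposes along the cycle into $O(1)$-complexity blocks of finitely many types, so that any over-charging is witnessed by a constant-size swap --- is never proved, and you give no argument for why it should hold. Minimal solutions for \cvca can be arbitrary spanning forests in the circle graph (Lemma~\ref{thm:minimal_solution}), and there is no a~priori reason the interaction between $\ALG$ and $\OPT$ localizes in the way your charging scheme needs. Without this, the swap-based local search and the factor-revealing LP over ``local configurations'' have no foundation, and the constant $1.8704$ is never derived; you explicitly concede both points in your ``main obstacle'' paragraph.

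Second, and more importantly, the paper's algorithm is \emph{not} a swap-based local search on feasible solutions at all. It is a two-phase bottom-up construction: Phase~1 greedily grows a (generally infeasible) singleton-free set $F$ of links, adding at most $N_{\max}$ links at a time so long as the marginal gain in the utility $U(F)=-|F|+\sum_{J}(|V(J)|-3)$ (summed over the circle components $J$ of $F$) is at least $(1-\alpha)$ times the number of newly covered vertices; Phase~2 adds a minimal completion. The analysis combines the bound $|\ALG|\le n-3-U(F)$ with lower bounds on $|\OPT|$ obtained from the $(\alpha,N_{\max})$-criticality of $F$, encoded in a global linear program whose variables count links of $\OPT$ by how their endpoints sit relative to $V(F)$ and a maximal matching in $\OPT\setminus F$. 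The final ratio $2-2\int_{1/2}^{1}f_\alpha\,d\alpha\le 1.8704$ comes from running Phase~1 with a schedule of increasing $\alpha$'s and summing the resulting telescoping gains. None of the key ideas --- circle components as the unit of progress, the utility function, criticality, the specific LP, or the $\alpha$-schedule --- appear in your proposal.
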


Our algorithm consists basically of two phases: a first phase where we exhaust \emph{locally efficient} choices of links to be added to the instance, and then a second phase to complete the solution constructed so far. Roughly speaking, the cost of the partial solution constructed in the first phase can be properly bounded but does not ensure feasibility, issue that the second phase addresses. The number of links required in this second phase may be large, but in that case, the fact that no locally efficient set of links can be added to the solution implies that any feasible solution must include a considerable amount of links.

Our algorithm is similar in spirit to the one presented by Gálvez et al.~\cite{Galvez19} for the case of augmenting the edge-connectivity of a cycle; however, much more sophisticated tools are required for its analysis. In particular, the results from Gálvez et al.~heavily rely on the notion of \emph{contracting} links (i.e.~merging the endpoints of a link into a new super-node), since in the edge-connectivity case a solution is feasible if and only if iteratively contracting all its links turns the original cycle into a single super-node. This property, unfortunately, does not extend to the case of vertex-connectivity (see Figure~\ref{ejemplospasoslocales2} for an example of a set of links whose addition makes the graph $3$-edge-connected but not $3$-connected), and consequently, we develop alternative methods to verify the feasibility of a solution. By properly adjusting the parameters involved and using linear programming formulations to obtain refined lower bounds for the optimal solution, we are able to prove the claimed approximation guarantees. To the best of our knowledge, this is the first improvement for a non-trivial special case of the vertex-connectivity augmentation problem on $2$-vertex-connected graphs since Auletta et al.'s results.
\subsection{Related Results}

As mentioned before, the edge-connectivity augmentation problem has received considerable attention. A main reason is that it can be reduced to augmenting the edge-connectivity of either a tree or a cactus\footnote{A \emph{cactus} is a $2$-edge-connected graph where every edge belongs exactly to one cycle of the graph.} by one~\cite{Dinitz76}. The first case is known in the literature as the \textbf{Tree Augmentation} problem and many results have been developed for it. The problem is known to be APX-hard~\cite{Frederickson81,Cheriyan99,Kortsarz04}, and several better-than-2 approximation algorithms have been developed using a wide range of techniques~\cite{Nagamochi03,Even09,Cheriyan08,Kortsarz16,Adjiashvili19,Fiorini18,Cheriyan18,Kortsarz18,Grandoni18}, being $1.393$ the current best approximation ratio achieved by Cecchetto et al.~\cite{Cecchetto20}. Tree Augmentation has also been studied in the framework of Fixed-Parameter Tractability~\cite{Marx15,Basavaraju14} and in presence of general edge weights. In the latter case, for a long time the best approximation ratio was $2$~\cite{Frederickson81,Khuller93,Goemans94,Khuller1994}, but recently this approximation ratio has been improved in a major breakthrough by Traub and Zenklusen~\cite{TZ21}; prior to that, such an improvement had been achieved only for restricted cases~\cite{Cohen13,Adjiashvili19,Fiorini18,Grandoni18,Nutov17,Cecchetto20}.

The second case is known as the \textbf{Cactus Augmentation} problem, which is also APX-hard even if the cactus is a cycle~\cite{Galvez19}. For a long time the best-known approximation guarantee was $2$ due to Khuller and Vishkin ~\cite{Khuller1994}. Recently, Byrka et al.~\cite{Grandoni20} broke this barrier providing a $1.91$-approximation, which was further improved to $1.393$ by Cecchetto et al.~\cite{Cecchetto20}; this is consequently the current best approximation ratio for the general edge-connectivity augmentation problem. 

For the vertex-connectivity augmentation problem, it was first proved by Végh~\cite{Vegh11} that if all the possible chords are available as links, the problem can be solved in polynomial time for any $k$, but if the set of links is pre-specified then the problem becomes APX-hard~\cite{Kortsarz04}. It is worth noting that this result does not directly hold for the special case of cycles. For the case of vertex-connectivity augmentation of a $2$-connected graph, Auletta et al.~\cite{Auletta99} provide a $2$-approximation, while for the case of $k$-connected graphs the current best approximation ratio is $4 + \varepsilon$ for any fixed $k$, due to Nutov~\cite{Nutov20}. Recently, for the case of vertex-connectivity augmentation of $1$-connected graphs, Nutov developed a $1.91$-approximation~\cite{Nutov20-2node}.

\paragraph{\textbf{Organization of the paper.}} Section~\ref{sec:Prelim} provides some preliminary definitions and useful tools. In Section~\ref{sec:LocalSearch} we describe our main algorithmic approach and formally prove that the approximation ratio of our algorithm is strictly better than $2$. Then in Section~\ref{sec:improving} we describe how to enhance this algorithm and to refine the analysis so as to obtain the claimed approximation ratio and finally in Section~\ref{hardness_of_approximation} we prove that the problem is APX-hard. Some details and technical proofs are deferred to the Appendix.

\section{Preliminaries}\label{sec:Prelim}
In this work, we only consider simple graphs of $n\geq 4$ vertices. We let $C_n$ denote the cycle on the set $[n]=\{1, 2, \dots, n\}$ of vertices. Due to the cyclic aspect of $C_n$, sometimes we may use $n+1$ and $0$ to refer to vertices $1$ and $n$ respectively. A \textbf{chord} is an edge between non (cyclically) consecutive vertices of $C_n$. Each chord $ab$ divides the cycle into two nonempty \emph{sides} $V_1$ and $V_2$ so that $V(C_n) = V_1 \dot\cup V_2 \dot\cup \{a,b\}$. 
If $L$ is a set of chords, $V(L)$ will denote the set of vertices incident to at least one chord in $L$.

$C_n$ is $2$-connected, and consequently any graph obtained by adding chords to $C_n$ is $2$-connected as well. A \emph{separating pair} of a $2$-connected graph is a pair of vertices such that, if we remove them, the graph becomes disconnected. Thus, $3$-connected graphs are exactly those $2$-connected graphs without separating pairs. In what follows, a set $L$ of chords is said to $3$-connect the cycle if $C_n \cup L$ is 3-connected.

\begin{definition}
	Given a set of chords $S$ of $C_n$ such that $C_n \cup S$ is $3$-connected, the goal of the \textbf{cycle vertex-connectivity augmentation (\cvca) problem}  is to find a set $S'\subseteq S$ of minimum size that $3$-connects the cycle. We say that $(C_n, S)$ is an instance of \cvca.
\end{definition}

We reserve the word \emph{link} to refer to chords in $S$. Sets of links that $3$-connect the cycle are feasible solutions for \cvca.  
We start by introducing some required concepts to characterize these feasible solutions.

\begin{definition}
	Let $a$, $b$, $c$, $d$ be  distinct vertices in $C_n$ satisfying $a < b$ and $c < d$ (in the standard ordering of $[n]$). We say that a chord $ab$ \textbf{crosses} $cd$ if $c < a < d < b$ or $a < c < b < d$. 
\end{definition}

Equivalently, $ab$ and $cd$ cross if and only if $c$ and $d$ are on different sides of $ab$ (chords sharing a vertex do not cross). This definition is actually natural: If we draw $C_n$ in the plane as a circle, and $ab$ and $cd$ as straight open-ended line segments, then $ab$ crosses $cd$ if and only if the associated segments intersect.  

\begin{lemma} \label{Every_pair_must_be_crossed}
	$S'\subseteq S$ is feasible for \cvca if and only if every chord of $C_n$ is crossed by some link of $S'$. As a consequence, if $S'$ is a feasible solution, then every vertex of the cycle must be incident to some link and $|S'|\ge n/2$.
\end{lemma}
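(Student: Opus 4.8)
The plan is to exploit the observation, made just before the statement, that $C_n\cup S'$ is always $2$-connected, so that $S'$ is feasible for \cvca if and only if $C_n\cup S'$ has \emph{no separating pair}; the whole argument then amounts to translating ``no separating pair'' into the crossing condition. I would prove the ``only if'' direction by contraposition. Suppose some chord $ab$ of $C_n$ is crossed by no link of $S'$, and let $V_1,V_2$ be its two (nonempty) sides. Any edge joining $V_1$ to $V_2$ that avoids both $a$ and $b$ is, by definition, a chord crossing $ab$; since $S'$ contains no such link, and no edge of $C_n$ joins $V_1$ to $V_2$ once $a$ and $b$ are deleted, the graph $C_n\cup S'-\{a,b\}$ has $V_1$ and $V_2$ in distinct connected components. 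Hence $\{a,b\}$ is a separating pair and $S'$ is infeasible.

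For the ``if'' direction, assume every chord of $C_n$ is crossed by some link of $S'$ and suppose, for contradiction, that $\{u,v\}$ is a separating pair of $C_n\cup S'$. If $u$ and $v$ are cyclically consecutive, then $C_n-\{u,v\}$ is a connected path on $n-2\ge 2$ vertices, so $C_n\cup S'-\{u,v\}$ is connected, a contradiction. Otherwise $uv$ is a chord of $C_n$, and its two sides $V_1,V_2$ span, in $C_n-\{u,v\}$, two disjoint paths $P_1,P_2$. By hypothesis some link $cd\in S'$ crosses $uv$, say with $c\in V_1$ and $d\in V_2$; since $c,d\notin\{u,v\}$, the edge $cd$ survives in $C_n\cup S'-\{u,v\}$ and joins $P_1$ to $P_2$, so that graph is connected (every surviving vertex lies on $P_1$ or $P_2$). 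This contradicts $\{u,v\}$ being a separating pair, so $C_n\cup S'$ has no separating pair and is $3$-connected.

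For the consequence, I would fix a vertex $v$ and consider the chord $(v-1)(v+1)$ between its two cycle-neighbours, which is a genuine chord because $n\ge 4$. Its two sides are $\{v\}$ and $V(C_n)\setminus\{v-1,v,v+1\}$, so any link crossing it must be incident to $v$; feasibility of $S'$ therefore forces every vertex to be covered by some link, and since each link covers exactly two vertices we obtain $|S'|\ge n/2$. The only delicate point I anticipate is in the ``if'' direction: one must treat separately the case of a cyclically consecutive pair $\{u,v\}$, for which no associated chord exists, from the non-consecutive case, and double-check the boundary situation $n=4$; everything else is a routine dictionary between separating pairs and uncrossed chords.
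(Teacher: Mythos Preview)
Your proof is correct and follows essentially the same approach as the paper: both reduce feasibility to the absence of separating pairs and identify the uncrossed chords as exactly the separating pairs, and both derive the edge-cover consequence from the chord $(v-1)(v+1)$. The paper compresses the biconditional into a single sentence, whereas you spell out both directions (including the easy consecutive-pair case) explicitly; this added detail is welcome but not a different argument.
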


\begin{proof}
    The first claim follows since a chord of $C_n$ is a separating pair of $C_n\cup S'$ if and only if it is not crossed by any link of $S'$. For the second claim, notice that for each vertex $j$ there must be a link crossing the chord defined by vertices $(j-1)$ and $(j+1)$; since this link is incident to $j$, $S'$ must be an edge-cover and the bound follows. \end{proof}

Since any set of links containing a feasible solution is feasible, it makes sense to study minimal feasible solutions. A theorem by Mader \cite{Mader72} states that every cycle $Z$ of a $k$-connected graph $G$ that is formed by critical-edges (i.e., edges $e$ such that $G-e$ is not $k$-connected) contains a vertex whose degree in $G$ is $k$. This can be translated into the following useful property.

\begin{lemma}
\label{thm:minimal_solution}
	Every minimal solution $S'$ for \cvca is acyclic and thus has size at most $n-1$. In particular, every minimal set that $3$-connects the cycle defines a $2$-approximate solution for this problem.
\end{lemma}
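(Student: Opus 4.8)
The plan is to invoke Mader's theorem in the form stated just above. I would first argue that every link of a minimal feasible solution $S'$ is a \emph{critical edge} of the graph $G = C_n \cup S'$: removing any link $\ell \in S'$ yields $C_n \cup (S' \setminus \{\ell\})$, which is not $3$-connected by minimality of $S'$, so $\ell$ is indeed critical. The cycle edges of $C_n$ itself are, by contrast, typically not critical, which is why the argument will focus on cycles formed entirely out of links.

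The key step is then a proof by contradiction: suppose $S'$ contains a cycle $Z$ (as a subgraph of $G$, all of whose edges lie in $S'$). All edges of $Z$ are critical edges of $G$, so Mader's theorem applies and guarantees a vertex $v$ on $Z$ with $\deg_G(v) = 3$. But every vertex of $C_n$ already has degree $2$ from the two incident cycle edges, and $v$ lies on the cycle $Z \subseteq S'$, so $v$ is incident to at least two links; hence $\deg_G(v) \ge 2 + 2 = 4$, a contradiction. Therefore $S'$ is acyclic as a set of chords, i.e.\ the graph $([n], S')$ is a forest, which immediately gives $|S'| \le n - 1$.

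Finally, combining this with the lower bound $|S'| \ge n/2$ from Lemma~\ref{Every_pair_must_be_crossed}: any minimal feasible solution satisfies $|S'| \le n-1 \le 2 \cdot (n/2) \le 2\,\OPT$ (using $n \ge 4$ so that $n - 1 \le 2\lfloor n/2 \rfloor$ when $n$ is even, and $n-1 = 2\lfloor n/2\rfloor$ when $n$ is odd; in either case $n-1 \le n \le 2\cdot(n/2) \le 2\,\OPT$ suffices). Hence any algorithm returning a minimal feasible solution is a $2$-approximation.

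The main obstacle is making the translation from Mader's theorem precise: I must be careful that $Z$ is genuinely a cycle in the graph-theoretic sense whose edge set is contained in $S'$ (not merely a closed walk, and not using cycle edges of $C_n$), and that the degree bookkeeping correctly separates the contribution of the two $C_n$-edges at $v$ from the link-edges. A secondary subtlety is that Mader's theorem is stated for $k$-connected graphs with $k = 2$ here, and one should check the hypotheses (the graph $C_n \cup S'$ is $2$-connected, which holds since it contains $C_n$) apply to the cycle $Z$ of critical edges; everything else is routine.
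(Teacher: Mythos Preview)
Your proof is correct and follows essentially the same route as the paper: both invoke Mader's theorem on a hypothetical cycle of critical links to derive a degree contradiction, then combine $|S'|\le n-1$ with the lower bound $|\OPT|\ge n/2$ from Lemma~\ref{Every_pair_must_be_crossed}. One small slip in your closing paragraph: you write ``$k=2$'' and check $2$-connectivity, but since $S'$ is feasible the graph $C_n\cup S'$ is $3$-connected, so Mader is applied with $k=3$---which is precisely what yields the degree-$3$ vertex you correctly use in the body of the argument.
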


\begin{proof}
    Let $C'$ be a cycle in a minimal solution $S'$. As all the vertices incident to $C'$ have degree equal to four in $C_n\cup S'$, Mader's theorem above implies that at least one link of $C'$ is not critical, contradicting the minimality of $S'$. 
    Consequently $S'$ is acyclic, and hence its size is at most $n-1$. Thanks to Lemma~\ref{Every_pair_must_be_crossed}, the optimum solution has size at least $n/2\geq |S'|/2$ and thus $S'$ is a $2$-approximate solution for our problem.
\end{proof}

Lemma~\ref{thm:minimal_solution} suggests a simple greedy $2$-approximation algorithm: Initialize $S'$ as $S$, and as long as there exists a link $e$ such that $S'-e$ is feasible, remove $e$ from $S'$. The resulting set $S'$ is a minimal solution and hence a $2$-approximate solution.

\subsection{Circle components}\label{sec:circle_comp}

The previous $2$-approximation was built in a top-down way, i.e., deleting links from a feasible solution. In order to obtain a better approximation ratio, we will instead build solutions in a bottom-up way. For that purpose, we use the concept of \emph{circle components} (see Figure~\ref{ejemplospasoslocales2}). 

\begin{definition}{\cite{Spinrad94}} The \textbf{circle graph} defined by a set of chords $L$ is the graph with vertex set $L$, where two chords are adjacent if and only if they cross. 
    
\end{definition}

\begin{definition}
    A \textbf{circle component} is a set of links $L \subseteq S$ such that the circle graph defined by $L$ is connected. 
    A \textbf{link path} from vertex $a$ to vertex $b$ is a sequence $e_1e_2\dots e_k$ of links such that $a$ is an endpoint of $e_1$, $b$ is an endpoint of $e_k$ and each link crosses the next one in the sequence. 
\end{definition}

\begin{figure}[t]
	\begin{center}
		\resizebox{.7\textwidth}{!}{\includegraphics[scale=.85]{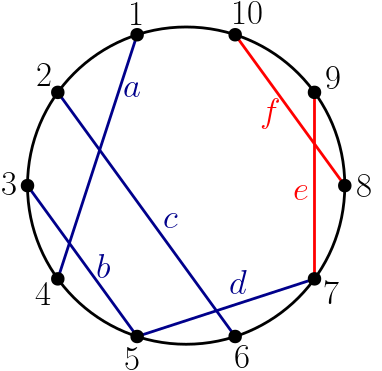}\hspace{80pt}\includegraphics[scale=.99]{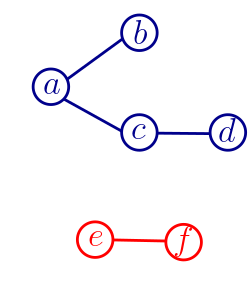}}		
	\end{center}
	\caption{\textbf{Left:} Two circle components in blue and red. With respect to the blue component, vertices $2,3,4,5$ and $6$ are internal, vertices $1$ and $7$ are border and the remaining vertices are external; with respect to the red component, vertices $8$ and $9$ are internal while vertices $7$ and $10$ are border. The sequence $bac$ is an example of a link path between $5$ and $2$.
	\textbf{Right:} The associated circle graph. Notice that it is not connected as, for instance, vertices $1$ and $7$ form a separating pair.}
	\label{ejemplospasoslocales2}
\end{figure}

Observe that if $L$ is a circle component, then for every $a, b \in V(L)$ there is a link path in $L$ from $a$ to $b$ (see Figure~\ref{ejemplospasoslocales2}). The next technical definition provides further useful concepts we will use along this work.

\begin{definition}
    Let $L$ be a circle component and $V(L)=\{v_1, v_2,\cdots, v_{|V(L)|}\}$ such that $v_1 < v_2 < \cdots < v_{|V(L)|}$ in the cycle. We say that $v_i,v_{i+1}$ (indices considered modulo $|V(L)|$) are consecutive in $V(L)$.
    
   Chords between non-consecutive vertices of $V(L)$ are called \textbf{internal chords of $L$}. Chords between consecutive vertices in $V(L)$ are called  \textbf{border chords}. Vertices of $V(L)$ incident to some border chord are called \textbf{border vertices} and the rest of the vertices of $V(L)$ are called \textbf{internal vertices}. The vertices outside $V(L)$ are called \textbf{external vertices}.
\end{definition}

Notice that an internal chord of $L$ may connect two internal vertices, one internal vertex and a border vertex or even two border vertices (if they are incident to different border chords). We also say that $L$ crosses a chord $ab$ of $C_n$ if some link  $e\in L$ crosses $ab$. 

The following lemma contains the main characterization of feasibility we will use along this work.

\begin{lemma} \label{circle-component-characterization}
Let $S'\subseteq S$ be an edge-cover of $C_n$. The set $S'$ is a feasible solution for \cvca if and only if $S'$ is a circle component.
\end{lemma}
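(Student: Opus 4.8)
The plan is to first rephrase feasibility via Lemma~\ref{Every_pair_must_be_crossed}: since $S'$ is given to be an edge-cover, it is feasible for \cvca exactly when every chord of $C_n$ is crossed by some link of $S'$. So it suffices to prove, for an edge-cover $S'$, that the circle graph of $S'$ is connected if and only if every chord of $C_n$ is crossed by a link of $S'$. I handle the two implications separately; the reverse one is where essentially all the work lies.

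For ``connected $\Rightarrow$ every chord crossed'' I argue by contradiction. Suppose a chord $ab$ is crossed by no link of $S'$, with (nonempty) sides $V_1,V_2$. If $ab\in S'$, then $ab$ is a vertex of the circle graph with no neighbours, so the graph is disconnected (it has at least $n/2\ge 2$ vertices by Lemma~\ref{Every_pair_must_be_crossed}), a contradiction. If $ab\notin S'$, then every link lies entirely on one side (a link with an endpoint in $V_1$ and one in $V_2$ would cross $ab$), so $S'$ splits into the links meeting $V_1$ and those meeting $V_2$, both classes nonempty since $S'$ is an edge-cover. Drawing $C_n$ as a circle and $ab$ as a chord, these two classes live in the two closed half-disks cut out by $ab$, which meet only along the segment $ab$; hence a link of one class can meet a link of the other only at $a$ or $b$, so no two such links cross and the circle graph is disconnected --- again a contradiction.

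For the reverse implication I prove the contrapositive: if the circle graph of the edge-cover $S'$ is disconnected, I exhibit an uncrossed chord. Let $L_1$ be one connected component of the circle graph and $L_2=S'\setminus L_1\neq\emptyset$, so no link of $L_1$ crosses a link of $L_2$. Write $V(L_1)=\{v_1<\dots<v_m\}$ ($m\ge 2$), and for each $i$ let $\bar A_i$ be the closed arc between the consecutive vertices $v_i,v_{i+1}$ of $V(L_1)$ containing no other vertex of $V(L_1)$. Two facts drive the argument. \textbf{(i)} No link of $L_1$ crosses a chord both of whose endpoints lie in a single $\bar A_i$, because one of the two sides of such a chord lies inside the interior of $\bar A_i$ and hence contains no vertex of $V(L_1)$, whereas a crossing link of $L_1$ would need an endpoint strictly on that side. \textbf{(ii)} If a chord $g\notin L_1$ has vertices of $V(L_1)$ strictly on both of its sides, then some link of $L_1$ crosses $g$: take a link path in $L_1$ from a vertex of $V(L_1)$ on one side of $g$ to one on the other; assuming no link of the path crosses $g$ (else we are done), each link of the path has both endpoints in one of the two closed arcs cut out by $g$, so label it by that arc; the first link is labelled by the arc strictly containing its starting endpoint and the last by the other arc, so two consecutive links carry opposite labels; but a chord with both endpoints in one of the complementary closed arcs cut out by $g$ and a chord with both endpoints in the other cannot cross, contradicting that consecutive links of a link path cross. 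The delicate point in (ii) is the bookkeeping for links sharing an endpoint with $g$ --- exactly the case where ``crossing'' fails only on a technicality --- which is why the path argument, not a naive topological one, is needed.

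Fact (ii) immediately gives that every link $f\in L_2$ has both endpoints in a common $\bar A_i$: otherwise a short case analysis on the positions of $f$'s endpoints relative to $V(L_1)$ shows that $V(L_1)$ has vertices strictly on both sides of $f$, whence (ii) produces a link of $L_1$ crossing $f\in L_2$, impossible. Now fix any $f\in L_2$ and let $\bar A_{i^*}$ be the gap containing it; since $f$ is a genuine chord, $\bar A_{i^*}$ contains some vertex $u$ strictly between $v_{i^*}$ and $v_{i^*+1}$, and one checks that $g^*:=v_{i^*}v_{i^*+1}$ is itself a chord of $C_n$ (both of its sides are nonempty --- the other side contains the remaining vertices of $V(L_1)$, or, when $m=2$, it is nonempty because $L_1$'s single link $v_1v_2$ is a genuine chord). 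I claim $g^*$ is uncrossed: take any link $e$ of $S'$ with an endpoint $u$ strictly inside $\bar A_{i^*}$ (equivalently, on the $A_{i^*}$-side of $g^*$); if $e$ had its other endpoint off $\bar A_{i^*}$ it would cross $g^*$, so by (i) $e\in L_2$, and then by the previous paragraph $e$ is confined to the unique gap containing $u$, namely $\bar A_{i^*}$ --- contradiction. Hence no link of $S'$ crosses $g^*$, so $S'$ is infeasible, completing the contrapositive. I expect the only genuinely fiddly part of the whole proof to be tracking the shared-endpoint corner cases (in the partition step of the first implication, in the link-path fact (ii), and in the ``confined to one gap'' deduction); none is deep, but each needs care.
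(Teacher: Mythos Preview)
Your proof is correct and follows essentially the same approach as the paper: the paper proves this via the auxiliary Lemmas~\ref{extremes_in_chord_path_are_crossed_by_P} and~\ref{general_crossing}, and your facts (i) and (ii) are exactly the ``border chord'' and ``internal chord'' cases of Lemma~\ref{general_crossing}, with the link-path argument in (ii) being Lemma~\ref{extremes_in_chord_path_are_crossed_by_P} inlined. The only noteworthy difference is that your forward direction (an uncrossed chord splits $S'$ into two non-crossing halves) avoids link paths altogether, whereas the paper routes that direction through Lemma~\ref{general_crossing} as well; this is a mild simplification but not a different idea.
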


We need first the following technical lemmas.

\begin{lemma} \label{extremes_in_chord_path_are_crossed_by_P}
Let $ab$ and $cd$ be chords of $C_n$ and $W=e_1\dots e_k$ be a link path from $a$ to $b$. If $ab$ crosses  $cd$ then there is some chord in $W$ that crosses $cd$.
\end{lemma}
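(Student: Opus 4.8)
The plan is to argue by contradiction using the two \emph{sides} into which $cd$ splits the cycle. Write $V(C_n)=V_1 \dot\cup V_2 \dot\cup \{c,d\}$ for this partition, and let $A_1=V_1\cup\{c,d\}$ and $A_2=V_2\cup\{c,d\}$ be the two closed arcs of the circle determined by $c$ and $d$, so $A_1\cap A_2=\{c,d\}$. Since $ab$ crosses $cd$, the vertices $a,b,c,d$ are all distinct and $a,b$ lie on different sides of $cd$; say $a\in V_1$ and $b\in V_2$. Call a chord \emph{type-$1$} if both of its endpoints belong to $A_1$ and \emph{type-$2$} if both belong to $A_2$; a chord has both types precisely when it equals $cd$.

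First I would dispose of the degenerate cases. If $k=1$, then $e_1$ has both $a$ and $b$ as endpoints, so $e_1=ab$, which crosses $cd$ by hypothesis. If $cd$ occurs in $W$, say $e_i=cd$, then — since $a$ is an endpoint of $e_1$ and $b$ of $e_k$ while $a,b\notin\{c,d\}$ — necessarily $1<i<k$, and $e_{i-1}$ crosses $e_i=cd$, which proves the lemma. So I may assume $k\ge 2$ and that no $e_i$ equals $cd$. If some $e_i$ crosses $cd$ there is again nothing to prove, so suppose for contradiction that no chord of $W$ crosses $cd$. Then no $e_i$ has one endpoint in $V_1$ and the other in $V_2$ (otherwise it would cross $cd$), hence both endpoints of $e_i$ lie in $A_1$ or both lie in $A_2$; and since $e_i\neq cd$, exactly one of these holds, so $e_i$ has a well-defined type. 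Now the endpoint $a\in V_1$ of $e_1$ forces $e_1$ to be type-$1$, and the endpoint $b\in V_2$ of $e_k$ forces $e_k$ to be type-$2$, so there is an index $j$ for which $e_j$ is type-$1$ and $e_{j+1}$ is type-$2$.

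The core of the proof is then the claim that \emph{a type-$1$ chord and a type-$2$ chord can never cross}; applied to $e_j$ and $e_{j+1}$, this contradicts the fact that consecutive chords of a link path cross. To prove the claim, take a type-$1$ chord $e\neq cd$: its two endpoints split the circle into two open arcs, one of which lies inside $A_1$, and in fact that arc lies inside the \emph{open} arc $V_1$ (this is exactly where $e\neq cd$ is needed, and where one must carefully treat the subcase that an endpoint of $e$ is $c$ or $d$). Now if a type-$2$ chord $f$ crossed $e$, then $f$ would share no endpoint with $e$, so each endpoint of $f$ would lie strictly in one of those two open arcs; but the endpoints of $f$ lie in $A_2$, which is disjoint from $V_1$, so both endpoints of $f$ lie in the \emph{other} open arc — i.e.\ on the same side of the chord $e$ — and therefore $e$ and $f$ do not cross, a contradiction.

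I expect the only real obstacle to be the bookkeeping around the two distinguished vertices $c$ and $d$: a link of $W$ may have an endpoint exactly at $c$ or $d$, and $cd$ may itself be a link, so the statement ``each link lies entirely on one side of $cd$'' has to be phrased using the closed arcs $A_1,A_2$ in order to make the type unambiguous, whereas the non-crossing claim genuinely needs the open arc $V_1$. Reconciling these two viewpoints, and checking the $c$- or $d$-endpoint subcases in the non-crossing claim, is the delicate point; everything else rests only on the elementary characterization of crossing chords (four distinct endpoints interleaving cyclically, equivalently endpoints on opposite sides) recorded in the preliminaries.
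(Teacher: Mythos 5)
Your proof is correct and rests on the same key observation as the paper's: a chord with both endpoints in the closed arc $V_1\cup\{c,d\}$ has one of its two sides fully contained in the open arc $V_1$. The paper argues directly (take the first link $e_i$ in $W$ with an endpoint in $V_2$, note $e_{i-1}$ is ``type-1'' in your terminology, and conclude $e_i$ has an endpoint in $V_1$ so it crosses $cd$), while you argue by contradiction via the type-1/type-2 dichotomy; this is the same argument up to contraposition of the final step.
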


\begin{proof}
Let $V_1$ and $V_2$ be the sets of vertices of $C_n$ on each side of the chord $cd$, so that $a\in V_1$ and $b\in V_2$. Observe also that $V_1\cup \{c,d\}$ is an interval in the cycle with extremes $c$ and $d$.
Let $e_i$ be the first chord in $W$ with at least one endpoint in $V_2$ (this link exists since $e_k$ has one endpoint in $V_2$). If $i=1$ then $e_1$ crosses $cd$, so assume $i\geq 2$. Since the endpoints of $e_{i-1}$ are in $V_1\cup \{c,d\}$ which is an interval, one of the sides of $e_{i-1}$ is fully contained in $V_1$. Therefore, as $e_{i-1}$ crosses $e_i$, one endpoint of $e_i$ is also in $V_1$. We conclude that $e_i$ has one endpoint in each side of $cd$ and so $e_i$ crosses $cd$.
\end{proof}

\begin{lemma} \label{general_crossing} Let $L$ be a circle component and $ab$ a chord. $L$ crosses $ab$ if and only if (1) $ab$ is an internal chord of $L$ or (2) $ab$ crosses some border chord $cd$ of $L$.
\end{lemma}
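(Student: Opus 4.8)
The plan is to prove the two implications separately, reducing the work as much as possible to Lemma~\ref{extremes_in_chord_path_are_crossed_by_P} together with the already-noted fact that in a circle component any two vertices of $V(L)$ are joined by a link path in $L$.

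For the ``if'' direction, suppose first that (2) holds, i.e.\ $ab$ crosses a border chord $cd$ of $L$. Since $c,d\in V(L)$, there is a link path $W$ in $L$ from $c$ to $d$; as $cd$ crosses $ab$, Lemma~\ref{extremes_in_chord_path_are_crossed_by_P} applied to the chords $cd$ and $ab$ produces a link of $W\subseteq L$ crossing $ab$. Now suppose (1) holds, i.e.\ $ab$ is an internal chord of $L$, so its endpoints are non-consecutive in $V(L)$ and hence each of the two arcs of $C_n$ determined by $ab$ contains a vertex of $V(L)$ in its interior. Pick $v$ in one such arc and $w$ in the other. Then $v,w\notin\{a,b\}$, the pair $vw$ is a genuine chord of $C_n$ (each arc between $v$ and $w$ contains one of $a,b$, so they are non-consecutive), and $vw$ crosses $ab$ because $v$ and $w$ sit on opposite sides of $ab$. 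Taking a link path $W$ in $L$ from $v$ to $w$ and invoking Lemma~\ref{extremes_in_chord_path_are_crossed_by_P} with the chords $vw$ and $ab$, we again get a link of $L$ crossing $ab$. This settles the ``if'' direction.

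For the ``only if'' direction, assume some link $e=xy\in L$ crosses $ab$; then $x,y\in V(L)$, they lie on opposite sides of $ab$, and $\{x,y\}\cap\{a,b\}=\emptyset$ (crossing chords share no endpoint). I first rule out that $ab$ is a border chord of $L$: if it were, all vertices of $V(L)\setminus\{a,b\}$ would lie in the interior of a single arc of $ab$, so $x$ and $y$ could not be on opposite sides of $ab$, a contradiction. Consequently, if both $a$ and $b$ lie in $V(L)$, then $ab$ is a chord between non-consecutive vertices of $V(L)$, i.e.\ an internal chord of $L$, and (1) holds. Otherwise some endpoint, say $a$, is external. Let $c$ be the first vertex of $V(L)$ met going clockwise from $a$ and $d$ the first one met going counter-clockwise. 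Since $V(L)$ has vertices on both sides of $ab$ (namely $x$ and $y$), one of $c,d$ lies in each side, so $c$ and $d$ are on opposite sides of $ab$; the arc from $d$ to $c$ through $a$ contains no vertex of $V(L)$, so $c$ and $d$ are consecutive in $V(L)$; and $c,d\notin\{a,b\}$. Hence $cd$ is a border chord of $L$ crossing $ab$, and (2) holds.

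I expect the main obstacle to be the bookkeeping around the ``boundary'' vertices $a$ and $b$: one must verify that the auxiliary chords $vw$ and $cd$ really are chords of $C_n$ and really \emph{cross} $ab$ rather than merely sharing an endpoint with it, and that the vertices $c,d$ extracted from the ``gap'' of $V(L)$ containing $a$ are indeed consecutive in $V(L)$. The conceptual core — in each case exhibiting a chord with both endpoints in $V(L)$ lying on opposite sides of $ab$, and then feeding a link path between those endpoints into Lemma~\ref{extremes_in_chord_path_are_crossed_by_P} — is short; the remainder is the same sort of interval/crossing case analysis used in the proof of that lemma.
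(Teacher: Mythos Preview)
Your proof is correct and follows essentially the same approach as the paper. The ``if'' direction is handled identically via link paths and Lemma~\ref{extremes_in_chord_path_are_crossed_by_P}; for the ``only if'' direction the paper instead does an exhaustive case split (border chord; neither internal nor crossing a border chord) and shows $L$ fails to cross $ab$ in each, whereas you argue the contrapositive directly by locating the border chord $cd$ from the gap of $V(L)$ containing the external endpoint---the two arguments are mirror images of one another and use the same geometric ingredients.
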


\begin{proof}
We proceed by cases depending on $ab$.

\paragraph{Case 1: $ab$ is an internal chord of $L$.} Let $V_1$ and $V_2$ be the two sides defined by $ab$. Since $ab$ is internal, $V(L)$ has vertices on both sides, so let $c\in V(L)\cap V_1$, $d\in V(L)\cap V_2$ and $W$ be a link path in $L$ from $c$ to $d$. Since $cd$ crosses $ab$, we conclude by using Lemma \ref{extremes_in_chord_path_are_crossed_by_P} that there exists a link of $W\subseteq L$ that crosses $ab$.
\paragraph{Case 2: $ab$ crosses some border chord $cd$ of $L$.} Let $W$ be a link path in $L$ from $c$ to $d$. By Lemma \ref{extremes_in_chord_path_are_crossed_by_P}, some link in this path must cross $ab$.
\paragraph{Case 3: $ab$ is a border chord of $L$}
Since $a$ and $b$ are consecutive in $L$ there is no link in $L$ with an extreme in one of the sides of $ab$ and so $L$ does not cross $ab$.
\paragraph{In any other case.} At least one extreme (say $a$) of the chord is external to $L$. Starting from $a$, let $c$ (respectively $d$) be the first vertex of $V(L)$ that we encounter going clockwise (respectively counterclockwise) along the cycle. Then $cd$ is a border chord of $L$. Let $V_1$ be the side of $cd$ to which $a$ belongs and note that $V_1\cap V(L)=\emptyset$. Since $ab$ does not cross $cd$ (see case 2), we conclude that $b$ is in the interval $V_1\cup \{c,d\}$. Therefore, one side of $ab$ is completely contained in $V_1$, and thus this side does not contain any vertex of $V(L)$. It follows that no link in $L$ crosses $ab$.
\end{proof}

We now have all the ingredients to prove Lemma~\ref{circle-component-characterization}.

\begin{proof}[Proof of Lemma \ref{circle-component-characterization}]
Suppose that $S'$ is a circle component. Due to Lemma~\ref{Every_pair_must_be_crossed}, every chord $ab$ of $C_n$ is internal for $S'$, and thus, by Lemma \ref{general_crossing}, $ab$ is crossed by some link in $S'$. Lemma~\ref{Every_pair_must_be_crossed} implies that $S'$ is feasible for \cvca.

For the converse, let $S'$ be a feasible solution of $\cvca$. Suppose by contradiction that the circle graph $G'$ of $S'$ is not connected and let $L$ be a connected component of $G'$. Note that $L$ cannot be a singleton, since if $L=\{ab\}$ then by hypothesis there exists some chord of $S'$ crossing $ab$, and so $L$ would not be maximally connected in $G'$. It follows that $L$ is a circle component. 

Suppose first that $L$ has a border chord $ab$. By Lemma \ref{general_crossing}, $L$ does not cross $ab$.
Since $S'$ crosses $ab$, there exists some link $cd\in S'\setminus L$ such that $cd$ crosses $ab$. Lemma \ref{general_crossing} then implies that $L\cup \{cd\}$ is also a circle component, contradicting the maximality of $L$. We conclude that all chords with endpoints in $V(L)$ are internal chords of $L$. The only way for this to happen is that $V(L)=[n]$. But then, every chord of the cycle is internal to $L$. In particular, any $cd\in S'$ is crossed by $L$, and by maximality of $L$, $cd\in L$. We conclude that $S'=L$ and so $S'$ is a circle component.
\end{proof}
    
\subsection{Minimal completions}\label{sec:minimal_comp}

Another tool we will make use of along this work is a procedure to turn a set of circle components into a feasible solution while controlling the number of extra links.

\begin{definition}
    Let $F\subseteq S$ be a set of links. We call a set $Q \subseteq S$ a \textbf{completion} of $F$ if $F \cup Q$ is feasible for \cvca. 
    A \textbf{minimal completion of $F$} is a completion that is minimal for inclusion.
\end{definition}

We say that two circle components $L_1$ and $L_2$ cross if there exists $e_1 \in L_1$ and $e_2\in L_2$ such that $e_1$ crosses $e_2$. The main idea of our algorithm in Section~\ref{sec:LocalSearch} is to build a collection $\mathcal{L}$ of circle components that, roughly speaking, use few links to cover many vertices and are pairwise non-crossing; then we add a minimal completion to obtain the desired feasible solution. A similar approach has been used by Gálvez et al. for the case of edge-connectivity~\cite{Galvez19}, where the authors prove that any set of links can be completed by iteratively picking and contracting links so as to reduce the whole graph into a single super-node. As mentioned before, this procedure unfortunately does not work for the case of vertex-connectivity as now we require instead to merge the different circle components into a single circle component, which is in general a strictly stronger requirement. The following lemma provides a way to compute a minimal completion in the case of vertex-connectivity and also allows us to bound its size. 

\begin{lemma} \label{minimal_completions_of_components}
Let $L_1,\dots, L_k$ be $k\geq 1$ non-crossing circle components, $|L_i|\ge 2$ for each $i=1,\dots,k$, and let $F=L_1\cup \dots \cup L_k$. Then every minimal completion $Q$ of $F$ uses at most $n - 3 - \sum_{i=1}^k (|V(L_i)| - 3)$ links. Furthermore, such a minimal completion can be computed in polynomial time.
\end{lemma}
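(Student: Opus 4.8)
The plan is to think of each circle component $L_i$ as ``already merged'' into a single super-node and then argue about how many additional links are needed to glue these $k$ super-nodes, together with the vertices not in any $V(L_i)$, into one big circle component. The key intermediate object is a notion of \emph{contracted cycle}: for a circle component $L_i$ with border vertices, the vertices of $V(L_i)$ together with the border chords form a structure that behaves, for crossing purposes, like a single vertex. More precisely, I would first establish a structural fact: if $L$ is a circle component and $b_1 < b_2 < \cdots < b_r$ are its border vertices (so $r \ge 2$ whenever $|L| \ge 2$, since a component with at least two crossing links has at least two border chords hence at least two border vertices), then a chord $e$ of $C_n$ crosses some link of $L$ if and only if $e$ ``enters the region of $L$'', i.e. $e$ has exactly one endpoint in the union of the arcs strictly between consecutive border vertices together with the internal vertices. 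This is essentially a repackaging of Lemma~\ref{general_crossing}.

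**The counting argument.** With that in hand, I would set up an auxiliary cycle $C'$ obtained from $C_n$ by contracting, for each $i$, the whole arc-interior/internal-vertex portion of $L_i$ down to a single node, keeping the border vertices (but identifying the two border vertices of each $L_i$ that are ``outermost''? — here I need to be careful). Rather than literally contract, the cleaner route is: count the vertices of $C'$ as $n - \sum_i (|V(L_i)| - 2)$, because each component $L_i$ collapses its $|V(L_i)|$ vertices into effectively $2$ ``interface'' vertices. Then a minimal completion $Q$ of $F$ corresponds exactly to a minimal set of links that turns $C' \cup (\text{images of the }L_i)$ into a connected circle graph, equivalently a minimal edge set making a certain graph on the $k$ super-nodes and the remaining $n - \sum_i |V(L_i)|$ free vertices into one circle component. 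By Lemma~\ref{circle-component-characterization} and Lemma~\ref{thm:minimal_solution} applied to this contracted instance, a minimal such set is acyclic in the circle-graph sense and has size at most (number of nodes of $C'$) $- 1$. Plugging in the node count gives $|Q| \le n - \sum_i(|V(L_i)| - 2) - 1 = n - 1 - \sum_i (|V(L_i)| - 2)$. This is off by one per component from the claimed bound $n - 3 - \sum_i(|V(L_i)|-3) = n - 3 - \sum_i(|V(L_i)| - 2) + k$; wait, $n-3-\sum(|V(L_i)|-3) = n - 3 - \sum(|V(L_i)|) + 3k$, whereas my estimate gives $n - 1 - \sum |V(L_i)| + 2k$. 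These agree exactly when $k = 2$; for general $k$ the claimed bound is \emph{smaller} by $k-2$. So the naive contraction loses too much, and the real content is a sharper bound exploiting that the $k$ components are \emph{pairwise non-crossing}: non-crossing components sit in disjoint/nested arcs, so the contracted instance is not an arbitrary cycle but one where the $k$ super-nodes are arranged so that merging them is ``cheaper'' — roughly, $k$ disjoint intervals on a cycle can be linked into one component using only about $n - \sum|V(L_i)| + \text{(something)}$ links, saving one link per component beyond the first two.

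**Making the saving precise.** To capture the saving I would argue as follows. Order the components around the cycle; since they are non-crossing, their convex hulls (the minimal arcs containing each $V(L_i)$) are laminar. Consider the minimal completion $Q$ and the circle graph on $F \cup Q$, which by feasibility is connected. Contract each $L_i$ to a point; $Q$ induces a connected ``circle-like'' structure on the contracted cycle, and minimality of $Q$ forces $Q$ to be acyclic \emph{and} forces that no link of $Q$ is redundant even relative to the $L_i$'s. The three ``$-3$'' terms (the global $-3$ and the $-3$ inside each sum) are the signature of the $3$-connectivity threshold: a circle component on $m$ vertices needs at least $m-2$ links but a \emph{minimal} one spanning $m$ prescribed vertices and nothing forced can be as small as... — here I would invoke precisely the bound from a sub-lemma stating that a minimal circle component spanning a given vertex set $V'$ with $|V'| = m \ge 3$ has at most $m - 2$ links, and more to the point, run the completion process component-by-component in laminar order, absorbing $L_i$ into the growing component at a cost of $|$(new vertices)$| - 1$ links in the worst case but $-2$ or better when $L_i$ itself already provides $\ge 2$ ``internal'' connections. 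Counting, the total is $\le (n - \sum_i (|V(L_i)|)) + (\text{arcs between components}) - 3 + \sum_i [\,\text{net credit from } L_i\,]$, and pushing the arithmetic through yields exactly $n - 3 - \sum_{i=1}^k(|V(L_i)| - 3)$.

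**Main obstacle and algorithmic part.** The hard part will be the bookkeeping in the laminar induction: precisely quantifying the ``credit'' each component contributes and showing the worst case is additive with no hidden overlap, which requires the non-crossing hypothesis in an essential way (a single counterexample with crossing components — like the one in Figure~\ref{ejemplospasoslocales2} — shows the bound can fail otherwise). The polynomial-time claim is comparatively routine: starting from $F$, repeatedly add an arbitrary link $e \in S$ such that $F \cup (\text{current }Q) \cup \{e\}$ has strictly more circle components merged (checkable in polynomial time via the circle-graph connectivity test of Lemma~\ref{circle-component-characterization}), stop when feasible, then greedily delete redundant links to reach minimality — each step is polynomial and there are $O(n)$ steps, so the whole procedure runs in polynomial time and the size bound from the analysis applies to its output (and to every minimal completion).
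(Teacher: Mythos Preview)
Your proposal has a genuine gap: it never actually proves the bound. The ``contraction'' of each $L_i$ to a super-node (or to two interface vertices) is never made precise---since the $V(L_i)$ need not be intervals and the components can be nested, it is not clear what cycle $C'$ you obtain, nor why a minimal completion of $F$ in $C_n$ corresponds to a minimal solution in $C'$. More damagingly, your arithmetic comparison has a sign error: with your naive estimate $n-1-\sum_i(|V(L_i)|-2)$ and the target $n-3-\sum_i(|V(L_i)|-3)$, the difference target $-$ naive equals $k-2$, so for $k\ge 2$ the target is \emph{weaker}, not stronger, than your naive bound. The only place you actually need to ``save'' anything is $k=1$, where the target $n-|V(L_1)|$ is one less than your naive $n-|V(L_1)|+1$; you never address how to get that last unit. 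The subsequent talk of ``credits'' and ``laminar induction'' is not an argument---it is a description of where an argument would go, and your own ``Main obstacle'' paragraph concedes as much.

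The paper's proof avoids all of this by working with \emph{zone graphs} rather than a global contraction. First it proves the $k=1$ case directly (Lemma~\ref{minimal_completion}): grow $L$ one border-chord crossing at a time, showing by induction on $n-|V(L)|$ that any minimal completion has size at most $n-|V(L)|$. Then for general $k$ it inducts on $n$: pick $L_1$, and for each border chord $ab$ of $L_1$ form the zone instance on the closed zone $I(ab,L_1)\cup\{a,b\}$ (contracting everything outside to a single vertex). Lemma~\ref{zone_instance} shows that the portion of $Q$ living in that zone is (up to the extra link $ab$) a completion of the components sitting there, so the induction hypothesis bounds each piece; summing over the zones of $L_1$, whose open zones partition $[n]\setminus V(L_1)$, gives exactly the claimed bound. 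The polynomial-time claim follows because the argument is constructive (or simply by greedy deletion from $S\setminus F$). If you want to salvage your approach, the zone-graph reduction is precisely the rigorous version of the ``contraction'' you were reaching for, and Lemma~\ref{minimal_completion} is the missing $k=1$ ingredient.
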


The proof consists of two steps: We first prove the claim for the simpler case of a single circle component, and then reduce the general case to this simpler one by introducing the notion of \emph{zone graph} (see Figure~\ref{ejemplospasoslocales3}).

\paragraph{Step 1. Completing a single circle-component.}

\begin{lemma}\label{minimal_completion}
Let $L \subseteq S$ be a circle component. Then, every minimal completion $Q$ of $L$ has size at most $n - |V(L)|$.
\end{lemma}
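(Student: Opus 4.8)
The plan is to establish a \emph{locality} property of minimal completions of a single circle component and then to run an induction on $|Q|$. Fix a circle component $L$ and a minimal completion $Q$ of $L$; write $m=|V(L)|$ and $U=[n]\setminus V(L)$ for the set of vertices not covered by $L$, so that $|U|=n-m$ and the goal is $|Q|\le |U|$. First I would note that minimality forces $Q\cap L=\emptyset$: if $e\in Q\cap L$, then $L\cup(Q\setminus\{e\})=L\cup Q$ is still feasible, contradicting minimality of $Q$. So from here on $L$ and $Q$ are disjoint sets of links.

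The core of the proof is the claim that \emph{every link $e\in Q$ has an endpoint in $U$}. To prove it, fix $e\in Q$; since $Q$ is a minimal completion, $L\cup(Q\setminus\{e\})$ is infeasible, so by Lemma~\ref{Every_pair_must_be_crossed} there is a chord $\gamma$ of $C_n$ crossed by no link of $L\cup(Q\setminus\{e\})$. As $L\cup Q$ is feasible, $\gamma$ is crossed by some link of $L\cup Q$, and that link can only be $e$; in particular $\gamma$ is crossed by $e$ but by no link of $L$. Now apply Lemma~\ref{general_crossing} to $L$ and $\gamma$: since $L$ does not cross $\gamma$, the chord $\gamma$ is not internal to $L$, so either (i) $\gamma$ is a border chord of $L$, or (ii) $\gamma$ has an endpoint outside $V(L)$. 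In case (i), one of the two (nonempty) sides of $\gamma$ contains no vertex of $V(L)$, hence is an arc contained in $U$. In case (ii), I claim again that one side $V_1$ of $\gamma$ is disjoint from $V(L)$: otherwise pick $p,q\in V(L)$ on opposite sides of $\gamma$; then $p,q$ are non-adjacent in $C_n$, so $pq$ is a chord crossing $\gamma$, and a link path in $L$ from $p$ to $q$ (which exists because $L$ is a circle component) would, by Lemma~\ref{extremes_in_chord_path_are_crossed_by_P}, contain a link of $L$ crossing $\gamma$ — contradicting that $L$ does not cross $\gamma$. In either case $\gamma$ has a side that is a nonempty set of vertices all lying in $U$; since $e$ crosses $\gamma$, the two endpoints of $e$ lie on the two different sides of $\gamma$, so one endpoint of $e$ lies in $U$, proving the claim.

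Given the claim, I would induct on $|Q|$. If $|Q|=0$ then $L$ is already feasible and $|Q|=0\le n-m$ because $m\le n$. If $|Q|\ge 1$ then $L$ itself is not feasible, so $m<n$; moreover $L\cup Q$ is a circle component properly containing $L$, so its circle graph is connected and there is a link $e\in Q$ crossing some link of $L$, and by the claim this $e$ also has an endpoint in $U$. Put $L_1=L\cup\{e\}$: it is a circle component (the circle graph of $L$ is connected and $e$ is adjacent to it), and $V(L_1)\supsetneq V(L)$, so $|V(L_1)|\ge m+1$. Also $Q_1:=Q\setminus\{e\}$ satisfies $L_1\cup Q_1=L\cup Q$, which is feasible; and for every $e'\in Q_1$ we have $L_1\cup(Q_1\setminus\{e'\})=(L\cup Q)\setminus\{e'\}$, which is infeasible since $Q$ is a minimal completion of $L$. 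Hence $Q_1$ is a minimal completion of $L_1$ with $|Q_1|=|Q|-1<|Q|$, so the inductive hypothesis yields $|Q_1|\le n-|V(L_1)|\le n-m-1$, and therefore $|Q|\le n-m$, which is exactly the bound claimed.

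The only real obstacle is the claim that every link of $Q$ reaches into $U$; the rest is bookkeeping. Its delicate point is the analysis of the ``private'' chord $\gamma$ certifying that $e$ cannot be dropped from the completion: one needs the full crossing characterisation (Lemma~\ref{general_crossing}) together with the link-path lemma (Lemma~\ref{extremes_in_chord_path_are_crossed_by_P}) to force $\gamma$ to have an entire side inside $U$, which is what makes $e$ have an endpoint in $U$. Once this is in place, the induction absorbs links of $Q$ into $L$ one at a time, each absorption covering at least one new vertex of the cycle, which is precisely the accounting behind the bound $n-|V(L)|$.
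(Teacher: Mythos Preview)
Your argument is correct. The locality claim is sound: the ``private'' chord $\gamma$ witnessing that $e$ is indispensable must have one side entirely inside $U$ (either because $\gamma$ is a border chord of $L$, or by the link-path argument you give), and hence $e$ has an endpoint in $U$. The induction then goes through exactly as you describe; in particular, the verification that $Q_1$ is again a \emph{minimal} completion of $L_1$ is clean because $L_1\cup(Q_1\setminus\{e'\})=(L\cup Q)\setminus\{e'\}$.

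Your route is genuinely different from the paper's. The paper does not prove that every link of a minimal completion reaches into $U$; instead it first shows by induction on $n-|V(L)|$ that \emph{some} completion of size at most $n-|V(L)|$ exists (at each step absorbing an arbitrary link of $S$ that crosses a border chord of the current component), and then, for a given minimal completion $Q$, it applies this existence statement inside the restricted instance $(C_n,\,L\cup Q)$, where any constructed completion must be a subset of $Q$ and hence, by minimality, equals $Q$. So the paper trades your structural claim for an instance-restriction trick. Your approach is more direct and yields an extra structural fact (every link of $Q$ touches $U$) that the paper never states; the paper's approach is shorter and avoids the case analysis on $\gamma$.
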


\begin{proof}
Let us first prove that there exists a completion $Q$ of size $n-|V(L)|$. We do this by induction on $i=n-|V(L)|$. If $i=0$ then $L$ is an edge-cover and by Lemma \ref{circle-component-characterization}, it is already feasible, so we can set $Q=\emptyset$. For $i \geq 1$ we notice that there must be a border chord $ab$ of $L$. Since $S$ itself is feasible, there must be a link $cd\in S$ that crosses $ab$. Then, by Lemma \ref{general_crossing}, $L'=L\cup\{cd\}$ is a component covering at least one more vertex than $L$. By induction, there is a completion $Q'$ of $L'$ with at most $n-|V(L')|$ links. Then $Q=Q'\cup \{cd\}$ is a completion of $L$ satisfying that $|Q|=|Q'|+1\leq 1 + n - |V(L')| \leq n - |V(L)|$, concluding the first proof.

Now, let $Q$ be a minimal completion of $L$. Consider the new instance of \cvca $(C_n, L\cup Q)$. In this instance, $Q$ is the only completion of $L$, and then following the previous argumentation we have that $|Q|\leq n-|V(L)|$. 
\end{proof}

It is worth noting that Lemma~\ref{minimal_completion} provides an alternative proof that minimal solutions are $2$-approximate not relying on Mader's theorem (see Lemma~\ref{thm:minimal_solution}) as the following proposition states.

\begin{proposition}\label{corollary_completion}
Every minimal completion of a single chord has at most $n-3$ links, and every minimal solution for $\cvca$ has size at most $n-2$.
\end{proposition}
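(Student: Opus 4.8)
The plan is to bootstrap both statements from Lemma~\ref{minimal_completion}, exploiting the fact that a single link, while not a useful circle component on its own, is forced by any completion to sit inside a two-link circle component, and any such pair of crossing chords spans exactly four vertices of $C_n$.

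For the first claim, let $Q$ be a minimal completion of a single link $ab$. I would first note that $ab\notin Q$: otherwise $\{ab\}\cup(Q\setminus\{ab\})=Q$ would still be a completion, contradicting minimality. Since $\{ab\}\cup Q$ is feasible, Lemma~\ref{Every_pair_must_be_crossed} forces $ab$ to be crossed by some link of $\{ab\}\cup Q$, and since a chord cannot cross itself this link is some $cd\in Q$. Crossing chords share no endpoint, so $L':=\{ab,cd\}$ is a circle component with $|V(L')|=4$. The bookkeeping step is to check that $Q\setminus\{cd\}$ is a \emph{minimal} completion of $L'$: it is a completion because $L'\cup(Q\setminus\{cd\})=\{ab\}\cup Q$ is feasible, and it is minimal because removing any further link $f$ would make $\{ab\}\cup(Q\setminus\{f\})$ feasible, contradicting the minimality of $Q$. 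Lemma~\ref{minimal_completion} applied to $L'$ then gives $|Q|-1=|Q\setminus\{cd\}|\le n-|V(L')|=n-4$, i.e.\ $|Q|\le n-3$.

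For the second claim, let $S'$ be a minimal solution of \cvca. By Lemma~\ref{Every_pair_must_be_crossed} it is non-empty (indeed $|S'|\ge n/2\ge 2$), so fix a link $ab\in S'$. Then $S'\setminus\{ab\}$ is a minimal completion of $\{ab\}$: it is a completion since $\{ab\}\cup(S'\setminus\{ab\})=S'$ is feasible, and its minimality follows from that of $S'$ exactly as above. Hence the first claim gives $|S'\setminus\{ab\}|\le n-3$, so $|S'|\le n-2$. Alternatively, one could argue directly by picking the link $cd\in S'$ crossing $ab$ and applying Lemma~\ref{minimal_completion} to the four-vertex circle component $\{ab,cd\}$.

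I do not expect a genuine obstacle here; the proof is a short reduction. The only points requiring care are the ``restriction of a minimal completion to a larger sub-component is still a minimal completion'' argument (which recurs in both parts) and the observation that a crossing pair of chords touches four distinct vertices rather than three — this is precisely what upgrades the bound from the $n-|V(L)|=n-2$ one gets by feeding a single link directly into Lemma~\ref{minimal_completion} to the sharper $n-3$, and correspondingly from $n-1$ to $n-2$ for minimal solutions.
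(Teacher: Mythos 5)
Your argument is correct and follows the paper's own route essentially verbatim: pass from the single link $ab$ to the two-link circle component $\{ab,cd\}$ (where $cd$ is a crossing link supplied by feasibility), apply Lemma~\ref{minimal_completion} with $|V(\{ab,cd\})|=4$ to get $n-3$, and then deduce the $n-2$ bound for minimal solutions by viewing $S'\setminus\{ab\}$ as a minimal completion of $\{ab\}$. You are a bit more explicit than the paper about why the restricted set remains a \emph{minimal} completion, which is a useful check but not a different idea.
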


\begin{proof}
    Let $Q$ be a minimal completion of a single link $\{e\}$. Since $Q\cup \{e\}$ is a solution for $\cvca$, there exists $f\in F$ that crosses $e$. Then $Q\setminus \{e,f\}$ is a minimal completion of the circle component $\{e,f\}$. Using Lemma \ref{minimal_completion}, we get $|Q|=1+|Q\setminus\{f\}| \leq 1+n-|V(\{e,f\})|=1+n-4=n-3$.

    For the second statement, we observe that if $Q$ is a minimal solution for $\cvca$ then $Q-e$ is a minimal completion of $\{e\}$ for any link $e\in Q$, and so by the previous paragraph $|Q|=1+|Q-e|\leq n-2$. \end{proof}

\paragraph{Step 2. Completing a collection of arbitrary non-crossing circle components.} 

As mentioned before, we will relate the case of arbitrary non-crossing circle components to the case of a single circle component via the following definitions.

\begin{definition}
The \textbf{open zone} defined by a border chord $ab$ of a circle component $L$ is the interval of vertices $I(ab, L)$ corresponding to the side of $ab$ that does not include any vertex of $V(L)$. The associated closed zone is $I(ab,L)\cup \{a,b\}$.

    The \textbf{zone graph} defined by a border chord $ab$ of a circle component $L$ is the graph $Z(ab,L,S)$ obtained from $G=([n],C_n \cup S)$ by doing the following operations: (1) Erase all the links of $L$, (2) Add a link $e$ between $a$ and $b$, and (3) Contract all the vertices in $C_n\setminus (I(ab,L)\cup \{a,b\})$ to a single vertex $v_0$.
\end{definition}

\begin{figure}[t]
	\begin{center}
	    \includegraphics[scale=.6]{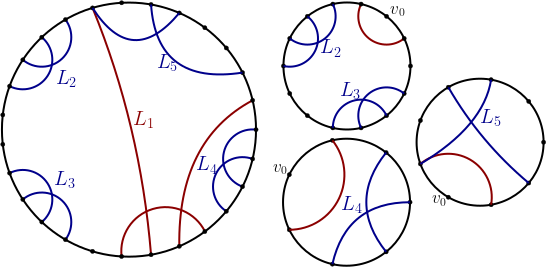}
	\end{center}
	\caption{\textbf{Left: }Graph with $5$ components. $L_1$ defines 3 zones, which can have one component ($L_4$) Two or more components (On the zone that includes $L_2$ and $L_3)$ and extra vertices (like the ones next to $L_3$). \textbf{Right:} The corresponding Zone Graphs}
	\label{ejemplospasoslocales3}
\end{figure}

In $Z=Z(ab,L,S)$ we have removed all the loops and kept only one link from each parallel class. For every set $Q\subseteq S$ of chords in $G$, we let $\psi(Q)$ be the associated set of chords in $Z$, eliminating every link that becomes part of the external cycle. By this construction, we obtain a new \emph{zone instance} $(C_{n'}, \psi(S) \cup \{ab\})$, associated to $ab$ and $L$. Here we note that $n'=|I(ab,L)|+1<n$ which follows since $L$ covers at least 4 vertices. 

Let $\mathcal{L}$ be a collection of non-crossing circle components, and let $L\in \mathcal{L}$ be a fixed one. Let $P(L)$ be the border chords ($P$ stands for perimeter) of $L$. Since they do not cross, every circle component $K\in \mathcal{L}\setminus \{L\}$ is completely contained in one closed zone of $L$ (meaning that $V(K)$ is a subset of the closed zone), and every closed zone of $L$ may contain further components of $\mathcal{L}\setminus \{L\}$. For each border chord $ab\in P(L)$, let $\mathcal{L}(ab,L)$ be the set of circle components in $\mathcal{L}\setminus \{L\}$ that are in the closed zone $I(ab,L)\cup \{a,b\}$, and let $F^{ab} = \bigcup_{K\in \mathcal{L}(ab,L)} K$ be the sets of links that are contained in some circle component of the closed zone $I(ab,L)\cup \{a,b\}$. We can then prove the following.

\begin{lemma} \label{zone_instance} Let $Q\subseteq S$ be a set of links.
If $Q$ is a completion of $F^{ab}\cup L$ in the original instance $(C_n,S)$, then $\psi(Q)$ is a completion of $\psi(F^{ab}) \cup \{ab\}$ in the zone instance $(C_{n'}, \psi(S)\cup \{ab\})$. Conversely, if $\psi(Q)$ is a completion of $\psi(F^{ab}) \cup \{ab\}$ in the zone instance, then $Q\cup F^{ab}$ crosses all the chords of $C_n$ with both extremes in $I(ab,L) \cup \{a,b\}$.
\end{lemma}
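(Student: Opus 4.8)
The plan is to prove the two directions of Lemma~\ref{zone_instance} essentially by tracking how the crossing relation behaves under the construction of the zone graph $Z=Z(ab,L,S)$. The key observation, which I would isolate as the first step, is a \emph{crossing-preservation} claim: for two chords $e,f$ of $C_n$ that both have \emph{both} endpoints in the closed zone $I(ab,L)\cup\{a,b\}$, $e$ crosses $f$ in $C_n$ if and only if $\psi(e)$ crosses $\psi(f)$ in $C_{n'}$. This is true because the contraction in operation~(3) collapses precisely the complementary arc $C_n\setminus(I(ab,L)\cup\{a,b\})$ to the single vertex $v_0$, so the cyclic order of vertices inside the closed zone is preserved, and the new external cycle $C_{n'}$ visits $a$, the interval $I(ab,L)$, $b$, and then $v_0$, in that order. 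Hence the ``betweenness'' that defines crossing is unchanged for chords living inside the zone. I would also record the companion fact that every link of $F^{ab}\cup\{ab\}$ lies inside the closed zone (by definition of $\mathcal{L}(ab,L)$ and since the components are non-crossing, hence each $K\in\mathcal{L}(ab,L)$ is contained in one closed zone of $L$), and that $\psi$ maps $F^{ab}$ and $ab$ into $\psi(F^{ab})$ and $ab$ without destroying any of them (they are not part of the external cycle because their endpoints are internal to the zone, except $a,b$ themselves which are the endpoints of the added link $e=ab$).

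For the forward direction, suppose $Q$ is a completion of $F^{ab}\cup L$ in $(C_n,S)$. By Lemma~\ref{Every_pair_must_be_crossed}, $F^{ab}\cup L\cup Q$ crosses every chord of $C_n$; in particular it crosses every chord with both endpoints in the closed zone. I would then argue that for such a zone-chord $g$, the link of $F^{ab}\cup L\cup Q$ that crosses it can be taken to lie inside the closed zone: a link of $L$ cannot cross $g$ (by Case~3 / the ``any other case'' analysis in Lemma~\ref{general_crossing}, no link of $L$ crosses a chord contained in a single zone of $L$), and any link of $Q$ that crosses $g$ must, since $g$ is inside the closed interval $I(ab,L)\cup\{a,b\}$, have at least one endpoint strictly inside the open zone, and then — because $g$'s two endpoints are in the interval and one side of $g$ is contained in the open zone — the crossing link actually has both endpoints in the closed zone as well. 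So every zone-chord $g$ is crossed by some link of $F^{ab}\cup Q$ lying inside the closed zone, and by the crossing-preservation claim $\psi(g)$ is crossed in $C_{n'}$ by the corresponding link of $\psi(F^{ab})\cup\psi(Q)$. Since every chord of $C_{n'}$ other than those on the external cycle corresponds to a zone-chord $g$ (here using that all of $C_{n'}$'s vertices come from the closed zone, with $v_0$ playing the role of the collapsed arc, and that $ab$ itself is now an edge of interest), and the added link $\{ab\}$ takes care of chords incident to $v_0$ together with $ab$, every chord of $C_{n'}$ is crossed by $\psi(F^{ab})\cup\{ab\}\cup\psi(Q)$; by Lemma~\ref{Every_pair_must_be_crossed} this set is feasible in the zone instance, i.e.\ $\psi(Q)$ is a completion of $\psi(F^{ab})\cup\{ab\}$.

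For the converse, suppose $\psi(Q)$ is a completion of $\psi(F^{ab})\cup\{ab\}$ in the zone instance, so $\psi(F^{ab})\cup\{ab\}\cup\psi(Q)$ crosses every chord of $C_{n'}$. Let $g$ be any chord of $C_n$ with both endpoints in $I(ab,L)\cup\{a,b\}$. Then $g$ survives as a chord $\psi(g)$ of $C_{n'}$ (it is not on the external cycle unless it equals $ab$, which is crossed trivially by nothing but is not required to be crossed here — actually $ab$ is a chord we do not need to cross, and indeed we should be careful: if $g=ab$ then $g$ is crossed by $F^{ab}\cup Q$? Not necessarily, but $ab$ is a border chord of $L$ and $L$ itself will cross it in the full instance; the statement only claims $Q\cup F^{ab}$ crosses chords \emph{of $C_n$} with both extremes in the closed zone, and $ab$ is such a chord only if $ab$ is a genuine chord, in which case I handle it by noting it is required to be crossed in $C_{n'}$ as a chord there, so some link of $\psi(F^{ab})\cup\psi(Q)$ crosses it). In general, $\psi(g)$ is crossed in $C_{n'}$ by some link of $\psi(F^{ab})\cup\{ab\}\cup\psi(Q)$; since $\psi(g)$ is not incident to $v_0$ it is not crossed by $\{ab\}$ either way only if $\{ab\}$'s image is the one crossing it — but $ab$ maps to itself, and $ab$ can cross $\psi(g)$ precisely when $ab$ crosses $g$ in $C_n$. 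In every case the crossing link is the $\psi$-image of a link of $F^{ab}\cup\{ab\}\cup Q$ that lies inside the closed zone, and applying the crossing-preservation claim in the other direction, that preimage link crosses $g$ in $C_n$. Hence $F^{ab}\cup\{ab\}\cup Q$, and therefore $F^{ab}\cup L\cup Q$ (since $ab$ is crossed by $L$ whenever... no—cleaner: $ab$ is a border chord of $L$ so $L\cup F^{ab}$ need not cross it, but the claim as stated only asserts that $Q\cup F^{ab}$ crosses all chords with both extremes in the closed zone, and we have shown exactly that, modulo the chord $ab$ itself which is crossed by the image of $ab$ only when... ) crosses every chord of $C_n$ with both extremes in $I(ab,L)\cup\{a,b\}$, which is the claim.

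The main obstacle I expect is the bookkeeping at the boundary vertices $a$, $b$, and the artificial vertex $v_0$: one must check that $\psi$ does not accidentally identify distinct relevant chords or delete a chord we need, that the added link $e=ab$ in the zone graph correctly substitutes for the crossings that links of $L$ used to provide (namely crossings with chords that leave the zone), and that chords of $C_{n'}$ incident to $v_0$ correspond exactly to chords of $C_n$ crossing the border chord $ab$ — so that ``crossing all chords of $C_{n'}$'' translates precisely to ``crossing all chords of $C_n$ with both extremes in the closed zone'' and nothing more. Getting the quantifiers exactly right (the converse deliberately concludes only a restricted crossing property, not full feasibility in $C_n$, which is why it is phrased asymmetrically) is the delicate point; the crossing-preservation claim itself is routine once stated.
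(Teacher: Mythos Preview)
Your approach is the same as the paper's: reduce both directions to a crossing-preservation principle under $\psi$, handle chords incident to $v_0$ via the added link $ab$, and handle the remaining chords by pulling crossings back and forth. The gap is that your crossing-preservation claim is stated too narrowly and your reduction to it is incorrect.

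You require \emph{both} chords to have both endpoints in the closed zone, and in the forward direction you then assert that ``the crossing link actually has both endpoints in the closed zone as well.'' This is false. Take $g=cd$ with $c,d\in I(ab,L)$ and a link $e\in Q$ with one endpoint strictly between $c$ and $d$ and the other endpoint in the complementary arc $J=[n]\setminus(I(ab,L)\cup\{a,b\})$: then $e$ crosses $g$, yet $e$ has an endpoint outside the closed zone. Your argument that one side of $g$ lies in the open zone only pins \emph{one} endpoint of $e$ inside $I(ab,L)$; the other endpoint sits on the large side of $g$, which contains all of $J$. The same defect bites in the converse: the link $f'\in\psi(Q)$ crossing $\psi(g)$ may well be incident to $v_0$ (indeed it \emph{must} be when $g=ab$, the case you visibly struggle with), and then any preimage $f\in Q$ has an endpoint in $J$, so your restricted claim does not apply.

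The fix is to state the preservation in the form the paper actually uses: for any chord $g$ of $C_n$ with both endpoints in $I(ab,L)\cup\{a,b\}$ and \emph{any} link $e$, $e$ crosses $g$ in $C_n$ if and only if $\psi(e)$ crosses $g$ in $C_{n'}$ (reading ``crosses nothing'' when $\psi(e)$ degenerates to a loop or a cycle edge). Once this is in hand, the forward direction is immediate: a chord $cd$ of $C_{n'}$ with $c,d$ uncontracted is not crossed by $L$ in $C_n$ (it is neither an internal chord of $L$ nor does it cross any border chord of $L$), so it is crossed by some $e\in Q\cup F^{ab}$, hence by $\psi(e)$; and a chord incident to $v_0$ is crossed by $ab$. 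The converse is equally short: $\psi(g)$ is not incident to $v_0$, so $ab$ cannot be the link crossing it in $C_{n'}$, hence some $f'\in\psi(Q)\cup\psi(F^{ab})$ does, and any preimage crosses $g$ in $C_n$. This is precisely the paper's proof.
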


\begin{proof}
Let $Q$ be a completion of $L$ in the original instance and let $cd$ be a chord in $C_{n'}$. If both $c$ and $d$ are uncontracted vertices (i.e., they are in the closed zone $I(ab,L)\cup \{a,b\}$), then they are not crossed by $L$ in the original instance, and so they must be crossed by some link $e \in Q\cup F^{ab}$. Therefore, the respective $\psi(e) \in \psi(Q)\cup \psi(F^{ab})$ crosses $cd$ in the zone instance. On the other hand, if one of the vertices of the chord $cd$ is $v_0$, then $cd$ is crossed by $ab$. In any case, $cd$ is crossed by $\psi(Q)\cup \psi(F^{ab})\cup \{ab\}$.

For the converse, suppose that $\psi(Q)$ is a completion of $\psi(F^{ab}) \cup \{ab\}$ in the zone instance. We notice that all the vertices in $I(ab,L)\cup \{a,b\}$ are uncontracted vertices. Then for every chord $cd$ in $C_n$ between two vertices in that set, there is a link $f'$ in $\psi(Q)\cup \psi(F^{ab})$ that crosses the chord in $C_{n'}$. Any link $f \in Q\cup F^{ab}$ with $\psi(f)=f'$ crosses the chord $cd$ in the original instance.
\end{proof}

We can now proceed with the proof of Lemma~\ref{minimal_completions_of_components}

\begin{proof}[Proof of Lemma~\ref{minimal_completions_of_components}]
 We will prove the claim by induction on $n$. Let $Q$ be a minimal completion of $F=L_1\cup \dots \cup L_k$. The case $n=4$ holds trivially since in that case $k=1$ and $L_1$ covers all the vertices and so $\emptyset$ is the sought minimal completion. 
    
 Consider a border chord $ab\in P(L_1)$. Lemma \ref{zone_instance} guarantees that $\psi(Q)$ is a completion of $\psi(F^{ab})\cup \{ab\}$  in the associated zone instance.

    Therefore, $\psi(Q)\cup \{ab\}$ is a completion of $\psi(F^{ab})$ in the same zone instance. So,     there is a subset $Q'_{ab}\subseteq Q$ with $|Q'_{ab}|=|\psi(Q'_{ab})|$ such that either $\psi(Q'_{ab})$ or  $\psi(Q'_{ab})\cup \{ab\}$ are minimal completions of $\psi(F^{ab})$ in that zone instance (in any case, $Q'_{ab}$ has size at most that of a minimal completion of $\psi(F_{ab})$ in the zone graph).

    We have two cases here: let us consider first the case in which $\mathcal{L}(ab,L_1)$ is non-empty. Recall that the zone instance has $n_{ab}<n$ vertices, so we can apply the induction hypothesis to the zone instance and obtain that 
    $$|Q'_{ab}| \leq n_{ab}-3 - \sum_{J \in \mathcal{L}(ab,L_1)}(|V(J)| - 3) = |I(ab,L_1)| - \sum_{J \in \mathcal{L}(ab,L_1)}(|V(J)| - 3|).$$
    
    In the complementary case when $\mathcal{L}(ab,L_1)$ and $F^{ab}$ are empty, i.e. when there are no components in the zone defined by $ab$, we have that $\psi(Q)$ is a completion of a single link $\{ab\}$ in the zone instance, and so there is a set $Q'_{ab}\subseteq Q$ with $|Q'_{ab}|=|\psi(Q'_{ab})|$ such that $\psi(Q'_{ab})$ is a minimal completion of $\{ab\}$. By Proposition~\ref{corollary_completion}, $|Q'_{ab}|=|\psi(Q'_{ab})|=n_{ab}-3=|I(ab,L_1)|=|I(ab,L_1)|-\sum_{J \in \mathcal{L}(ab,L_1)}(|V(J)|-3)$ since the sum is empty.
    
    Define $Q' = \bigcup_{ab\in P(L_1)} Q'_{ab}$. We claim that $Q'$ is a completion of $F=L_1\cup \dots \cup L_k$ in the original instance. Indeed, let $cd$ be an arbitrary chord of the cycle $C_n$ and let us prove that $Q'\cup F$ crosses it. If $cd$ is crossed by $L_1$ then we are done, so assume that it is not crossed by it. By Lemma \ref{general_crossing}, $cd$ is neither an internal chord of $L_1$ nor it is a chord that crosses some border chord of $L_1$. The only possibility left is that $cd$ connects two vertices in  $I(ab, L_1)\cup \{a,b\}$ for some border chord $ab\in P(L_1)$.
    Lemma \ref{zone_instance} implies that $cd$ is crossed by a link in $Q'_{ab}\cup F^{ab}\subseteq Q'\cup F$ and thus the claim is fulfilled.
    
    Finally, as $Q' \subseteq Q$ and $Q$ is a minimal completion of $F$, we must have that $Q'=Q$. Using the bounds above and the fact that the open zones $\{I(ab,L_1)\colon ab\in P(L_1)\}$ form a partition of $[n]\setminus V(L_1)$, we get that 
    \begin{align*}
    |Q|&\leq \sum_{ab\in P(L_1)} |Q'_{ab}| \leq \sum_{ab\in P(L_1)} \bigl(|I(ab,L_1)|-\sum_{J\in \mathcal{L}(ab,L_1)}(|V(J)|-3)\bigr)\\
    &=(n-|V(L_1)|) - \sum_{J \in \mathcal{L}\setminus \{L_1\}}(|V(J)|-3)  =n - 3 - \sum_{J \in \mathcal{L}}(|V(J)|-3).
    \end{align*}
\end{proof}

In what follows we will provide a method to compute a partial solution with good enough guarantees so as to apply Lemma~\ref{minimal_completions_of_components} and complete it. Efficient choices of links will be prioritized in this first phase in order to obtain an approximation ratio better than $2$.

\section{Local Search Algorithm}\label{sec:LocalSearch}

In this section we present our main algorithmic approach for \cvca. Let us first define special sets of links which will be fundamental for our algorithm.

\begin{definition}
    Let $F \subseteq S$ be a set of links. We say that $F$ is \textbf{singleton-free} if the circle graph $G'$ of $F$ has no isolated vertices (singletons). The connected components of $G'$ form a family $\mathcal{L}$ of non-crossing circle components. 
    We define the \textbf{utility of $F$} as \[U(F)=-|F|+\sum_{J\in \mathcal{L}}(|V(J)|-3).\]
\end{definition}

Our approach is to initially find singleton-free sets in an increasing way (i.e., by only adding links) whose utility is high with respect to some increasing parameter (e.g. the number of covered vertices). As the following lemma shows, high utility sets indeed provide solutions with improved approximation guarantees.

\begin{lemma} \label{bound_alg}
If $Q$ is a minimal completion of a singleton-free set $F$, then $Q\cup F$ is a solution for $\cvca$ with at most $n-3-U(F)$ links.
\end{lemma}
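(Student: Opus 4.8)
The plan is to reduce the statement directly to Lemma~\ref{minimal_completions_of_components}. Since $F$ is singleton-free, let $\mathcal{L}=\{L_1,\dots,L_k\}$ be the connected components of the circle graph of $F$; by definition these are pairwise non-crossing circle components, each satisfies $|L_i|\ge 2$, and $F=L_1\cup\dots\cup L_k$. (We may assume $k\ge 1$, i.e.\ $F\neq\emptyset$, which is the only situation in which the algorithm invokes this lemma; when $F=\emptyset$ the circle component family is empty, $U(F)=0$, and the bound reduces to a statement about minimal solutions.) The point is that the hypotheses of Lemma~\ref{minimal_completions_of_components} are precisely what ``singleton-free'' delivers, so the heavy lifting is already done.

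First I would record that $Q\cap F=\emptyset$: if some link $e$ belonged to $Q\cap F$, then $(Q\setminus\{e\})\cup F=Q\cup F$ would still be feasible for \cvca, contradicting the minimality of the completion $Q$. Hence $|Q\cup F|=|Q|+|F|$. Next, applying Lemma~\ref{minimal_completions_of_components} to the non-crossing circle components $L_1,\dots,L_k$ (with $F=L_1\cup\dots\cup L_k$) gives $|Q|\le n-3-\sum_{i=1}^{k}(|V(L_i)|-3)$. Combining the two observations,
\[
|Q\cup F| \;=\; |Q|+|F| \;\le\; n-3-\sum_{i=1}^{k}(|V(L_i)|-3)+|F| \;=\; n-3-\Bigl(-|F|+\sum_{i=1}^{k}(|V(L_i)|-3)\Bigr) \;=\; n-3-U(F),
\]
exactly matching the definition of $U(F)$. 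Feasibility of $Q\cup F$ for \cvca is immediate from the definition of a completion, so the claim follows.

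The main point requiring care is essentially bookkeeping: checking that the family of circle components associated with a singleton-free $F$ meets the hypotheses of Lemma~\ref{minimal_completions_of_components} verbatim (pairwise non-crossing, each of size at least two), and keeping the sign convention in $U(F)=-|F|+\sum_{J\in\mathcal{L}}(|V(J)|-3)$ straight so that the $+|F|$ coming from $|Q\cup F|=|Q|+|F|$ cancels correctly against the $-|F|$ inside $U(F)$. The disjointness of $Q$ and $F$ is the only non-purely-algebraic step, and it is a one-line consequence of minimality. There is no substantive obstacle beyond this, since all the structural difficulty was absorbed into Lemma~\ref{minimal_completions_of_components}.
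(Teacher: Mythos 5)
Your proof is correct and takes essentially the same route as the paper: both reduce directly to Lemma~\ref{minimal_completions_of_components} applied to the circle components of the singleton-free set $F$, and then unwind the definition of $U(F)$. The only stylistic difference is that you explicitly argue $Q\cap F=\emptyset$ to justify $|Q\cup F|=|Q|+|F|$, whereas the paper simply bounds $|Q|+|F|$ (which suffices since $|Q\cup F|\le|Q|+|F|$ regardless of disjointness); this is a fine, slightly more careful presentation of the same proof.
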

\begin{proof}
By Lemma \ref{minimal_completions_of_components},  $Q$ has size at most $n-3-\sum_{J\in \mathcal{L}}(|V(J)|-3)=n-3-|F|-U(F)$, and so $|Q|+|F|\leq n-3-U(F)$.
\end{proof}

For any given $\alpha \in (1/2,1]$ and any fixed step size $N_{\max}\in \mathbb{N}$, we define the following Local Search algorithm which works in two phases: in the first phase, it constructs a singleton-free set of links $F$ by adding at most $N_{\max}$ links at a time and ensuring that, in each iteration, the marginal utility gain is at least $(1-\alpha)$ times the number of newly covered vertices. More in detail, the algorithm adds links in the first phase as long as the set $F$ under construction is not $(\alpha, N_{\max})$-critical as defined below.

\begin{definition}
Let $\alpha \in (1/2,1]$ and let $N_{\max}\in \mathbb{N}$. A singleton-free set $F\subseteq S$ will be called $(\alpha, N_{\max})$-critical if there is no set $K\subseteq S\setminus F$ of size at most $N_{\max}$ such that $F\cup K$ is singleton-free and $U(F\cup K)-U(F)\geq (1-\alpha)|V(F\cup K)\setminus V(F)|$. \end{definition}

Notice that at the end of the first phase, the utility of $F$ is at least $(1-\alpha) |V(F)|$. In the second phase, the algorithm finds a minimal completion $Q$ of $F$ and returns the pair $(Q,F)$. The set $Q\cup F$ is the solution proposed. See Algorithm~\ref{LocalSearchAlgorithm}.

\begin{algorithm}
\caption{Local Search algorithm}\label{LocalSearchAlgorithm}
\hspace*{\algorithmicindent} \textbf{Input:} Instance $(C_n,S)$ of $\cvca$. $\alpha \in (1/2, 1)$, $N_{\max} \in \mathbb{N}$.
\begin{algorithmic}[1]
\phase{Building a collection of non-crossing circle components.}
\State $F\gets \emptyset$.
\While {$F$ is not $(\alpha,N_{\max})$-critical}
\State Let $K\subseteq S\setminus F$ of size at most $N_{\max}$ such that $F\cup K$ is singleton-free and $U(F\cup K)-U(F)\geq (1-\alpha)|V(F\cup K)\setminus V(F)|$;
\State $F\gets F\cup K$
\EndWhile
\phase{Finding a completion}
\State Find a minimal completion $Q$ of $F$.
\State Return $(Q,F)$
\end{algorithmic}
\end{algorithm}

To gain some intuition about the algorithm, suppose that at one iteration of Phase 1, the algorithm finds a set $K$ of at most $N_{\max}$ links to add to the current solution $F$. Let $\mathcal{L}$ be the current circle components of $F$. The simplest case is when the set $K$ crosses exactly one circle component $L\in \mathcal{L}$ and $V(K)$ does not intersect any other component of $\mathcal{L}$. In this case, the marginal utility gain $U(F\cup K)-U(F)$ is simply $-|K|+|V(L\cup K)\setminus V(L)|=-|K|+|V(F\cup K)\setminus V(F)|$.
So, the algorithm adds $K$ to $F$ as long as $-|K|+|V(F\cup K)\setminus V(F)|\geq (1-\alpha)|V(F\cup K)\setminus V(F)|$, or equivalently if the ratio from the number of new links $K$ to the number of new covered vertices $|V(F\cup K)\setminus V(F)|$ is at most $\alpha$. So, the algorithm only adds sets $K$ of links that cover new vertices with few links.

The intuition behind adding sets $K$ that cross (or touch) more than one component is more difficult to grasp, but as a general rule, connecting more components yields larger marginal utility gains. The following lemma provides a bound on the size of the returned solution with respect to the number of nodes covered by links from Phase $1$.

\begin{lemma} \label{bound_for_alg}
For any fixed $N_{\max}$ and $\alpha$, the Local Search algorithm runs in polynomial time and returns a pair $(Q,F)$ such that $Q\cup F$ is feasible for $\cvca$ and $|Q\cup F|\leq n-3-(1-\alpha)|V(F)|$.
\end{lemma}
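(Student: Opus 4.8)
The plan is to verify the two assertions of Lemma~\ref{bound_for_alg} separately: polynomial running time, and the stated bounds on the output. The running-time claim is the lighter of the two. In Phase~1, each iteration that succeeds in adding a set $K$ strictly increases $|V(F)|$ (any admissible $K$ must cover at least one new vertex, since otherwise the right-hand side $(1-\alpha)|V(F\cup K)\setminus V(F)|$ is zero while adding links without covering new vertices cannot increase the utility — indeed it can only decrease $-|F|$ while leaving the component sum unchanged or, if $K$ merges components, the term $\sum_J(|V(J)|-3)$ can only change by merging, which for $|V|$-fixed components strictly decreases it). Hence Phase~1 runs for at most $n$ iterations. Testing whether $F$ is $(\alpha,N_{\max})$-critical requires examining all subsets $K\subseteq S\setminus F$ of size at most $N_{\max}$, of which there are $O(|S|^{N_{\max}})=O(n^{2N_{\max}})$; for each we compute the circle graph, its components, and the utility, all in polynomial time. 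Since $N_{\max}$ is a fixed constant, this is polynomial. Phase~2 invokes the polynomial-time minimal-completion procedure guaranteed by Lemma~\ref{minimal_completions_of_components} (applied to the family $\mathcal{L}$ of non-crossing circle components of $F$), so the whole algorithm is polynomial.

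For correctness and the size bound, first note $F$ is singleton-free throughout Phase~1 (it starts empty — vacuously singleton-free — and each update preserves this by the definition of an admissible $K$). So at termination $F$ is a singleton-free set, and $Q$ is a minimal completion of it; by Lemma~\ref{bound_alg}, $Q\cup F$ is feasible for \cvca{} with $|Q\cup F|\le n-3-U(F)$. It remains to show $U(F)\ge (1-\alpha)|V(F)|$, which combined with the previous inequality yields $|Q\cup F|\le n-3-(1-\alpha)|V(F)|$ as claimed. This is exactly the remark made in the text just before Algorithm~\ref{LocalSearchAlgorithm}, and the natural way to establish it is by a telescoping argument over the iterations of Phase~1.

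Concretely, let $\emptyset=F_0\subseteq F_1\subseteq\cdots\subseteq F_t=F$ be the successive values of the set $F$, where $F_{i}=F_{i-1}\cup K_i$. Since $K_i$ was chosen to be admissible, $U(F_i)-U(F_{i-1})\ge (1-\alpha)\,|V(F_i)\setminus V(F_{i-1})|$ for each $i=1,\dots,t$. Summing over $i$, the left side telescopes to $U(F_t)-U(F_0)=U(F)-U(\emptyset)=U(F)$ (note $U(\emptyset)=0$, as the empty set has no links and no components), while the right side sums to $(1-\alpha)\sum_{i=1}^t |V(F_i)\setminus V(F_{i-1})|=(1-\alpha)|V(F_t)|=(1-\alpha)|V(F)|$, using that the sets $V(F_i)\setminus V(F_{i-1})$ partition $V(F)$ (the chain is increasing, so $V(F_{i-1})\subseteq V(F_i)$). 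This gives $U(F)\ge (1-\alpha)|V(F)|$, and the lemma follows.

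The only subtlety — and the one place I would be careful — is the base case $U(\emptyset)=0$ and the degenerate possibility $t=0$ (the algorithm may find $\emptyset$ already $(\alpha,N_{\max})$-critical); in that case $F=\emptyset$, $|V(F)|=0$, and the bound $|Q|\le n-3$ must still hold, which it does by Proposition~\ref{corollary_completion} (or Lemma~\ref{minimal_completions_of_components} is vacuous here and one falls back to a minimal completion of a single link). I do not expect any real obstacle; the argument is essentially bookkeeping, with the one genuine ingredient being the telescoping identity, whose validity hinges on the monotonicity $V(F_{i-1})\subseteq V(F_i)$ that is immediate from $F$ being built by insertions only.
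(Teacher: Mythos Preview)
Your approach is essentially the paper's: the telescoping computation for $U(F)\ge(1-\alpha)|V(F)|$ and the appeal to Lemma~\ref{bound_alg} are identical to what the paper does, and the running-time sketch is in the same spirit.

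There is one factual error in your termination argument. You claim that merging circle components while keeping $V(F)$ fixed can only \emph{decrease} the term $\sum_{J\in\mathcal L}(|V(J)|-3)$. The opposite is true: if $L_1,L_2$ have disjoint vertex sets and a link $e$ merges them, the contribution goes from $(|V(L_1)|-3)+(|V(L_2)|-3)$ to $|V(L_1)|+|V(L_2)|-3$, an increase of $3$. So an admissible $K$ (say a single link connecting two components) can cover no new vertex and still raise the utility, meaning your ``each iteration covers a new vertex'' bound on the number of rounds is unjustified. The fix is immediate and simpler: since any admissible $K$ is a nonempty subset of $S\setminus F$, the set $F$ strictly grows in every iteration, so Phase~1 runs for at most $|S|=O(n^2)$ rounds. (The paper's own proof is terse here; it bounds the cost per iteration and leaves this observation implicit.)

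Your treatment of the degenerate case $t=0$ is also slightly off: Proposition~\ref{corollary_completion} gives $|Q|\le n-2$ for a minimal solution, not $n-3$, and Lemma~\ref{minimal_completions_of_components} requires $k\ge1$, so neither directly yields the bound $n-3$ when $F=\emptyset$. This edge case is not addressed in the paper's proof either; in practice it does not arise because for the parameter ranges used later one always has $N_{\max}\ge 2$ and a pair of crossing links in $S$ (guaranteed by feasibility) gives a first admissible step once $N_{\max}$ is large enough relative to $\alpha$.
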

\begin{proof} Let $s=|S|$. Every iteration of the first phase may be achieved by trying all $s^{O(N_{\max})}$ possible subsets of $S\setminus L$ of size at most $N_{\max}$, hence taking polynomial time. Finding a minimal completion $Q$ can also be done in polynomial time by iteratively removing links from $S\setminus F$ that are not needed.

Let $q$ be the number of iterations of the first phase and let $F_i$ be the set $F$ at the end of the $i$-th iteration (using $F_0=\emptyset$). Then, by a telescopic sum and using the fact that $V(F_{i-1})\subseteq V(F_i)$ for each $i$, we have 
\[\textstyle U(F)=\sum_{i=1}^q U(F_i)-U(F_{i-1}) \geq (1-\alpha)\sum_{i=1}^{q}|V(F_i)\setminus V(F_{i-1})| = (1-\alpha)|V(F)|.\] By Lemma \ref{bound_alg}, $Q\cup F$ is feasible and has size at most $n-3-(1-\alpha)|V(F)|$.
\end{proof}

Roughly speaking, using that the optimal solution has size at least $n/2$ (by Lemma~\ref{Every_pair_must_be_crossed}) the previous lemma guarantees good approximation factors as long as $|V(F)|$ is large enough (where $F$ is the $(\alpha, N_{\max})$-critical set at the end of Phase 1). On the other hand, we will prove that whenever $|V(F)|$ is small there are stronger lower bounds on the size of the optimal solution (see Figure~\ref{fig:intuition}). 

\begin{figure}[t]
	\begin{center}
		\resizebox{.68\textwidth}{!}{\includegraphics[scale=0.4]{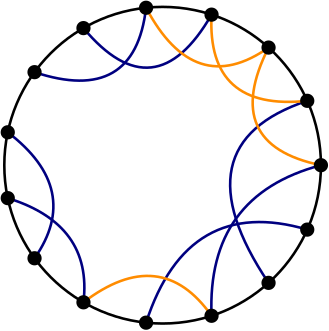}\hspace{70pt}\includegraphics[scale=0.52]{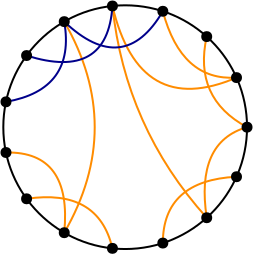}}		
	\end{center}
	\caption{Depiction of cases arising when applying the Local Search algorithm. Blue links correspond to $F$ while orange links correspond to a minimal completion. \textbf{Left:} When $|V(F)|$ is large, many vertices are covered by sets of links of high utility which pay for the size of the minimal completion.
	\textbf{Right:} When $|V(F)|$ is small, since $F$ is $(\alpha, N_{\max})$-critical, any set of links in $S\setminus F$ must have low utility, implying stronger lower bounds for $|OPT|$.}
	\label{fig:intuition}
\end{figure}

\subsection{Properties of Critical Sets}

Let $F$ be a fixed $(\alpha, N_{\max})$-critical set, with $N_{\max}\geq 1$ (for instance, the set $F$ at the end of Phase 1 of Algorithm~\ref{LocalSearchAlgorithm}). We also define $V'=V(C_n)\setminus V(F)$ and the set $\mathcal{L}$ of the non-crossing circle components associated to $F$. A first property we prove is that any link in $S\setminus F$ crosses at most one component $L\in \mathcal{L}$ and, if it does, it crosses at most one border chord of $L$. In particular, at least one endpoint will be in $V(L)\subseteq V(F)$. This intuitively implies that any relatively small set of remaining links in the instance cannot cover many vertices of the cycle and cross $F$ simultaneously (as in particular at least one of the endpoints of each such link will be in $V(L)\subseteq V(F)$). In what follows $P(J)$ denotes the set of border chords of a circle component $J$, and $P(F) := \bigcup_{J\in \mathcal{L}} P(J)$ denotes the \emph{perimeter} of $F$.

\begin{lemma} \label{links_can_cross_at_most_one_chord} Any link $e\in S$ can cross at most one chord in $P(F)$. If both endpoints of $e$ are in $V'$, then $e$ cannot cross any chord in $P(F)$. 
\end{lemma}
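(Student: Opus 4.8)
The plan is to argue by contradiction using the $(\alpha,N_{\max})$-criticality of $F$, taking the single-link set $K=\{e\}$ as a witness. If $e\in F$ the statement is immediate: $e$ belongs to some component $J\in\mathcal L$; it cannot cross a border chord of $J$ (Lemma~\ref{general_crossing}, Case~3, says $J$, hence $e\in J$, does not cross its own border chords), nor a border chord $p$ of another component $J'$ (otherwise Lemma~\ref{general_crossing} forces a link of $J'$ to cross $e\in J$, so $J,J'$ cross, contradicting that the components of $F$ are pairwise non-crossing); and both endpoints of such an $e$ lie in $V(F)$, so the second assertion is vacuous. So assume $e\in S\setminus F$ and, for contradiction, that the conclusion fails: either ($i$) $e$ crosses two distinct chords of $P(F)$, or ($ii$) both endpoints of $e$ lie in $V'$ and $e$ crosses at least one chord of $P(F)$. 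In both cases $e$ crosses a border chord of some $J\in\mathcal L$, so by Lemma~\ref{general_crossing} $J$ crosses $e$; hence $e$ crosses a link of $J$, and $F\cup\{e\}$ is still singleton-free.

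Let $\mathcal C=\{J'\in\mathcal L: J'\text{ crosses }e\}\ni J$ and $J^\star=\{e\}\cup\bigcup_{J'\in\mathcal C}J'$, the single new component of $F\cup\{e\}$ obtained by merging the components of $\mathcal C$ through $e$ (the components of $\mathcal L\setminus\mathcal C$ being unchanged). Writing $e=xy$ we have $V(J^\star)=\{x,y\}\cup\bigcup_{J'\in\mathcal C}V(J')$, and unwinding the definition of $U$,
\[
  U(F\cup\{e\})-U(F)=-1+|V(J^\star)|-\sum_{J'\in\mathcal C}|V(J')|+3(|\mathcal C|-1),
\]
while $|V(F\cup\{e\})\setminus V(F)|=|\{x,y\}\setminus V(F)|\le 2$. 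Since $\alpha>1/2$ we have $2(1-\alpha)<1$, so it is enough to prove $U(F\cup\{e\})-U(F)\ge 1$: then $U(F\cup\{e\})-U(F)\ge 1>2(1-\alpha)\ge(1-\alpha)|V(F\cup\{e\})\setminus V(F)|$, and $K=\{e\}$ (of size $1\le N_{\max}$, with $F\cup K$ singleton-free) contradicts the $(\alpha,N_{\max})$-criticality of $F$.

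Thus everything reduces to showing $|V(J^\star)|-\sum_{J'\in\mathcal C}|V(J')|+3(|\mathcal C|-1)\ge 2$. When $|\mathcal C|=1$, say $\mathcal C=\{J\}$, both endpoints of $e$ avoid $V(J)$ — in case ($ii$) because $x,y\notin V(F)$, and in case ($i$) because the two crossed chords of $P(F)$ must both be distinct border chords of $J$ (crossing a border chord of a component places that component in $\mathcal C$), so $x$ and $y$ lie in two different open zones of $J$ — whence $|V(J^\star)|=|V(J)|+2$ and the bound holds with equality. The case $|\mathcal C|\ge 2$ is where the real work lies: writing $W=\bigcup_{J'\in\mathcal C}V(J')$, one needs the ``overcounting'' estimate $\sum_{J'\in\mathcal C}|V(J')|-|W|\le 3(|\mathcal C|-1)$, which together with the fact that at least $|\{x,y\}\setminus V(F)|$ endpoints of $e$ lie outside $W$ yields the inequality after a short case check. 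The overcounting estimate itself follows from the laminar-type structure of pairwise non-crossing circle components — any two distinct components of $\mathcal L$ share at most two vertices, and when they share two these are the endpoints of a common border chord, because one of the two is contained in a single closed zone of the other and that zone meets the other's vertex set only in the two endpoints of the defining border chord — combined with the constraint that all components of $\mathcal C$ are crossed by the single link $e$, which forces them to be nested along $e$ in a tree-like way; the bound then comes out, for instance by induction on $|\mathcal C|$, by controlling how many vertices a newly incorporated component of $\mathcal C$ can share with the union of the previously incorporated ones. I expect this structural/combinatorial estimate — pinning down exactly how non-crossing circle components crossed by a common link can overlap — to be the main obstacle; the rest (the utility identity, localizing $x,y$ in open zones via Lemma~\ref{general_crossing}, and the numeric comparison with $2(1-\alpha)$) is routine bookkeeping.
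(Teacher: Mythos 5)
Your overall plan — argue by contradiction from $(\alpha,N_{\max})$-criticality using the single-link witness $K=\{e\}$, split on how many components $e$ crosses — is sound, and your treatment of $e\in F$ and of $|\mathcal C|=1$ matches what's needed. But the $|\mathcal C|\ge 2$ case, which you yourself flag as "where the real work lies," has a real gap: the overcounting estimate you propose, $\sum_{J'\in\mathcal C}|V(J')|-|W|\le 3(|\mathcal C|-1)$, is too weak to close the argument. Plugging it into your utility identity gives only $U(F\cup\{e\})-U(F)\ge -1+|\{x,y\}\setminus W|$, which can be $-1$. The bound you actually need (and which the paper establishes as Lemma~\ref{technical} in the appendix, by ordering the crossed border chords along $e$ and noting consecutive components overlap in at most two vertices) is $\sum_{J'\in\mathcal C}|V(J')|-|W|\le 2(|\mathcal C|-1)$. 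With that sharper constant one gets $U(F\cup\{e\})-U(F)\ge -1 + |\{x,y\}\setminus W| + (|\mathcal C|-1)$, which suffices after the case check.

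A second, smaller issue: reducing the goal to "$U(F\cup\{e\})-U(F)\ge 1$" is not always achievable, nor is it always what you need. When $|\mathcal C|=2$ and both endpoints of $e$ lie in $W$ (e.g.\ $x\in V(J_1)$, $y\in V(J_2)$, with $e$ crossing one border chord of each — this really occurs in your case (i)), even the sharp overcounting bound gives only $U(F\cup\{e\})-U(F)\ge 0$. That is still enough, because in this situation $|V(F\cup\{e\})\setminus V(F)|=0$, so criticality is contradicted by $\ge 0$. But your uniform target "$\ge 1$" would make this subcase look like it fails.

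For comparison: the paper avoids re-deriving any of this inside the proof of the lemma. It first isolates the statement "$e$ crosses at most one component of $\mathcal L$" as a separate Lemma~\ref{atmost1} (proved in the appendix via Lemma~\ref{technical} for $|\mathcal C|\ge 3$ and a direct two-component argument for $|\mathcal C|=2$, including the $\ge 0$ edge case). The main lemma then only needs the clean $|\mathcal C|\le 1$ case you handled correctly: if $e$ crosses two border chords of the single component $L$, both endpoints lie in distinct open zones of $L$, forcing $U(F\cup\{e\})-U(F)=1$, contradiction; and if both endpoints lie in $V'$ and $e$ crosses any chord of $P(F)$, it must cross two, again a contradiction. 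So your route is essentially the paper's with Lemma~\ref{atmost1} inlined — but the inlining is not finished, and the constant in the key structural estimate needs to be tightened from $3$ to $2$.
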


\begin{proof} Suppose that $e$ crosses at least one border chord in $P(F)$. Then, by Lemma \ref{general_crossing}, it crosses at least one circle component of $\mathcal{L}$ and it can furthermore be proved that it has to cross exactly one (see Lemma~\ref{atmost1} in the Appendix). So let $L\in \mathcal{L}$ be the component crossed by $e$. 

Note that $e$ can cross at most 2 border chords of $L$, and the only way for this to happen is that the extremes of $e$ are outside $V(L)$ and in different zones of $L$. In this case, $U(F\cup \{e\})-U(F)=-1+|V(L\cup \{e\})|-|V(L)|=1\geq |V(F\cup \{e\})-V(F)|(1-\alpha)$ which is not possible since $F$ was $(\alpha,N_{\max})$-critical. 

From here, any link $e$ in $S$ can cross at most one chord in $P(F)$. Furthermore, if both endpoints of $e$ are in $V'$ and $e$ crosses a chord in $P(F)$, then it would have to cross two by the previous paragraph, which is not possible. Hence, in the latter case, $e$ cannot cross any chord in $P$. \end{proof}

Another way to obtain lower bounds on the size of the optimal solution is to show that any maximal matching $M\subseteq \OPT[V']$ outside $V(F)$ must be large (since they cannot induce large circle components). We will argue about the number of links required to \emph{connect} $M$ as specified in the following definition.

\begin{definition}
    We say that a link $e \in S$ \textbf{connects} a set of links $X$ if $X \cup\{e\}$ forms a circle component.
\end{definition}

Since $F$ is $(\alpha, N_{\max})$-critical, $M$ must be formed by a considerable amount of circle components as mentioned before, which need to be connected so as to get a feasible solution; however, again since $F$ is $(\alpha, N_{\max})$-critical, each one of the links from the optimal solution required to connect these circle components cannot cross too many links from $M$ simultaneously. As a consequence, the number of such links cannot be too small, providing then a lower bound on the size of the optimal solution. The following lemma formalizes this previously described intuition and provides the required bounds to analyze the performance of the algorithm. 

\begin{lemma} \label{e_crosses_at_most_k_edges_in_M}
Suppose that $N_{\max}\geq \lceil (5-2\alpha)/(2\alpha -1)\rceil+1$. 
Let $M\subseteq S$ be any matching consisting of links with both endpoints in $V':=V(C_n)\setminus V(F)$. For any link $e\in S\setminus (M\cup F)$, the number of links from $M$ that $e$ connects is at most $\max\left\{0, \displaystyle \left\lceil \frac{5-2X(e)-V_{F}(e)- (4-V_{M}(e)-V_{F}(e))\alpha}{2\alpha -1} \right\rceil\right\}$, where
$X(e)$ is the indicator that $e$ crosses $F$, $V_M(e)=|V(\{e\})\cap V(M)|$ and $V_{F}(e)=|V(\{e\})\cap V(F)|$. In particular, no link of $S\setminus (M\cup F)$ connects more than $\ell:=\lceil (5-2\alpha)/(2\alpha-1)\rceil$ links in $M$.
\end{lemma}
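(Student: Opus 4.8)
The plan is to estimate the marginal utility gain $U(F\cup K) - U(F)$ for a carefully chosen set $K$ built around the link $e$, and then use the fact that $F$ is $(\alpha,N_{\max})$-critical to derive the desired upper bound on the number of links of $M$ that $e$ connects. Concretely, suppose $e$ connects $t$ links of $M$, say $f_1,\dots,f_t \in M$; since $M$ is a matching, these $2t$ endpoints are distinct. Consider the candidate set $K = \{e, f_1, \dots, f_t\}$ together, if $X(e)=1$, with a short link path inside the component $L\in\mathcal{L}$ that $e$ crosses, chosen so that $F\cup K$ becomes singleton-free and so that the circle components of $F\cup K$ are well understood. The first step is to check $|K|\le N_{\max}$: by the hypothesis $N_{\max}\ge \lceil(5-2\alpha)/(2\alpha-1)\rceil+1 = \ell+1$ and the ``In particular'' clause we only ever need $t\le\ell$, so $|K|$ stays within budget (this is a bit circular, so one argues by taking the minimal offending $t$, or simply notes that if $t>\ell$ one can restrict attention to $\ell$ of the connected links and derive a contradiction that way).

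The second step is the utility bookkeeping. Adding $e,f_1,\dots,f_t$ costs $|K|$ in the $-|F|$ term. On the gain side, $\{e,f_1,\dots,f_t\}$ is itself a circle component (since $e$ connects each $f_i$), contributing $|V(\{e,f_1,\dots,f_t\})| - 3$ to the $\sum_J(|V(J)|-3)$ term; the number of distinct vertices here is $2t+2 - V_M(e) - V_F(e)$ roughly — we must subtract $V_M(e)$ because an endpoint of $e$ lying in $V(M)$ may coincide with an endpoint of some $f_i$, and track $V_F(e)$ because endpoints of $e$ in $V(F)$ were already counted. If $X(e)=1$, this new component merges with the existing component $L$ (and possibly, if $e$ also touches $V(F)$ at a second already-covered vertex, with at most one more), and by Lemma~\ref{links_can_cross_at_most_one_chord} $e$ crosses exactly one component and at most one border chord of it, so at least one endpoint of $e$ is ``used up'' inside $V(L)$; merging two components $J_1,J_2$ into one changes the contribution from $(|V(J_1)|-3)+(|V(J_2)|-3)$ to $(|V(J_1)|+|V(J_2)|-3)$, i.e. a net change of $+3$, but the shared vertices must be discounted. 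Carrying this through and comparing against $U(F)$, one obtains
\[
U(F\cup K)-U(F) \;\ge\; -t - X(e) + \bigl(2t + 2 - V_M(e) - V_F(e)\bigr) + \bigl(3 - 2 + X(e)\cdot(\text{merge adjustments})\bigr),
\]
which after simplification should read $\ge (2\alpha-1)t - (5 - 2X(e) - V_F(e) - (4 - V_M(e) - V_F(e))\alpha) + \ldots$ once we also bound $|V(F\cup K)\setminus V(F)|$ from above (it is at most the number of new endpoints not in $V(F)$, namely $\le 2t + 2 - V_M(e) - V_F(e) - (\text{endpoints of }e\text{ in }V(F))$, and each $f_i$ contributes $2$ new vertices since $f_i\subseteq V'$ is disjoint from $V(F)$).

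The third step is to invoke criticality: if $K$ satisfies $F\cup K$ singleton-free and $U(F\cup K)-U(F)\ge (1-\alpha)|V(F\cup K)\setminus V(F)|$, and $|K|\le N_{\max}$, we contradict $(\alpha,N_{\max})$-criticality. Rearranging the inequality $U(F\cup K)-U(F) \ge (1-\alpha)|V(F\cup K)\setminus V(F)|$ into a bound on $t$ and isolating $t$ gives exactly $t \le \lceil (5 - 2X(e) - V_F(e) - (4 - V_M(e) - V_F(e))\alpha)/(2\alpha-1)\rceil$; if the numerator is negative we get $t\le 0$, explaining the $\max\{0,\cdot\}$. The ``in particular'' clause follows by plugging in the worst case $X(e)=0$, $V_F(e)=0$, $V_M(e)=2$ (i.e. both endpoints of $e$ inside $V(M)$), which yields $\ell = \lceil(5-2\alpha)/(2\alpha-1)\rceil$, and then $N_{\max}\ge\ell+1$ ensures $K$ is always a legal candidate in the criticality definition.

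I expect the main obstacle to be the precise accounting of how many genuinely new vertices appear and of how the new circle component interacts with the existing family $\mathcal{L}$ — in particular making sure we do not double-count a vertex that is simultaneously an endpoint of $e$, lies in $V(F)$, and is an endpoint of some $f_i$ (the last being impossible since $f_i\subseteq V'$, which actually simplifies matters), and correctly handling the case $X(e)=1$ where we must append a link path inside $L$ to keep $F\cup K$ singleton-free without inflating $|K|$ beyond $N_{\max}$ or changing the vertex count. Getting the constants in the numerator to land exactly on $5 - 2X(e) - V_F(e) - (4 - V_M(e) - V_F(e))\alpha$ will require care, but it is a finite case check over the values $X(e)\in\{0,1\}$, $V_F(e)\in\{0,1,2\}$, $V_M(e)\in\{0,1,2\}$ subject to $V_F(e)+V_M(e)\le 2$.
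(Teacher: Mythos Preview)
Your strategy is the paper's: take $K=\{e\}\cup M_e$ where $M_e$ collects the links of $M$ that $e$ connects (pruned to size $N_{\max}-1$ via a spanning tree of the circle graph of $M_e\cup\{e\}$ if necessary, just as you suggest), verify $F\cup K$ is singleton-free, compute the marginal utility in the two cases $X(e)\in\{0,1\}$, and contradict $(\alpha,N_{\max})$-criticality.

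One genuine correction: the ``short link path inside $L$'' you propose when $X(e)=1$ is unnecessary and would only waste the $N_{\max}$ budget. The set $F\cup M_e\cup\{e\}$ is already singleton-free as it stands, since $e$ crosses some link of $L\subseteq F$ and every $f_i\in M_e$ lies in the circle component $M_e\cup\{e\}$. Your merge-adjustment sketch is also more tangled than needed. The paper observes that when $X(e)=1$, Lemma~\ref{links_can_cross_at_most_one_chord} forces \emph{exactly} one endpoint of $e$ into $V(L)$, so the single merged component is $M_e\cup\{e\}\cup L$ and
\[
U(F\cup K)-U(F)=-(|M_e|+1)+|V(M_e\cup\{e\}\cup L)|-|V(L)|=|M_e|-V_{M_e}(e);
\]
when $X(e)=0$ the new component $M_e\cup\{e\}$ is disjoint from $\mathcal{L}$ and the gain is $|M_e|-V_{M_e}(e)-2$. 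Both cases read $|M_e|-V_{M_e}(e)-2+2X(e)$, and comparing with $(1-\alpha)\bigl(2|M_e|+2-V_{M_e}(e)-V_F(e)\bigr)$ (then replacing $V_{M_e}(e)$ by the larger $V_M(e)$) delivers the stated ceiling directly, with no case-by-case check over $(X(e),V_F(e),V_M(e))$ required.
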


\begin{proof} Let $\gamma = \max\left\{0, \displaystyle \left\lceil \frac{5-2X(e)-V_{F}(e)- (4-V_{M}(e)-V_{F}(e))\alpha}{2\alpha -1} \right\rceil\right\}$ be the bound in the Lemma. Note that we always have $\gamma\leq \ell \leq N_{\max}-1$.
   
Let $M_e$ be the set of links of $M$ that $e$ connects (this set contains the set of links of $M$ that $e$ crosses). Suppose for the sake of contradiction that $|M_e|> \gamma$. If $|M_e|\geq N_{\max{}}-1$ then redefine $M_e$ as any subset of $M_e$ with $N_{\max{}}-1$ links that $e$ connects (this can be done by pruning leaves in a covering tree of the circle graph of $M_e\cup \{e\})$. Let also $V_{M_e}(e)=|V(\{e\}\cap V(M_e)|$.
    
    Notice that $2\leq \gamma + 2 \leq |M_e\cup \{e\}|\leq N_{\max{}}$. We also have that $F\cup M_e \cup \{e\}$ is singleton-free. Furthermore $|V(M_e\cup \{e\}\cup F)\setminus V(F)|= 2|M_e|+|V(\{e\})\setminus V(M_e\cup F)|=2|M_e|+2-V_{M_e}(e)-V_{F}(e).$ Then, since $F$ is $(\alpha, N_{\max})$-critical, 
    we have that 
    $$U(M_e \cup \{e\} \cup F) - U(F) < (1-\alpha)( 2|M_e|+2-V_{M_e}(e)-V_{F}(e)).$$
    
    If $e$ does not cross $F$ then $M_e\cup \{e\}$ is a circle component not crossing $F$. Then \begin{eqnarray*}U(M_e\cup \{e\}\cup F)-U(F) & = & -|M_e\cup \{e\}|+|V(M_e\cup \{e\})|-3 \\ & = & -|M_e|-1 + (2|M_e|+2-V_{M_e}(e)) -3 \\ & = &  |M_e|-V_{M_e}(e)-2. \end{eqnarray*}

    If $e$ crosses $F$ then it does on a single circle component $L$, and we know that exactly one endpoint of $e$ is in $V(L)$. 
    Let $v$ be the other endpoint of $e$, and note that $V_{M_e}(e)=1$ if $v\in V(M_e)$ and $V_{M_e}(e)=0$  if $v\not\in V(M_e)$.
    Therefore, \begin{eqnarray*} U(M_e\cup \{e\}\cup F)-U(F) & = & -|M_e\cup \{e\}| + |V(M_e\cup \{e\}\cup L)|-|V(L)| \\ & = & -|M_e|-1+|\{v\}\cup V(M_e)| \\ & = & -|M_e|-1+2|M_e|+(1-V_{M_e}(e)) \\ & = & |M_e|-V_{M_e}(e). \end{eqnarray*}
    
    Summarizing we always have 
    $$|M_e|-V_{M_e}(e)+2-2X(e)<(1-\alpha)(2|M_e|+2-V_{M_e}(e)-V_{F}(e))$$
    
    Solving the inequality, we get
    $
        |M_e|<\frac{4-2X(e)-V_{F}(e)-\alpha(2-V_{M_e}(e)-V_{F}(e))}{2\alpha -1}
    $. Since $V_{M_e}(e)\leq V_M(e)$, the inequality above holds if we replace $V_{M_e}(e)$ by $V_M(e)$.
    
    Furthermore, since $|M_e|$ is an integer, this means that 
    \begin{align*}
        |M_e|&\leq \Bigl\lceil \frac{4-2X(e)-V_{F}(e)-\alpha(2-V_M(e)-V_{F}(e))}{2\alpha -1}-1 \Bigr\rceil\\&= \Bigl\lceil \frac{5-2X(e)-V_{F}(e)-\alpha(4-V_M(e)-V_{F}(e))}{2\alpha-1}\Bigr\rceil =\gamma < |M_e|,
    \end{align*}
which is a contradiction. \end{proof}

\subsection{Analysis of the Local Search algorithm}\label{sec:Analysis}

Let $F$ be an $(\alpha, N_{\max})$-critical set, $\OPT$ be an optimal solution for $\cvca$ and  $M$ be a maximal matching from $\OPT[V']$ (i.e. comprised of links from $OPT$ with both endpoints outside $V(F)$). Let $\ALG(F)$ be the set obtained by adding to $F$ a minimal completion. To ease notation, let $m = |M|/n$, $f = |V(F)|/n$, $\opt = \OPT/n$ and $\alg(F) = |\ALG(F)|/n$. In particular, if $(F,Q)$ is the pair obtained by our Local Search algorithm, then its approximation guarantee is $|\ALG(F)|/|\OPT| = \alg(F)/\opt$. The next lemma provides a first better-than-$2$ bound on this guarantee.

\begin{lemma}\label{lem:firstbound} If $N_{\max}\geq \ell +1$, where $\ell=\lceil \frac{5-2\alpha}{2\alpha-1}\rceil$, then $\opt\geq \max\{1/2, 1-f-m, m+ \frac{m}{\ell}\}$. Moreover, if $\alpha=3/4$, then $\alg(F)/\opt \leq 63/32 = 1.96875$.
\end{lemma}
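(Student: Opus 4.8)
The plan is to establish the three lower bounds on $\opt$ one at a time, and then combine them with the upper bound $\alg(F)\le 1-\tfrac14 f$ coming from Lemma~\ref{bound_for_alg} (specialized to $\alpha=3/4$) to extract the numerical ratio. The bound $\opt\ge 1/2$ is immediate from Lemma~\ref{Every_pair_must_be_crossed}. For $\opt\ge 1-f-m$ I would count links of $\OPT$ directly: $\OPT$ contains the $mn$ links of $M$, and in addition each of the $|V'|-2|M|=n(1-f-2m)$ vertices of $V'\setminus V(M)$ is, by feasibility of $\OPT$ and Lemma~\ref{Every_pair_must_be_crossed}, incident to some link of $\OPT$; by maximality of $M$ the other endpoint of that link lies in $V(M)\cup V(F)$, so the link has exactly one endpoint in $V'\setminus V(M)$. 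Hence this assignment is injective and produces links disjoint from $M$, giving $|\OPT|\ge mn+n(1-f-2m)=n(1-f-m)$.

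For $\opt\ge m+m/\ell$ I would exploit that $\OPT$, being a feasible edge-cover, is a circle component (Lemma~\ref{circle-component-characterization}), hence its circle graph is connected. A key preliminary observation is that no link of $M$ crosses any link of $F$: a link with both endpoints in $V'$ that crossed some link of a component $L\in\mathcal L$ would, by Lemma~\ref{general_crossing}, have to cross a border chord of $L$, contradicting Lemma~\ref{links_can_cross_at_most_one_chord}. Thus, for each connected component $C_i$ of the circle graph of $M$ (each a proper subset of $\OPT$, barring the degenerate case $\OPT=M$ addressed below), connectivity of $\OPT$ supplies a link $r\in\OPT\setminus C_i$ crossing a link of $C_i$, and this $r$ cannot lie in $M$ (else $C_i$ would not be a full component) nor in $F$ (by the observation), so $r\in\OPT\setminus(M\cup F)$. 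By Lemma~\ref{e_crosses_at_most_k_edges_in_M} the set $M_r$ of links of $M$ that $r$ connects satisfies $|M_r|\le\ell$, and since $r$ crosses a link of the connected set $C_i$ we get $C_i\subseteq M_r$. Therefore $M=\bigcup_i C_i$ is covered by the sets $M_r$ ranging over the at most $|\OPT|-|M|$ links $r\in\OPT\setminus(M\cup F)$, so $|M|\le\ell(|\OPT|-|M|)$, i.e.\ $|\OPT|\ge(1+1/\ell)|M|$.

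For the ``moreover'' part I would set $\alpha=3/4$, giving $\ell=\lceil(5-3/2)/(3/2-1)\rceil=7$, so that $\opt\ge\max\{1/2,\,1-f-m,\,8m/7\}$ and $\alg(F)\le 1-\tfrac14 f$. Assume for contradiction that $\alg(F)/\opt>63/32$; then $\opt<\tfrac{32}{63}(1-\tfrac14 f)=\tfrac{8(4-f)}{63}=:\beta$. From $\opt\ge 1/2$ we obtain $\beta>1/2$, i.e.\ $f<1/16$. From $1-f-m<\beta$ and $8m/7<\beta$ we obtain $1-f<\beta+m<\beta+\tfrac78\beta=\tfrac{15}{8}\beta$, hence $\beta>\tfrac{8(1-f)}{15}$; combining with $\beta=\tfrac{8(4-f)}{63}$ gives $15(4-f)>63(1-f)$, i.e.\ $f>1/16$, a contradiction. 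Therefore $\alg(F)/\opt\le 63/32$.

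The step I expect to be the main obstacle is the bound $\opt\ge m+m/\ell$: the reduction of ``connectivity of $\OPT$'' to the covering statement $M\subseteq\bigcup_r M_r$ must be set up carefully, it relies on the easily overlooked fact that links of $M$ never cross links of $F$, and one must separately dispose of the degenerate case $\OPT=M$ (a single circle component that is a perfect matching), for which the inequality literally fails; this case cannot occur once $n$ exceeds an absolute constant, since then $\OPT$ itself would be a singleton-free set of utility large enough to contradict $F$ being $(\alpha,N_{\max})$-critical. The remaining two bounds and the closing arithmetic are routine.
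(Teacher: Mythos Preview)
Your argument is correct and tracks the paper's proof closely for all three lower bounds on $\opt$; you are simply more explicit in places where the paper is terse (in particular, the observation that no link of $M$ can cross $F$, which the paper uses without stating). For the numerical ratio the paper proceeds slightly differently: it rewrites $\alg(F)\le 1-(1-\alpha)f$ as $\alpha\cdot 1+(1-\alpha)(1-f-m)+(1-\alpha)m$, divides by $\opt$, and bounds the three summands via $1\le 2\opt$, $1-f-m\le\opt$, and $m\le \opt\,\ell/(\ell+1)$ to obtain the closed form $(2\ell+1+\alpha)/(\ell+1)$, which is then minimized over $\alpha$ (the optimum being $\alpha=3/4$, $\ell=7$). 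Your contradiction argument reaches the same value $63/32$ directly for $\alpha=3/4$; the paper's route has the minor advantage of producing the general formula in $\alpha$.

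One small correction to your handling of the degenerate case $\OPT=M$: you cannot use $\OPT$ itself to contradict criticality of $F=\emptyset$, since $|\OPT|=n/2$ will typically exceed $N_{\max}$. Instead take a connected (in the circle graph) subset $K\subseteq\OPT$ of size $k=\lceil 3/(2\alpha-1)\rceil\le\ell<N_{\max}$, possible once $n/2\ge k$; since $\OPT$ is a matching, $|V(K)|=2k$ and $U(K)=k-3\ge(1-\alpha)\cdot 2k$, contradicting $(\alpha,N_{\max})$-criticality of $\emptyset$. The paper does not address this edge case explicitly either.
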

\begin{proof}
    
    By Lemma~\ref{Every_pair_must_be_crossed}, $\opt\geq 1/2$. To derive $\opt \geq 1 - f - m$, note that there is no link $e\in \OPT$ between two vertices of $V(C_n) \setminus (V(M) \cup V(F))$ because $M$ is a maximal matching. Therefore, each vertex in  $V(C_n) \setminus (V(M) \cup V(F))$ is covered by a different link in $\OPT\setminus M$. We conclude that $$\opt \geq m + \frac{1}{n}\cdot|V(C_n) \setminus (V(M) \cup V(F))| = m + (1 - 2m - f) = 1 - f - m.$$ 

    Since $\OPT$ is feasible, $\OPT \setminus M$ needs to connect all the links of $M$. By Lemma \ref{e_crosses_at_most_k_edges_in_M}, each link in $S \setminus (M \cup F)$ connects at most $\ell$ links of $M$. Then, $\OPT\setminus M$ contains at least $|M|/\ell$ links. Therefore, $\opt \geq m + m/\ell$, or equivalently $m\leq \opt\, \ell/(\ell+1)$. This proves the last bound for $\opt$.

    Now, by Lemma  \ref{bound_for_alg}, $\alg(F) \leq \alpha + (1-f-m)(1-\alpha) + m (1-\alpha)$. Using that $1\le 2\opt$, $(1-f-m)\leq \opt$ and $m\leq \opt\,\ell/(\ell +1)$, we get $\alg(F)/\opt \leq (2\ell + 1+\alpha)/(\ell+1)$. Among the values of $\alpha\in (1/2,1]$ with $(5-2\alpha)/(2\alpha - 1)$ integer, the smallest bound occurs when $\alpha=3/4$, or equivalently $\ell=7$, giving the claimed guarantee (see Figure~\ref{grafico_primera_cota}).
\end{proof}

    \begin{figure}[t]
	\begin{center}
		\includegraphics[scale=.67]{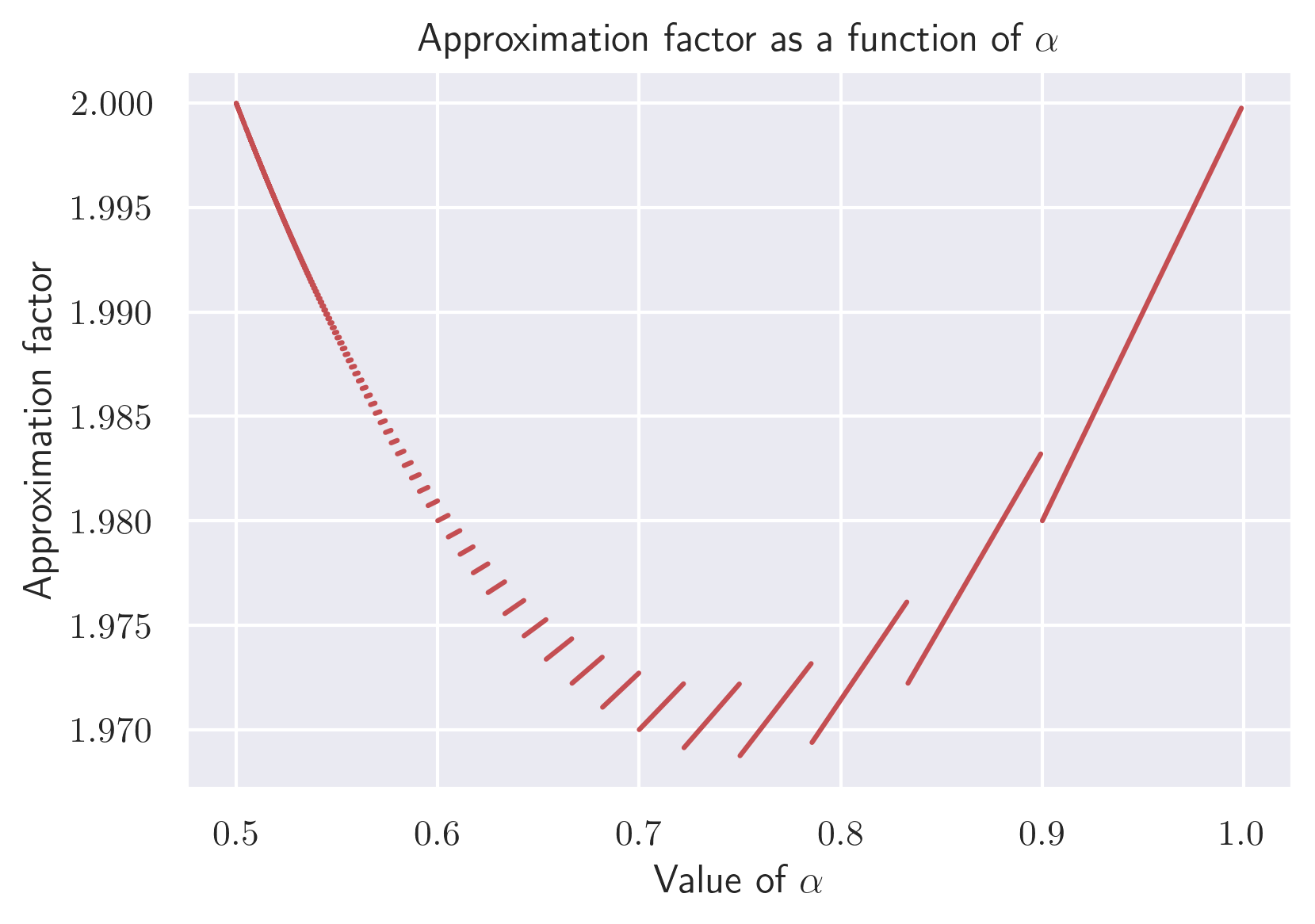}
	\end{center}
	\caption{Plot of the function $f(\alpha)=(2\ell + 1+\alpha)/(\ell+1)$ from the proof of Lemma~\ref{lem:firstbound}. We can observe that the function is minimized when $\alpha=3/4$.}
	\label{grafico_primera_cota}
\end{figure}

\section{Improving the algorithm}\label{sec:improving}

In the previous section we proved that our Local Search algorithm is a $(63/32)$-approximation. We will now describe how to refine the analysis and enhance the algorithm so as to obtain the claimed $1.8704$-approximation.

\subsection{Refining the analysis}\label{refined_analysis}

We start by building a linear programming formulation that provides a stronger lower bound for $|\OPT|$. To do so, we partition $\OPT$ into several subsets (one of them will be $M$), assigning a variable to each one of them representing its size, and provide relations and constraints that these sizes should satisfy.

Let $A=A(F)$ and $B=B(F)$ be the internal and border vertices in $V(F)$ respectively. Let $C=V(M)$ and $D=V(C_n)\setminus (A\cup B\cup C)$ so that $A, B, C$ and $D$ partition $[n]$. Let $\OPT_{ij}\subseteq \OPT\setminus M$, where $i,j\in \{A,B,C,D\}$, be the links with an endpoint in $i$ and the other in $j$. Furthermore, denote by $\OPT_{ij}^k$, where $k\in \{M,P,MP,R\}$, the set of links in $\OPT_{ij}$ that cross $M$ (but do not cross any chord for the perimeter $P$), $P$ (but do not cross $M$), both ($MP$) or none $(R)$ respectively. We get then the partition $\OPT = M \cup \bigcup_{ij,k} \OPT_{ij}^k$. 
The variable $x_{ij}^k$ will represent $|\OPT_{ij}^k|/n$. We will also define variables $x_A$, $x_B$ and $x_M$, which denote the number of vertices in $A$ and $B$ and the number of links in $M$ respectively. Let also $\ell_{ij}^k$, for $k\in \{M,MP\}$, be the upper bound (given by Lemmas \ref{links_can_cross_at_most_one_chord} and \ref{e_crosses_at_most_k_edges_in_M}) on the number of links of $M$ that a link $e\in\OPT_{ij}^k$ could cross. These values only depend on $\alpha$ (and not on $F$). We can then derive the linear program LP$(n,|V(F)|,\alpha)$ that is described in Figure \ref{fig:LP}. Notice that $n$, $|V(F)|$ and all $\ell_{ij}^k=\ell_{ij}^k(\alpha)$ are the only constants in the LP, so its value only depends on $|V(F)|$ and $\alpha$.

\begin{figure}[t]
    \centering
    \begin{minipage}{.53\textwidth}
        \centering
        \begin{align*}
(\text{LP})\qquad\quad  \min x_M + &\textstyle \sum_{ij,k} x_{ij}^k \\
\textstyle \sum_{k}(2x_{aa}^k + x_{ab}^k+x_{ac}^k+x_{ad}^k) & \geq x_A\\
\textstyle\sum_{k}(x_{ab}^k + 2x_{bb}^k+x_{bc}^k+x_{bd}^k) & \geq x_B\\
\textstyle\sum_{k}(x_{ad}^k + x_{bd}^k+x_{cd}^k +2x_{dd}^k)& \geq n-2x_M-x_A-x_B\\
\textstyle\sum_{ij}(x_{ij}^{P} + x_{ij}^{MP}) & \geq x_B/2\\
\textstyle\sum_{ij}(\ell_{ij}^Mx_{ij}^{M} + \ell_{ij}^{MP}x_{ij}^{MP}) & \geq x_M\\
x_A+x_B&\geq|V(F)|\\
2x_M+x_A+x_B&\leq n\\
x&\geq 0\\\
x_{ik}^k&=0 \text{ for $(ij,k)\in \mathcal{Z}$}.
\end{align*}
    
    \end{minipage}%
    \qquad 
    \begin{minipage}{0.4\textwidth}
\begin{align*}
    \ell_{ac}^{MP}&= \lceil (2-2\alpha)/(2\alpha -1)\rceil\\
    \ell_{ad}^{MP}&= \lceil (2-3\alpha)/(2\alpha -1)\rceil_+\\
    \ell_{bb}^M&= \lceil (3-2\alpha)/(2\alpha -1)\rceil\\
    \ell_{bc}^M&=\lceil (4-2\alpha)/(2\alpha -1)\rceil\\
    \ell_{bc}^{MP}&=\lceil (2-2\alpha)/(2\alpha -1)\rceil\\
    \ell_{bd}^M&=\lceil (4-3\alpha)/(2\alpha -1)\rceil\\
    \ell_{bd}^{MP}&=\lceil (2-3\alpha)/(2\alpha -1)\rceil_+\\
    \ell_{cc}^M&= \lceil (5-2\alpha)/(2\alpha -1)\rceil\\
    \ell_{cd}^M&=\lceil (5-3\alpha)/(2\alpha -1)\rceil.
\end{align*}
    \end{minipage}
\vspace{4pt}
The rest of $\ell_{ij}^k$ are equal to $0$ and $\mathcal{Z}=\{(aa,M)$, $(aa,MP)$, $(aa,P)$, $(ab,M)$, $(ab,MP)$, $(ac,M)$, $(ac,R)$, $(ad,M)$, $(ad,R)$, $(bb,MP)$, $(cc,P)$, $(cc,MP)$, $(cd,P)$, $(cd,MP)$, $(dd,M)$, $(dd,P)$, $(dd,MP)$, $(dd,R)\}$.
\caption{Linear program LP$(n,|V(F)|,\alpha)$}\label{fig:LP}
\end{figure}

A careful analysis of the optimal solution for this formulation provides the following improved bound, please refer to the Appendix for the details. In what follows, we define $$f_{\alpha}:=3/(6+4\lceil(4-3\alpha)/(2\alpha-1)\rceil+2\lceil(2-3\alpha)/(2\alpha-1)\rceil_+).$$

\begin{lemma} \label{lem:approximation_order_for_local_search_algorithm} 
Let $F$ be an $(\alpha,N_{\max{}})$-critical set with $N_{\max{}}\geq \lceil \frac{5-2\alpha}{2\alpha-1}\rceil + 1$. 
Then, it holds that $\alg(F)/\opt \leq 2-2(1-\alpha)f_\alpha.$
\end{lemma}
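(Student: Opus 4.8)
The plan is to obtain the bound by analyzing the linear program $\text{LP}(n,|V(F)|,\alpha)$ from Figure~\ref{fig:LP}, which by construction yields a valid lower bound on $|\OPT|/n = \opt$. First I would justify the LP relaxation itself: every constraint must be shown to hold for the true partition of $\OPT$. The first three inequalities are edge-cover constraints saying that every vertex in $A$, $B$, and $D$ is incident to enough links (recalling that vertices in $D$ have no $\OPT$-link between two of them by maximality of $M$, and that $M$-links cover exactly $2x_M n$ vertices of $C$); the fourth says that at least half the border vertices must be covered by a link crossing the perimeter (since by Lemma~\ref{links_can_cross_at_most_one_chord} a link crossing a chord of $P(F)$ has exactly one endpoint outside $V(F)$, and every border vertex lies on a border chord that must be crossed); the fifth is the connectivity constraint on $M$ coming from Lemma~\ref{e_crosses_at_most_k_edges_in_M} together with Lemma~\ref{links_can_cross_at_most_one_chord}, using the sharper per-type bounds $\ell_{ij}^k$ obtained by plugging the correct values of $X(e)$, $V_F(e)$, $V_M(e)$ into that lemma; the sixth and seventh record $|V(F)| = x_A n + x_B n$ and $2x_M n + x_A n + x_B n \le n$; and the set $\mathcal{Z}$ lists the type-combinations that are structurally impossible (e.g.\ $x_{dd}^{M}=x_{dd}^{P}=x_{dd}^{MP}=0$ since a link with both endpoints in $D\subseteq V'$ crosses neither $M$ nor $P(F)$ by Lemma~\ref{links_can_cross_at_most_one_chord} and $M$-maximality, $x_{ac}^M=0$ since a link from $A$ to $C$ reaching into a component must cross $P(F)$, etc.). These verifications are case-by-case but routine given the structural lemmas.

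Next I would combine the LP lower bound with the algorithmic upper bound. Lemma~\ref{bound_for_alg} gives $\alg(F) \le 1 - (1-\alpha)f$ where $f = |V(F)|/n = x_A + x_B$ (using $n-3-(1-\alpha)|V(F)|$ divided by $n$ and dropping the $-3/n$). So it suffices to show that for every feasible value $f \in [0,1]$ of $|V(F)|/n$ the LP optimum $\opt(f) := \text{LP}(n,|V(F)|,\alpha)/n$ satisfies
\[
\frac{1-(1-\alpha)f}{\opt(f)} \;\le\; 2 - 2(1-\alpha)f_\alpha .
\]
When $f$ is large this is easy because $\opt \ge 1/2$ is already enough: $\frac{1-(1-\alpha)f}{1/2} = 2 - 2(1-\alpha)f$, which is at most $2 - 2(1-\alpha)f_\alpha$ precisely when $f \ge f_\alpha$. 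The real work is the regime $f < f_\alpha$, where I need a lower bound on $\opt(f)$ that grows as $f$ shrinks. Here I would exhibit an explicit dual solution (or equivalently, a clever nonnegative combination of the LP constraints) producing a bound of the form $\opt(f) \ge \tfrac12(1 - (1-\alpha)f)/(1-(1-\alpha)f_\alpha)$, or more concretely trace through which primal constraints are tight at the optimum. The key tension is between the edge-cover constraints (which force $\opt$ up when few vertices are already covered by $F$, since $M$ is small so $D$ is large and $1-2x_M-x_A-x_B$ vertices of $D$ each need their own $\OPT$-link) and the connectivity constraint on $M$ (which forces extra links whenever $M$ is large); balancing these is exactly what makes $f_\alpha$, with its denominator $6 + 4\lceil(4-3\alpha)/(2\alpha-1)\rceil + 2\lceil(2-3\alpha)/(2\alpha-1)\rceil_+$, come out.

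Concretely I would argue: at an optimal LP solution with $f$ small, the binding constraints are the three cover constraints, the perimeter constraint, and the $M$-connectivity constraint, and the worst case concentrates the $\OPT\setminus M$ links into the types with the largest $\ell_{ij}^k$ coefficients — namely $\OPT_{cc}^M$ and $\OPT_{cd}^M$ with bounds $\lceil(5-2\alpha)/(2\alpha-1)\rceil$ and $\lceil(5-3\alpha)/(2\alpha-1)\rceil$ — while the vertices of $D$ are covered as cheaply as possible. Summing the cover constraint for $D$ with an appropriate multiple of the $M$-connectivity constraint, and using $x_A + x_B = f$, yields a lower bound on $x_M + \sum x_{ij}^k$ that simplifies to the stated form; setting it equal to the upper bound $1-(1-\alpha)f$ at $f = f_\alpha$ pins down $f_\alpha$. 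The main obstacle I anticipate is not any single inequality but the bookkeeping: correctly identifying which of the many $\ell_{ij}^k$ values are active in the extremal solution and confirming that no other constraint combination gives a weaker bound, i.e.\ that the claimed dual multipliers are genuinely feasible for the LP dual. Once the dual certificate is in hand the inequality $\alg(F)/\opt \le 2 - 2(1-\alpha)f_\alpha$ follows by the two-regime split described above, and I would remark that this analysis is tight in the sense that it is what drives the later optimization over $\alpha$ to the value giving $1.8704$.
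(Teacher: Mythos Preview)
Your high-level plan matches the paper exactly: verify that the partition of $\OPT$ yields a feasible point of LP$(n,|V(F)|,\alpha)$, use Lemma~\ref{bound_for_alg} for the upper bound $\alg(F)\le 1-(1-\alpha)f$, and split into the regimes $f\ge f_\alpha$ (where $\opt\ge 1/2$ suffices) and $f<f_\alpha$ (where an explicit dual certificate for the LP is needed). The paper carries this out by writing down optimal primal and dual solutions (Lemma~\ref{lema:lpvalue}) and then bounding the ratio via the monotonicity of $\psi_\alpha(x)=(3-x)/W(x,\alpha)$.

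There is, however, a concrete mistake in your identification of the extremal LP solution. You guess that the worst case concentrates mass on $\OPT_{cc}^M$ and $\OPT_{cd}^M$ because $\ell_{cc}^M$ and $\ell_{cd}^M$ are the largest coefficients in the $M$-connectivity constraint. But look at the definition of $f_\alpha$: it depends only on $\ell_{bd}^M=\lceil(4-3\alpha)/(2\alpha-1)\rceil$ and $\ell_{ad}^{MP}=\lceil(2-3\alpha)/(2\alpha-1)\rceil_+$; the value $\ell_{cc}^M$ does not appear at all. The paper's optimal primal puts $x_A=f/3$, $x_B=2f/3$ and uses the types $ad$, $bd^M$, $cd^M$, with $x_{cc}^M=0$. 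The reason is the interaction with the \emph{other} constraints you listed but then ignored in the extremal analysis: links of type $cc$ contribute nothing to covering $A$, $B$, or $D$, and nothing to the perimeter constraint, so maximizing efficiency on the $M$-constraint alone is not what the adversary does. The binding trade-off is that every vertex of $A$ and $B$ must be covered, the $|B|/2$ perimeter chords must be crossed, and the cheapest way to do all of this while also connecting $M$ is via $ad$- and $bd$-links; this is precisely why the denominator of $f_\alpha$ is $2S(\alpha)=6+4\ell_{bd}^M+2\ell_{ad}^{MP}$. So your anticipated ``bookkeeping obstacle'' is real, and your current guess would lead you to the wrong dual multipliers and an incorrect (or unprovable) value for the LP optimum. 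Once you correct the active set to $\{ad,bd^M,cd^M\}$ and solve accordingly, the rest of your argument goes through as the paper's does.
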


We can now prove a first improvement on the approximation ratio of the Local Search algorithm.

\begin{theorem} The approximation ratio of the Local Search Algorithm (Algorithm~\ref{LocalSearchAlgorithm}) is at most $233/121\approx 1.92562$. \end{theorem}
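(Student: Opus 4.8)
The plan is to instantiate Lemma~\ref{lem:approximation_order_for_local_search_algorithm} at a judiciously chosen value of $\alpha$ and then simply evaluate the resulting bound. Concretely, I would first observe that the Local Search algorithm, for any admissible pair $(\alpha, N_{\max})$ with $N_{\max} \geq \ell+1$ where $\ell = \lceil (5-2\alpha)/(2\alpha-1)\rceil$, returns a solution $\ALG(F)$ with $\alg(F)/\opt \leq 2 - 2(1-\alpha)f_\alpha$ by Lemma~\ref{lem:approximation_order_for_local_search_algorithm}. So the whole task reduces to choosing $\alpha$ to make $2-2(1-\alpha)f_\alpha$ small, where $f_\alpha = 3/(6 + 4\lceil(4-3\alpha)/(2\alpha-1)\rceil + 2\lceil(2-3\alpha)/(2\alpha-1)\rceil_+)$.

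Next I would analyze the function $\alpha \mapsto 2-2(1-\alpha)f_\alpha$. The key structural fact is that $f_\alpha$ is piecewise constant in $\alpha$ (both ceilings jump only at finitely many breakpoints), so on each piece the bound is linear and decreasing in $\alpha$, meaning one wants $\alpha$ as large as possible within each piece; and one should also weigh the benefit of a larger $f_\alpha$ (smaller ceilings) against a smaller multiplier $(1-\alpha)$. I expect the optimum over this discrete/piecewise landscape to occur at $\alpha = 10/11$: then $2\alpha - 1 = 9/11$, so $(4-3\alpha)/(2\alpha-1) = (44/11 - 30/11)/(9/11) = 14/9$, whose ceiling is $2$; and $(2-3\alpha)/(2\alpha-1) = (22/11-30/11)/(9/11) = -8/9 < 0$, so $\lceil \cdot \rceil_+ = 0$. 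Hence $f_{10/11} = 3/(6 + 8 + 0) = 3/14$, and $2 - 2(1-\alpha)f_\alpha = 2 - 2\cdot(1/11)\cdot(3/14) = 2 - 6/154 = 2 - 3/77 = 151/77$. I should double-check whether the claimed $233/121 \approx 1.92562$ matches this; $151/77 \approx 1.96104$, which is larger, so the intended $\alpha$ must be different — I would instead try $\alpha$ making $233/121$ come out, e.g. testing values where $(4-3\alpha)/(2\alpha-1)$ has ceiling giving denominator consistent with $121$; with $2 - 2(1-\alpha)\cdot 3/(6+4c)$ set equal to $233/121$ one needs $2(1-\alpha)\cdot 3/(6+4c) = 9/121$, and picking $\alpha = 10/11$, $1-\alpha = 1/11$, forces $6/(11(6+4c)) = 9/121$, i.e. $6+4c = 6\cdot 121/(11\cdot 9) = 726/99 = 22/3$, not integral, so I would instead pin down the correct $\alpha$ by scanning the breakpoints of the two ceilings in $(1/2,1]$ and reporting the minimizer — I expect it to be $\alpha = 10/11$ with the correct reading of $f_\alpha$, or a nearby rational, and then verify $N_{\max} = \ell + 1$ is a legitimate finite constant for that $\alpha$.

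Having fixed the optimal $\alpha$, the final step is bookkeeping: confirm that $\ell = \lceil (5-2\alpha)/(2\alpha-1)\rceil$ is finite (it is, since $\alpha > 1/2$), set $N_{\max} = \ell+1$ so that the hypotheses of Lemmas~\ref{bound_for_alg}, \ref{e_crosses_at_most_k_edges_in_M} and \ref{lem:approximation_order_for_local_search_algorithm} all hold, invoke Lemma~\ref{bound_for_alg} for polynomial running time, and conclude $\alg(F)/\opt \leq 2 - 2(1-\alpha)f_\alpha = 233/121$ by plugging the chosen $\alpha$ into the formula. Since the approximation guarantee is exactly $|\ALG(F)|/|\OPT| = \alg(F)/\opt$, this gives the theorem.

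The main obstacle I anticipate is purely arithmetic rather than conceptual: correctly identifying the minimizing $\alpha$ among the finitely many candidate breakpoints of the two nested ceiling functions, and carefully tracking the $\lceil\cdot\rceil_+$ (truncation at zero) in the second term of the denominator of $f_\alpha$, since a sign error there changes the denominator and hence the final fraction. Everything else — the existence of the algorithm, its polynomial runtime, and the bound $2-2(1-\alpha)f_\alpha$ — is already supplied by Lemma~\ref{lem:approximation_order_for_local_search_algorithm} and Lemma~\ref{bound_for_alg}, so no new ideas are needed beyond the optimization over the single parameter $\alpha$.
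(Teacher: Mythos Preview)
Your overall plan---invoke Lemma~\ref{lem:approximation_order_for_local_search_algorithm} and optimize the single parameter $\alpha$---is exactly the paper's approach, and the final bookkeeping (fixing $N_{\max}=\ell+1$ and citing Lemma~\ref{bound_for_alg} for polynomial time) is correct. However, the execution of the optimization step contains two errors that prevent you from landing on $233/121$.

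First, on each piece where $f_\alpha$ is constant the bound $2-2(1-\alpha)f_\alpha$ is \emph{increasing} in $\alpha$, not decreasing: the factor $(1-\alpha)$ shrinks as $\alpha$ grows, so the subtracted term shrinks. Hence within each piece you want the \emph{left} endpoint, i.e.\ the smallest $\alpha$ at which the ceiling $\lceil(4-3\alpha)/(2\alpha-1)\rceil$ takes its value. These left endpoints are exactly the rationals $\alpha=(4+k)/(2k+3)$ for $k\in\mathbb{N}$, where $(4-3\alpha)/(2\alpha-1)=k$ exactly.

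Second, $\alpha=10/11$ is not one of those breakpoints. The correct minimizer is $\alpha^*=8/11$ (the case $k=4$): there $2\alpha-1=5/11$, $(4-3\alpha)/(2\alpha-1)=4$ so the first ceiling equals $4$, and $(2-3\alpha)/(2\alpha-1)=-2/5<0$ so the truncated ceiling is $0$. Thus $f_{8/11}=3/(6+16+0)=3/22$, and
\[
2-2(1-\tfrac{8}{11})\cdot\tfrac{3}{22}=2-2\cdot\tfrac{3}{11}\cdot\tfrac{3}{22}=2-\tfrac{9}{121}=\tfrac{233}{121}.
\]
A quick comparison with the neighboring breakpoints $\alpha=7/9$ (giving $2-2\cdot\tfrac{2}{9}\cdot\tfrac{1}{6}=2-\tfrac{2}{27}$) and $\alpha=9/13$ (giving $2-2\cdot\tfrac{4}{13}\cdot\tfrac{3}{26}=2-\tfrac{12}{169}$) confirms that $8/11$ is optimal. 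With these corrections your plan goes through verbatim.
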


\begin{proof} 
The expression $f_{\alpha}$ is an increasing step function of $\alpha$ with jumps and breakpoints on the countably infinite set $\mathcal{A}$ in which $(2 - 3\alpha)/(2\alpha-1)$ or  $(4-3\alpha)/(2\alpha-1)$ are integers. Thus, $(1-\alpha)f_{\alpha}$ is the area of the maximal rectangle with base $[\alpha, 1]$ lying below the plot of $f_{\alpha}$, and our guarantee is $2$ minus twice this area, which is maximized on the point $\alpha^*=8/11 \in \mathcal{A}$ (see Figure~\ref{fig:refined-plot}).
\end{proof}

\subsection{A Refined Local Search algorithm}\label{sec:Refinement}

The first phase of Algorithm~\ref{LocalSearchAlgorithm} stops when $F$ is $(\alpha,N_{\max{}})$-critical. 
By increasing $\alpha$ a bit at the end of the phase we may be able to find a set $F'$ with larger utility than $F$, achieving a better solution at the end. We can exploit this intuition and obtain a refined Local Search algorithm. 

\begin{algorithm}[h]
	\caption{Refined Local Search algorithm}\label{CombinedAlgorithm}
	\hspace*{\algorithmicindent} \textbf{Input:} 
	Instance $(C_n,S)$ of $\cvca$. $N_{\max} \in \mathbb{N}$. Array of values $A=\{\alpha_1,\alpha_2,\cdots,\alpha_s, \alpha_{s+1}\}$, where $1/2 < \alpha_1 < \alpha_2 < \cdots < \alpha_s \leq \alpha_{s+1}=1$,
	\begin{algorithmic}[1]
	\State $F_0\gets \emptyset$.
	\For {each $i\in [s]$}
		\State Starting from the current set $F_{i-1}$, run iterations of the first phase of the Local Search algorithm with $\alpha=\alpha_i$ to find an $(\alpha_i,N_{\max{}})$-critical set $F_i$ containing $F_{i-1}$.
	\EndFor 
		\State Find a minimal completion $Q$ of $F_s$
		\State Return $(Q,F_s)$.
	\end{algorithmic}
\end{algorithm}

This allows to prove a generalized version of Lemma~\ref{lem:approximation_order_for_local_search_algorithm}.

\begin{lemma} \label{lem:multiplealfa} Let $\ALG$ be the output of the Refined Local Search algorithm for $N_{\max{}}\geq \lceil \frac{5-2\alpha_1}{2\alpha_1-1}\rceil + 1$, and let $\varepsilon>0$. It holds that 
$|\ALG|/|\OPT| \leq 2 - 2\sum_{i=2}^{m+1}(\alpha_{i} -\alpha_{i-1})f_{\alpha_j}$.
\end{lemma}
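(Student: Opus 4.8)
The plan is to run the same linear-programming analysis that underlies Lemma~\ref{lem:approximation_order_for_local_search_algorithm}, but applied to the final set $F_s$ produced by the Refined Local Search algorithm, and then to use the telescoping structure of the $\alpha_i$'s to replace the single-rectangle estimate $(1-\alpha)f_\alpha$ by a staircase sum of rectangles. First I would observe that $F_s$ is $(\alpha_s, N_{\max})$-critical, so Lemma~\ref{links_can_cross_at_most_one_chord} and Lemma~\ref{e_crosses_at_most_k_edges_in_M} apply to it with $\alpha=\alpha_s$; in particular the LP$(n,|V(F_s)|,\alpha_s)$ bound is valid, giving $\alg(F_s)/\opt \le 2 - 2(1-\alpha_s)f_{\alpha_s}$. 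That alone, however, only reproduces the single-$\alpha$ guarantee. The extra leverage comes from the intermediate sets: for each $i$, after the $i$-th iteration the set $F_i$ is $(\alpha_i,N_{\max})$-critical, and since $F_i \subseteq F_{i+1} \subseteq \dots \subseteq F_s$ with the subsequent phases only adding links whose marginal utility ratio is governed by the larger thresholds $\alpha_{i+1},\dots,\alpha_s$, the utility accumulated by $F_s$ can be lower-bounded by a telescoping sum $\sum_i (\text{new vertices in phase } i)\cdot(1-\alpha_i)$. Feeding this stronger lower bound on $U(F_s)$ — equivalently a stronger version of Lemma~\ref{bound_for_alg} of the form $\alg(F_s) \le n - 3 - \sum_i (1-\alpha_i)|V(F_i)\setminus V(F_{i-1})|$ — into the LP analysis yields the staircase improvement.

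Concretely, I would argue as follows. Let $v_i = |V(F_i)\setminus V(F_{i-1})|/n$, so $f := |V(F_s)|/n = \sum_{i=1}^{s} v_i$. The refined completion bound gives $\alg(F_s) \le 1 - \tfrac{3}{n} + \alpha_1 + \sum_{i=2}^{s}(\alpha_1 - \alpha_i)\,v_i \cdot(-1)$... more cleanly, $\alg(F_s) \le 1 - \sum_{i}(1-\alpha_i) v_i + o(1)$. Now I run the LP bound not just with the single slab $[\alpha_s,1]$ but slab-by-slab: on each interval $[\alpha_{i-1},\alpha_i]$ the worst-case configuration of $\OPT$ relative to $F_s$ still satisfies the constraints of LP$(n,|V(F_s)|,\alpha_{i-1})$ (monotonicity of the $\ell_{ij}^k$ thresholds in $\alpha$ means a set critical for $\alpha_s$ is in particular constrained by the weaker $\alpha_{i-1}$-constraints), and the marginal gain from raising the effective threshold from $\alpha_{i-1}$ to $\alpha_i$ over the vertices covered during that window contributes a rectangle of area $(\alpha_i - \alpha_{i-1}) f_{\alpha_j}$ (with the appropriate index $j$, namely $j=i$ or $j=i-1$ depending on which breakpoint controls that slab) to the savings. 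Summing these rectangles and doubling — the factor $2$ coming as before from $|\OPT|\ge n/2$ — produces the claimed $2 - 2\sum_{i=2}^{m+1}(\alpha_i - \alpha_{i-1})f_{\alpha_j}$.

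The main obstacle I anticipate is making the slab-by-slab LP argument rigorous: one must check that the optimal dual solution (or the explicit primal witness) constructed in the proof of Lemma~\ref{lem:approximation_order_for_local_search_algorithm} decomposes additively as $\alpha$ sweeps from $\alpha_1$ to $1$, i.e.\ that the function $\alpha \mapsto$ (LP-based savings rate) is exactly the step function $f_\alpha$ whose area under the staircase $\{\alpha_1 < \dots < \alpha_s\}$ equals $\sum_i (\alpha_i-\alpha_{i-1})f_{\alpha_j}$. This is plausible precisely because $f_\alpha$ was already identified in the single-$\alpha$ theorem as "the area of the maximal rectangle with base $[\alpha,1]$ lying below the plot of $f_\alpha$", so the multi-$\alpha$ version simply upgrades one maximal rectangle to a union of rectangles under the same curve; but one has to confirm that the refined completion bound on $\alg(F_s)$ interacts with the LP lower bound on $\opt$ in a way that genuinely preserves each slab's contribution rather than only the largest. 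The remaining steps — verifying $N_{\max} \ge \lceil(5-2\alpha_1)/(2\alpha_1-1)\rceil+1$ suffices (it is the most permissive requirement, since $\alpha_1$ is smallest), and absorbing the $\tfrac{3}{n}$ and rounding terms into the $\varepsilon$ — are routine. I would therefore structure the proof as: (i) establish the refined completion inequality for $F_s$ via telescoping utility; (ii) record that $F_s$ satisfies the LP constraints for every $\alpha \in \{\alpha_1,\dots,\alpha_s\}$; (iii) combine the two, slab by slab, invoking the area interpretation of $f_\alpha$ from Lemma~\ref{lem:approximation_order_for_local_search_algorithm}; (iv) collect error terms into $\varepsilon$.
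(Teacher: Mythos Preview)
Your step (i) --- the telescoping utility bound $U(F_s)\ge \sum_{i=1}^{s}(1-\alpha_i)\bigl(|V(F_i)|-|V(F_{i-1})|\bigr)$ --- is correct and is also where the paper starts. But the crucial algebraic move you omit is an Abel summation: using $\alpha_{s+1}=1$ and $V(F_0)=\emptyset$, that sum rewrites as $U(F_s)\ge n\sum_{j=1}^{s}(\alpha_{j+1}-\alpha_j)f^{j}$ with $f^{j}=|V(F_j)|/n$. The slabs that matter are therefore indexed by the \emph{cumulative} coverages $f^{j}$, not by the increments $v_i$ you introduce; this reshaping is what makes the term $(3-f^{j})/\opt$ appear for each $j$.

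The genuine gap is in your step (ii). You propose to apply the LP only to the final set $F_s$, arguing by monotonicity that $F_s$ is $(\alpha_j,N_{\max})$-critical for every $j\le s$ and hence LP$(n,|V(F_s)|,\alpha_j)$ is a valid relaxation. That is true, but it is not what the argument needs. What the paper actually uses is the LP bound for each \emph{intermediate} set $F_j$: since $F_j$ is $(\alpha_j,N_{\max})$-critical, Lemma~\ref{lema:lpvalue} gives $\opt\ge \max\{1/2,\,W(f^{j},\alpha_j)\}$, and from this one deduces $(3-f^{j})/\opt\le 6-2f_{\alpha_j}$ by a two-case split on whether $f^{j}\gtrless f_{\alpha_j}$ (the small-$f^{j}$ case uses $\psi_{\alpha_j}(f^{j})\le \psi_{\alpha_j}(f_{\alpha_j})=6-2f_{\alpha_j}$). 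Your bound $\opt\ge W(f^{s},\alpha_j)$ is strictly weaker, because $W(\cdot,\alpha_j)$ is decreasing and $f^{s}\ge f^{j}$. Concretely, if for some early $j$ the coverage $f^{j}$ is well below $f_{\alpha_j}$ while $f^{s}$ is already large (so your LP collapses to $\opt\ge 1/2$), then $(3-f^{j})/\opt$ can be as big as $6-2f^{j}$, which exceeds the target $6-2f_{\alpha_j}$. So the ``slab-by-slab LP decomposition as $\alpha$ sweeps'' that you flag as the main obstacle does not exist; the correct mechanism is one LP bound per intermediate critical set $F_j$, not a single set $F_s$ with varying $\alpha$. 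Once you make that switch, no additivity of dual solutions and no $\varepsilon$-absorption is needed: summing $(\alpha_{j+1}-\alpha_j)(6-2f_{\alpha_j})$ directly yields $2-2\sum_j(\alpha_{j+1}-\alpha_j)f_{\alpha_j}$.
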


\begin{proof} Consider the sets $F_0,F_1,\dots, F_s$ created by the Refined Local Search algorithm and denote $f^i = |F_i|/n$. For every $i\in [s]$, $F_{i}$ was obtained from $F_{i-1}$ by iteratively adding sets such that the marginal utility of that set is at least $(1-\alpha_i)$ times the number of newly added vertices. We conclude that $U(F_i)-U(F_{i-1})\geq (1-\alpha_i)(|V(F_i)|-|V(F_{i-1})|)$. Using that $U(F_0)=0=|V(F_0)|$ and $\alpha_s=1$, we conclude that 
$U(F_s)\geq \sum_{j=1}^s(1-\alpha_j)(|V(F_s)|-|V(F_{i-1}|)=n\sum_{j=1}^{s}(\alpha_{j+1}-\alpha_{j})f^j$. 

Using that  $3n=3n\alpha_1 + 3n\sum_{j=1}^s(\alpha_{j+1}-\alpha_j)$ and Lemma~\ref{bound_alg}, we get that
\begin{align*}
\frac{|\ALG|}{|\OPT|}\leq \frac{-2n+(3n-U(F_s))}{n\,\opt} \leq 2(-2+3\alpha_1)+\sum_{j=1}^s(\alpha_{j+1}-\alpha_j)\frac{(3-f^j)}{\opt}.
\end{align*}
Note that if $f^j\geq f_{\alpha_j}$, then $(3-f^j)/\opt \leq (3-f_{\alpha_j})/\opt \leq 6-2f_{\alpha_j}$.
On the other hand, if $f^j\leq f_{\alpha_j}$, then as $f^j=|V(F_j)|/n$ and $F_j$ is $(\alpha_j,N_{\max{}})$-critical, it is possible to prove that $(3-f^j)/\opt \leq 6-2f_{\alpha_j}$ (see Lemma~\ref{lema:lpvalue} for the details).

Using these bounds on the expression for $|\ALG|/|\OPT|$ and that $6=6\alpha +6\sum_{j=1}^s(\alpha_{j+1}-\alpha_j)$ we get,
\begin{align*}
\frac{|\ALG|}{|\OPT|}\leq (6\alpha_1-4)+\sum_{j=1}^s(\alpha_{j+1}-\alpha_j)(6-2f_{\alpha_j})= 2 - 2\sum_{j=1}^s(\alpha_{j+1}-\alpha_j)f_{\alpha_j}.
\end{align*} \end{proof}

Now we can conclude our main theorem.

\begin{proof}[Proof of Theorem~\ref{thm:local_search}] Recall that $f_{\alpha}=3/(6+4\lceil(4-3\alpha)/(2\alpha-1)\rceil+2\lceil(2-3\alpha)/(2\alpha-1)\rceil_+)$ is an increasing step function of $\alpha$ with jumps on the countably infinite set $\mathcal{A}$ in which $(2 - 3\alpha)/(2\alpha-1)$ or  $(4-3\alpha)/(2\alpha-1)$ are integers. The function $\sum_{j=1}^s(\alpha_{j+1}-\alpha_j)f_{\alpha_j}$ is thus the area of the lower approximation by rectangles whose base are all the  intervals $[\alpha_{j}, \alpha_{j+1}]$, and attains its maximum value $\int_{1/2}^1f_{\alpha} d\alpha$ by setting the values $\alpha_i$ as the sequence $\mathcal{A}$ of breakpoints of $f_{\alpha}$ (see Figure~\ref{fig:refined-plot}). However, we require to only use finitely many points from $\mathcal{A}$; we cannot use points arbitrarily close to $1/2$ as this would require $N_{\max}$ to be arbitrarily large. By properly truncating the sequence, we can achieve in time $n^{O(1/\varepsilon)}$ a guarantee close to $2-2\int_{1/2}^1f_{\alpha} d\alpha \le 1.87032$.

\begin{figure}
	\begin{center}
		\resizebox{\textwidth}{!}{\includegraphics[scale=1.0]{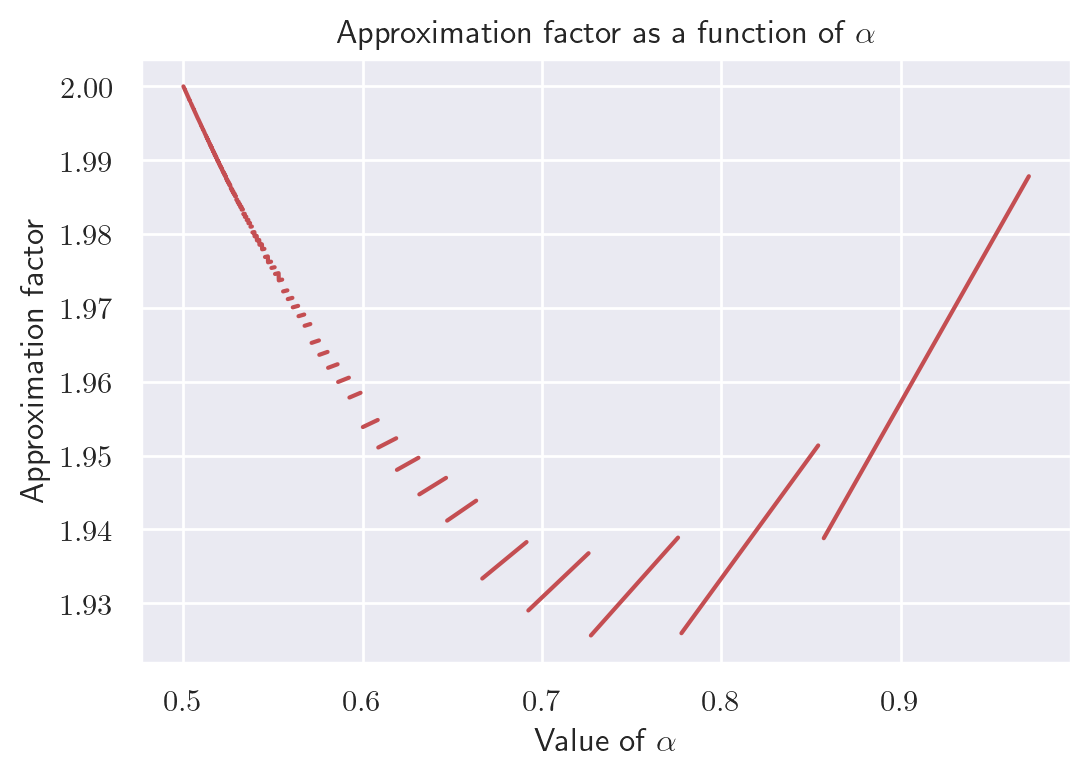}\hspace{15pt}\includegraphics[scale=1.0]{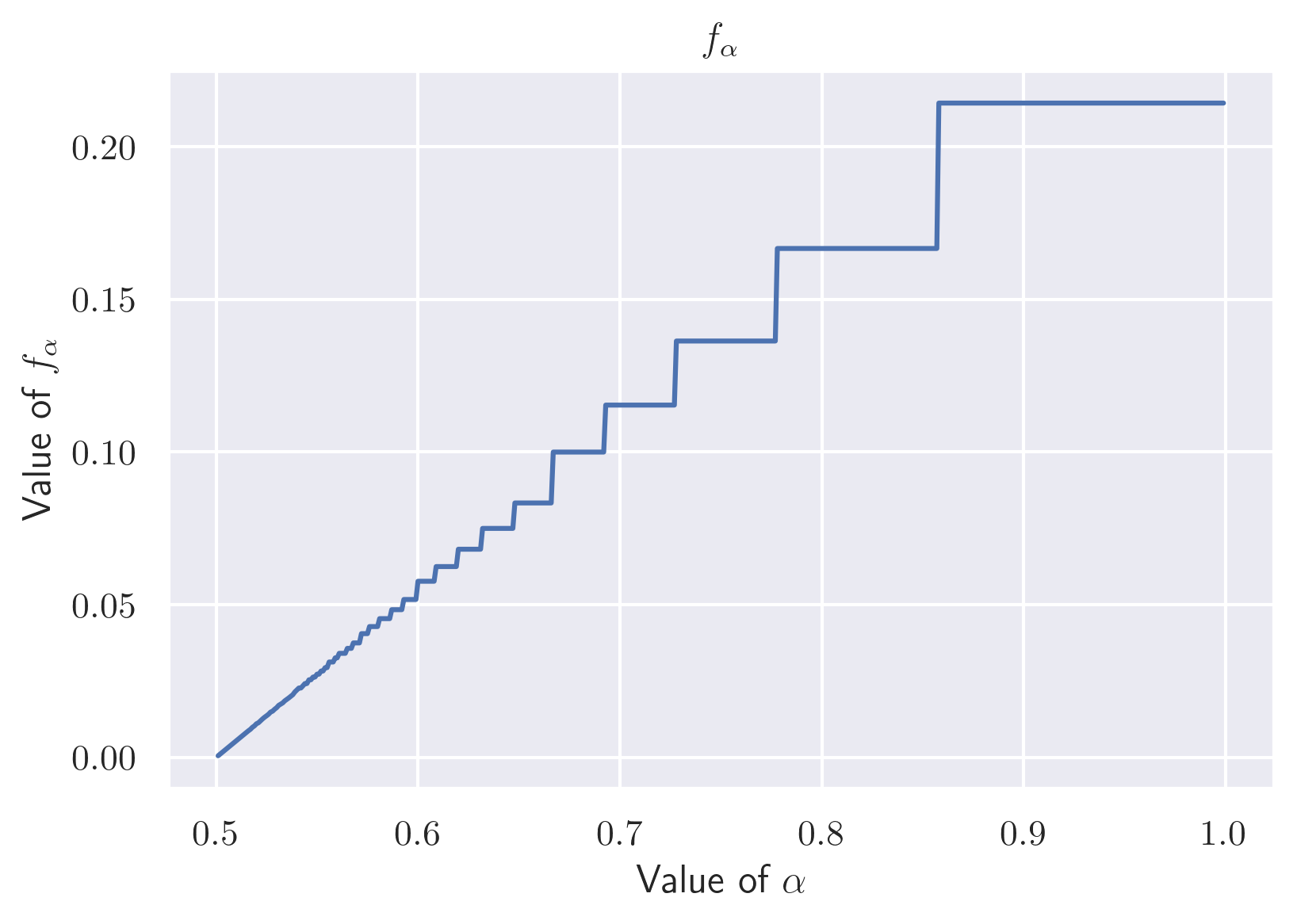}}		
	\end{center}
	\caption{\textbf{Left:} Plot of the function $f(\alpha)=2 - 2(1-\alpha)f_\alpha$ from Lemma~\ref{lem:approximation_order_for_local_search_algorithm}. We can observe that this function is minimized when $\alpha=8/11$.
	\textbf{Right:} Plot of $f_\alpha$ that illustrates the area under the curve that must be computed to derive the final approximation ratio according to the proof of Theorem~\ref{thm:local_search}.}
	\label{fig:refined-plot}
\end{figure}

More in detail, we truncate the summation of Lemma \ref{lem:multiplealfa} to the $k$-th term, obtaining the corresponding value of $\alpha$ by solving $\frac{4-3\alpha}{2\alpha -1} = k$, thus being $\alpha_k = \frac{4+k}{2k+3}$. We rewrite $\alpha_k$ as $1/2 + \hat{\varepsilon}$, obtaining $\hat{\varepsilon}=\frac{5}{4k + 6}$.
As $f_\alpha$ is an increasing function, we can bound the obtained approximation ratio by $$2 - 2\int_{1/2+\hat{\varepsilon}}^1f_{\alpha} d\alpha \leq 2 - 2\int_{1/2}^1f_{\alpha} d\alpha + 2\int_{1/2}^{1/2 + \hat{\varepsilon}} f_{\alpha_k} d\alpha \le 1.87032 + 2\hat{\varepsilon} f_{\alpha_k}.$$ The value $k$ can be chosen large enough so that the term $2\hat{\varepsilon} f_{\alpha_k}$ is small enough, leading to the claimed bound.

Regarding the running time, notice that every iteration of the first phase takes $|S|^{O(N_{\max})}$, but $|S|=O(n^2)$ and, for $\alpha=1/2 + \varepsilon$, $N_{\max} \geq \lceil 2 \frac{1}{\varepsilon} - 1 \rceil = O(1/\varepsilon)$. Moreover, we have to repeat this iteration $\lceil(4-3(\frac{1}{2} + \varepsilon)/(2(\frac{1}{2} + \varepsilon)-1)\rceil = O(1/\varepsilon)$ times. Therefore, the total running time for the algorithm is $O(1/\varepsilon) n^{O(1/\varepsilon)} = n^{O(1/\varepsilon)}$.\end{proof}

\section{Hardness of Approximation of \cvca} \label{hardness_of_approximation}

In this section we prove that \cvca is APX-hard. The following theorem summarizes a hardness result from Gálvez et al.~\cite{Galvez19} in the context of edge-connectivity augmentation, stated in a way that will be useful for our purposes.

\begin{theorem}{\cite{Galvez19}}\label{thm:apxhard-edge}
There exists a constant $\delta_0>0$ such that it is NP-hard to decide whether an instance of Cycle edge-connectivity augmentation admits a solution of size $\frac{n}{2}$, or no feasible solution for the instance has size at most $(1+\delta_0)\frac{n}{2}$.
\end{theorem}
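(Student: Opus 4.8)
This theorem is quoted from~\cite{Galvez19}; the plan below outlines how one proves it. The plan is to give a gap-preserving reduction from a bounded-degree version of \textsc{Minimum Vertex Cover}. By the PCP theorem together with standard degree-reduction gadgets (Papadimitriou--Yannakakis; Berman--Karpinski), there is a constant $\varepsilon_0>0$ such that it is NP-hard to distinguish graphs $G=(U,E_G)$ of maximum degree $O(1)$ that admit a vertex cover of size at most $\tau$ from those in which every vertex cover has size larger than $(1+\varepsilon_0)\tau$; moreover one may assume $\tau=\Theta(|U|)=\Theta(|E_G|)$. From such a $G$ the plan is to build an instance $(C_n,S)$ of cycle edge-connectivity augmentation with $n$ even and $n=\Theta(|U|+|E_G|)$.

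\emph{Construction.} Arrange along $C_n$, in a fixed cyclic order, a constant-size \emph{vertex block} for every $u\in U$ and a constant-size \emph{edge block} for every $e\in E_G$. The vertex block for $u$ carries an (essentially forced) internal perfect matching of links on all but one distinguished pair of \emph{terminals} $a_u,b_u$; the pair $a_ub_u$ may be matched either by a cheap ``default'' short link contained in the block, or by a ``long'' link $\lambda_u$ whose arc is routed so that it passes through (and crosses the cycle edges of) exactly the edge blocks of the edges incident to $u$. The edge block for $e=uv$ is a small segment equipped with a gadget that can be completed into a feasible configuration at the matching bound \emph{only if} at least one of $\lambda_u,\lambda_v$ passes through it; otherwise repairing that block costs at least one link in excess of a perfect matching. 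Choosing $\lambda_u$ to match $a_ub_u$ then encodes putting $u$ into the vertex cover.

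\emph{Correctness.} In the YES case, a vertex cover $W$ with $|W|\le\tau$ yields the matching that uses $\lambda_u$ for $u\in W$ and the default short link for $u\notin W$, together with all internal block matchings; since $W$ covers every edge, each edge block is crossed by some $\lambda_u$ with $u\in W$, so every proper cyclic interval of $C_n$ has a link leaving it (equivalently, the crossing graph of the chosen matching is connected), hence the matching is feasible and $|\OPT|=n/2$, matching the edge-cover lower bound $|\OPT|\ge n/2$ that is the exact analogue of Lemma~\ref{Every_pair_must_be_crossed}. In the NO case, let $S'$ be any feasible solution; since every edge block must be crossed and the only links that can cross it are long links of incident vertices, the set $W=\{u: S'\text{ uses }\lambda_u\}$ must be a vertex cover of $G$, and a short pruning/charging argument shows $|S'|$ is at least the number of forced internal links plus $|W|$ minus $O(1)$ per edge block, so $|S'|\ge n/2+\Omega(|W|-\tau)\ge n/2+\Omega(\varepsilon_0\tau)$. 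As $\tau=\Theta(n)$ this is $(1+\delta_0)\,n/2$ for a suitable constant $\delta_0>0$, giving the claimed gap.

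\emph{Main obstacle.} The delicate point is the cyclic feasibility condition --- a link set 3-edge-connects $C_n$ iff every proper cyclic interval has a link leaving it --- so one must design the vertex blocks, the default short links and the routing of the long links so that (i) no ``spurious'' interval created by an internal block matching or a default link fails to be crossed, and (ii) the only way a feasible solution can leave an edge block untouched is to pay a constant excess there; equivalently one has to certify connectivity of the crossing graph of the candidate matching. Finally, the additive surplus $\Omega(\varepsilon_0\tau)$ only yields a genuine multiplicative $(1+\delta_0)$ gap because all gadgets have size $O(1)$, forcing $n=\Theta(\tau)$: any super-constant padding would dissolve the gap, so keeping every gadget of constant size (and invoking the by-now routine reduction to bounded-degree, linear-optimum vertex cover) is essential.
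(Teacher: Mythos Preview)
The paper does not prove this theorem: it is stated with a citation to~\cite{Galvez19} and used as a black box in the proof of Theorem~\ref{thm:apxhard-node}. You correctly note this at the outset, so there is no ``paper's own proof'' to compare against.

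On the substance of your sketch: the high-level plan --- a gap-preserving reduction from bounded-degree gap \textsc{Vertex Cover}, with vertex gadgets, edge gadgets, and a per-vertex binary choice encoding cover membership --- is a standard and reasonable template for APX-hardness. However, the specific gadget you describe has a genuine problem. You write that the pair $a_u,b_u$ may be matched by ``a `long' link $\lambda_u$ whose arc is routed so that it passes through \ldots\ exactly the edge blocks of the edges incident to $u$.'' But a link here is simply a chord of $C_n$, an unordered pair of cycle vertices, with no routing freedom whatsoever. If $a_u$ and $b_u$ both lie in a small contiguous vertex block, the chord $a_ub_u$ has one arc inside the block and one arc consisting of essentially the entire rest of the cycle; it cannot selectively cross a prescribed, $u$-dependent subset of edge blocks. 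If instead you separate $a_u$ and $b_u$ on the cycle and place the relevant edge blocks between them, then for each edge $e=uv$ the block for $e$ must lie on the chosen arc of \emph{both} $a_ub_u$ and $a_vb_v$; for a general (even bounded-degree) graph there is no cyclic arrangement satisfying all such incidence constraints simultaneously, and your sketch gives no argument that one exists.

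So you have the right source problem and the right gap-accounting outline, but encoding ``$u$ is in the cover'' by a single chord with prescribed crossing pattern is geometrically unworkable on a cycle. A viable reduction has to distribute the encoding differently --- for instance, making the binary choice local to each edge gadget (which endpoint covers this edge) and enforcing consistency across the gadgets incident to a common vertex via additional links --- rather than via one global ``long link'' per vertex.
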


Using this we can prove that finding arbitrarily good solutions for our problem in polynomial time is hard as well.

\begin{theorem}\label{thm:apxhard-node}
There exists a constant $\delta>0$ such that there is no $(1+\delta)$-approximation for \cvca unless P $=$ NP.
\end{theorem}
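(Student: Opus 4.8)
The plan is to reduce from Cycle edge-connectivity augmentation, using Theorem~\ref{thm:apxhard-edge} as the source of hardness. Given an instance $(C_n, S)$ of Cycle edge-connectivity augmentation, I would build an instance $(C_{n'}, S')$ of \cvca on a larger cycle such that a solution of size $\tfrac{n}{2}$ to the edge-connectivity instance corresponds to a small \cvca solution, while any instance with no edge-connectivity solution of size $\le (1+\delta_0)\tfrac{n}{2}$ yields no small \cvca solution. The natural gadget is \emph{vertex splitting}: replace each vertex $v$ of $C_n$ by a short path (say two or three new vertices) so that the cyclic order is preserved, and route each link of $S$ between appropriate copies of its endpoints. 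The key point is that in the edge-connectivity version a solution is feasible iff contracting all its links collapses the cycle, whereas by Lemma~\ref{Every_pair_must_be_crossed} the \cvca version requires that every chord be crossed; the splitting gadget is designed exactly so that the new "short-path" chords force the solution to cross every original separating pair, making the two feasibility notions coincide on the transformed instance up to a controlled additive number of extra links.

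Concretely, the key steps in order: (1) Describe the transformation precisely, so that $n' = cn$ for a fixed small constant $c$ and each link of $S$ maps to one link of $S'$, plus possibly $O(n)$ auxiliary links needed to handle the intra-gadget chords; argue the resulting $S'$ still $3$-connects $C_{n'}$ so it is a valid \cvca instance. (2) Show soundness in one direction: from any feasible edge-connectivity solution $T \subseteq S$ of size $t$, construct a feasible \cvca solution of size $t + O(1)\cdot(\text{number of gadgets})$ — i.e. $t + an$ for a constant $a$ that is as small as the gadget allows — by adding the fixed auxiliary links. (3) Show the converse: from any feasible \cvca solution $T' \subseteq S'$ of size $t'$, extract a feasible edge-connectivity solution of size at most $t' - an + O(1)$, by projecting links back to $S$ and using Lemma~\ref{Every_pair_must_be_crossed} to argue that the crossing requirement on gadget chords forces at least $an - O(1)$ of the links in $T'$ to be "wasted" on intra-gadget structure. (4) Combine: if the original instance has an edge solution of size $\tfrac n2$, then \cvca has a solution of size $\tfrac n2 + an$; if every edge solution has size $> (1+\delta_0)\tfrac n2$, then every \cvca solution has size $> (1+\delta_0)\tfrac n2 + an - O(1)$. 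Since $n' = cn$, these two thresholds differ by a multiplicative factor $1 + \delta$ for some constant $\delta = \delta(\delta_0, a, c) > 0$, so a $(1+\delta)$-approximation would decide the NP-hard gap problem.

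The main obstacle I expect is step (3): controlling exactly how many links a feasible \cvca solution on $C_{n'}$ must spend on the gadget internals, and ensuring this quantity is a \emph{fixed} $an$ rather than something that can be shaved down by a clever global solution. I would want the gadget small enough that $a$ is tiny (so the additive $an$ does not swamp the multiplicative gap $\delta_0 \tfrac n2$), yet rigid enough that the intra-gadget chords genuinely cannot be crossed "for free" by links that are also serving the inter-gadget role — this is where the distinction between edge- and vertex-connectivity (illustrated in Figure~\ref{ejemplospasoslocales2}) must be exploited carefully. A secondary technical point is verifying the forward construction in step (2) actually produces a \emph{circle component} (using Lemma~\ref{circle-component-characterization}), which amounts to checking that the auxiliary links plus the projected solution form a connected circle graph; this should follow routinely from connectivity of the original solution's contraction structure, but must be checked.
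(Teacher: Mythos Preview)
Your gadget-based reduction is far more elaborate than what the paper actually does, and the obstacle you flag in step~(3) is a real one that you have not resolved. The paper's proof avoids all of this by using the \emph{identity} reduction: the very same instance $(C_n,S)$ is interpreted as a \cvca instance, with no vertex splitting, no auxiliary links, and $n'=n$. The converse direction is then immediate (3-vertex-connectivity implies 3-edge-connectivity, so any \cvca solution is an edge-connectivity solution of the same size). For the forward direction, the key observation you are missing is that a feasible \emph{edge}-connectivity solution of size exactly $n/2$ is necessarily a perfect matching, and a perfect matching that 3-edge-connects the cycle must already be a single circle component: if it decomposed into several non-crossing circle components, those components could not share border vertices (each vertex touches exactly one link), so some component $L$ would have $V(L)$ equal to a contiguous arc of the cycle, and then deleting the two cycle edges at the ends of that arc disconnects the graph, contradicting 3-edge-connectivity. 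Hence by Lemma~\ref{circle-component-characterization} the same set is feasible for \cvca, and the gap transfers verbatim with $\delta=\delta_0$.

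Your approach, if the gadget accounting in step~(3) could be made tight, would give a more generic reduction that does not rely on the special coincidence at the $n/2$ threshold; but here that coincidence is available and makes the argument a few lines long, whereas your route still has an unproven step.
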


\begin{proof}
    We will show that if $(C_n,S)$ is an instance of Cycle edge-connectivity augmentation, then \begin{enumerate} \item If $(C_n,S)$ admits a solution of size $n/2$, then the same input interpreted as an instance of \cvca admits a feasible solution of size $\frac{n}{2}$ too; and
		\item If $(C_n,S)$ does not admit a feasible solution of size at most $(1-\delta_0)\frac{n}{2}$, then neither does the same input as an instance of \cvca. \end{enumerate}
	
	The second statement follows directly since $3$-vertex-connectivity implies $3$-edge connectivity. For the first statement, we will prove that a feasible solution of size $n/2$ for Cycle edge-connectivity augmentation actually consists of a single circle component, being then feasible for \cvca thanks to Lemma~\ref{circle-component-characterization}. Indeed, if the solution has more than one circle component, no pair of components share border nodes as every node of the cycle is incident to exactly one link, and furthermore one of them has to be contiguous (i.e a circle component $L$ such that $V(L)$ is an interval on the cycle). However, in this case it is possible to disconnect the latter component by deleting the two edges of the cycle incident to its border nodes. \end{proof}

\newpage
 \bibliography{bibliography.bib}

\appendix

\section{An LP-based lower bound for the optimal number of links}\label{app:refined_analysis}

In this section we provide a detailed analysis of the Linear Programming formulation provided in Section~\ref{sec:improving} (see Figure~\ref{fig:LP}). First of all, it is possible to prove that $\OPT_{ij}^k=\emptyset$ for all $(ij,k)\in \mathcal{Z}$ and prove that the values $\ell_{ij}^k$ from the LP are valid bounds, as the following lemma states.

\begin{lemma}\label{zeros-LP}
Let
$\mathcal{Z}=\{(aa,M)$, $(aa,MP)$, $(aa,P)$, $(ab,M)$, $(ab,MP)$, $(ac,M)$, $(ac,R)$, $(ad,M)$, $(ad,R)$, $(bb,MP)$, $(cc,P)$, $(cc,MP)$, $(cd,P)$, $(cd,MP)$, $(dd,M)$, $(dd,P)$, $(dd,MP)$, $(dd,R)\}$. It holds that $\OPT_{ij}^k=\emptyset$ for all $(ij,k)\in \mathcal(Z)$.

Also, if $k \in \{M,MP\}$, there exist values $\ell_{ij}^k$ such that a link in $\OPT_{ij}^k$ can only cross at most $\ell_{ij}^k$ links, and these values can be computed explicitly. 

Furthermore, variables $x_{ij}^R$ (for all $ij\neq aa$), $x_{ac}^P,x_{bc}^P$, and $x_{bd}^P$ can be set to zero in any optimal solution of the linear program.

\end{lemma}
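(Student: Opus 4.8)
The plan is to treat the three assertions separately, all of them following from the structural results already established, principally Lemmas~\ref{links_can_cross_at_most_one_chord}, \ref{e_crosses_at_most_k_edges_in_M} and \ref{general_crossing}, together with the definitions of $A,B,C,D$ and of the superscripts $M,P,MP,R$. Recall $A$ and $B$ are the internal and border vertices of $V(F)$, $C=V(M)$ and $D$ is everything else; a link of $\OPT_{ij}^k$ has one endpoint in $i$, one in $j$, and the superscript records whether it crosses some chord of $M$, some chord of the perimeter $P=P(F)$, both, or neither.

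\textbf{Step 1: the emptiness claims $\OPT_{ij}^k=\emptyset$ for $(ij,k)\in\mathcal{Z}$.} These split into three kinds. (a) Any superscript involving $P$ forces the link to cross a perimeter chord; by Lemma~\ref{links_can_cross_at_most_one_chord} such a link crosses exactly one circle component $L$, on exactly one border chord of $L$, so at least one endpoint lies in $V(L)\subseteq A\cup B$ and in fact (as in the proof of Lemma~\ref{links_can_cross_at_most_one_chord}) one endpoint is a vertex of $L$; moreover a link with both endpoints in $V'=C\cup D$ cannot cross any perimeter chord at all. This immediately kills $(aa,P)$, $(aa,MP)$, $(cc,P)$, $(cc,MP)$, $(cd,P)$, $(cd,MP)$, $(dd,P)$, $(dd,MP)$, and — because crossing a border chord of $L$ in a single point of $V(L)$ and being an internal vertex are incompatible in the way needed — also $(aa,M)$ is of kind (c) below, while $(ab,M)$, $(ab,MP)$, $(ac,M)$, $(ad,M)$ need the finer observation that a link crossing a border chord of $L$ must have its $V(L)$-endpoint be a \emph{border} vertex of that chord, hence in $B$, not in $A$; this disposes of every pair whose superscript forces a perimeter crossing but whose vertex pattern puts a ``perimeter endpoint'' in $A$. (b) The pattern $aa$: an internal vertex of $L$ can only be incident to internal chords of $L$ or to chords crossed by a border chord of $L$ (Lemma~\ref{general_crossing} applied to the cycle chord through that vertex is the wrong object — instead argue directly from the position of internal vertices strictly between two consecutive border vertices), so a link with both endpoints internal either lies inside $F$ (excluded, it is in $\OPT\setminus M$ region) or crosses the perimeter; combined with Lemma~\ref{links_can_cross_at_most_one_chord} this forces $\OPT_{aa}^M=\OPT_{aa}^P=\OPT_{aa}^{MP}=\emptyset$. (c) The pattern $dd$ together with $M$ or $R$: a link with both endpoints in $D$ has both endpoints in $V'$; a maximal matching $M\subseteq\OPT[V']$ would have had to include it (if it were disjoint from $V(M)$) — but $D\cap V(M)=\emptyset$, so such a link is vertex-disjoint from $M$, contradicting maximality of $M$; hence $\OPT_{dd}^R=\emptyset$, and since such a link cannot cross $M$ (its endpoints avoid $V(M)$, and crossing an $M$-link would still not place it \emph{in} $M$, but maximality already forbids it) and cannot cross $P$, also $\OPT_{dd}^M=\OPT_{dd}^{MP}=\emptyset$. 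Finally $(ac,R)$ and $(ad,R)$: here I would argue that a link with one endpoint an internal vertex $a\in A$ which crosses no perimeter chord and no $M$-chord in fact crosses nothing incident to $F$, so it is redundant — more precisely, removing it keeps $\OPT$ a circle component, contradicting $\OPT=\mathrm{OPT}$ being an optimal (hence minimal) solution; for this I use that every chord it would be needed to cross is already an internal chord of some $L$ and thus crossed by $L\subseteq\OPT\setminus M\subseteq\OPT$.

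\textbf{Step 2: the bounds $\ell_{ij}^k$.} For each surviving pair $(ij,k)$ with $k\in\{M,MP\}$ I instantiate Lemma~\ref{e_crosses_at_most_k_edges_in_M}: a link $e\in\OPT_{ij}^k$ has $X(e)=1$ iff $k\in\{P,MP\}$ (it crosses $F$), $V_F(e)=|V(\{e\})\cap V(F)|$ equals $2$ if $i,j\in\{A,B\}$, $1$ if exactly one of $i,j$ is in $\{A,B\}$, $0$ if $i,j\in\{C,D\}$, and $V_M(e)=|V(\{e\})\cap V(M)|$ equals the number of endpoints landing in $C=V(M)$. Plugging each of these constant triples into $\lceil (5-2X(e)-V_F(e)-(4-V_M(e)-V_F(e))\alpha)/(2\alpha-1)\rceil_+$ yields exactly the nine listed values (e.g. $(ij,k)=(cc,M)$ gives $X=0,V_F=0,V_M=2$, hence $\lceil(5-2\alpha)/(2\alpha-1)\rceil=\ell_{cc}^M$; $(bd,MP)$ gives $X=1,V_F=1,V_M=1$, hence $\lceil(2-3\alpha)/(2\alpha-1)\rceil_+=\ell_{bd}^{MP}$), and all other $\ell_{ij}^k$ are $0$ either because the pair is in $\mathcal{Z}$ or because the superscript does not involve crossing $M$.

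\textbf{Step 3: the ``set to zero'' claims for $x_{ij}^R$ and $x_{ac}^P,x_{bc}^P,x_{bd}^P$ in an optimal LP solution.} Here the argument is purely about the LP, not about $\OPT$: the variables $x_{ij}^R$ (for $ij\neq aa$; the case $aa$ is already forced to $0$ by $\mathcal{Z}$) and $x_{ac}^P,x_{bc}^P,x_{bd}^P$ appear in the objective with coefficient $1$ and in the constraints only on the left-hand sides with nonnegative coefficients — crucially they have $\ell_{ij}^{\{R\}}=0$ (they do not help the $x_M$-connecting constraint) and the $R$-variables have no superscript-$P$ role (they do not help the $\ge x_B/2$ border-covering constraint), while $x_{ac}^P,x_{bc}^P,x_{bd}^P$ contribute to the $\ge x_B/2$ constraint but — this is the one point needing a short check — in an optimal solution that constraint can be met at least as cheaply using other $P$- or $MP$-variables, or the vertex-covering constraints they feed into are slack; formally, given any optimal solution, decreasing any of these variables to $0$ only relaxes the $\ge$-constraints it appears in while strictly decreasing (weakly, if the variable was already $0$) the objective, so it stays feasible and optimal. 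The main obstacle in the whole proof is Step~1's more delicate pairs — $(ab,M)$, $(ac,M)$, $(ad,M)$, $(ac,R)$, $(ad,R)$ — where I must rule out the configuration cleanly using the interplay between ``internal vertex of $L$'' and ``crosses a border chord of $L$'', and the minimality of $\OPT$; the $P$- and $dd$-pairs are routine consequences of Lemmas~\ref{links_can_cross_at_most_one_chord} and the maximality of $M$.
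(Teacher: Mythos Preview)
Your proposal has the right high-level structure, but there are genuine gaps in Steps~1 and~3.

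\textbf{Step 1.} The assertion ``a link crossing a border chord of $L$ must have its $V(L)$-endpoint be a \emph{border} vertex'' is false: take $L$ with $V(L)=\{1,2,3,7,8,9\}$ on $C_{12}$, so $37$ is a border chord and $8$ is an internal vertex; the chord $58$ crosses $37$ with $V(L)$-endpoint $8\in A$. Since your treatment of $(ab,M),(ab,MP),(ac,M),(ad,M)$ rests on this claim, those cases are unproved. The correct observation (used in the paper) is different and simpler: any link with one endpoint an \emph{internal} vertex of $L$ and the other endpoint \emph{external} to $L$ necessarily crosses a border chord of $L$, hence crosses $P$. This immediately gives $\OPT_{ac}^M=\OPT_{ac}^R=\OPT_{ad}^M=\OPT_{ad}^R=\emptyset$, and (after arguing that crossing $M$ forces the two $V(F)$-endpoints into different components) also $\OPT_{ab}^M=\emptyset$. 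For $(ab,MP)$ and $(bb,MP)$---the latter you do not address at all---the paper instead plugs $X(e)=1$, $V_F(e)=2$, $V_M(e)=0$ into Lemma~\ref{e_crosses_at_most_k_edges_in_M} to get the bound $\lceil(1-2\alpha)/(2\alpha-1)\rceil<0$, contradicting that such a link connects at least one $M$-link. Also, $(aa,M)$ is not ``of kind (c)'' (maximality of $M$ is irrelevant for endpoints in $A$); your part~(b) argument does cover it, though.

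\textbf{Step 3.} Your claim that ``decreasing any of these variables to $0$ only relaxes the $\geq$-constraints'' is backwards: these variables sit on the \emph{left} side of $\geq$-constraints (e.g.\ $x_{cd}^R$ contributes to covering $D$), so zeroing them tightens, not relaxes, those constraints. The paper instead exhibits, for each variable, an explicit transfer to another variable (e.g.\ $x_{cd}^R\to x_{cd}^M$, $x_{bd}^P\to x_{ad}^P$ together with increasing $x_A$) that preserves feasibility and objective value; a bare dominance argument does not suffice. Finally, $(aa,R)\notin\mathcal{Z}$, so $x_{aa}^R$ is not forced to zero---it survives in the simplified LP.
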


In order to prove this lemma we require the following technical results, which are also useful in subsequent proofs.

\begin{lemma}\label{technical}
Let $e=ab$ be a chord in $C_n$ and $\mathcal{L}_e$ be a collection of non-crossing circle components, such that all of them cross $e$. Then $$|V(\bigcup_{J\in \mathcal{L}_e}J)|\geq |\mathcal{L}_e|+2+\sum_{J\in \mathcal{L}_e}(|V(J)|-3).$$
\end{lemma}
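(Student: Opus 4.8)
The plan is to induct on $|\mathcal{L}_e|$. The base case $|\mathcal{L}_e|=1$ is trivial, since then the claimed inequality reads $|V(J)| \geq 1 + 2 + (|V(J)|-3)$, which holds with equality. For the inductive step, suppose $|\mathcal{L}_e|=t\geq 2$ and write $e=ab$, which splits the cycle into two sides $V_1$ and $V_2$. Since every circle component in $\mathcal{L}_e$ crosses $e$, each one has at least one vertex strictly in $V_1$ and at least one strictly in $V_2$. The key structural observation I would exploit is that the components of $\mathcal{L}_e$, being pairwise non-crossing, are linearly ordered "along $e$": if we look at where each component meets the chord $e$ (equivalently, the interval of the cycle each one spans), non-crossing forces a nesting/ordering structure. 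Concretely, among the $t$ components there is an \emph{innermost} one $L^\ast$ with respect to $e$ — one whose "trace" on one of the sides, say the portion of $V_1$ between the extreme vertices of $L^\ast$ on that side, contains no vertex of any other component of $\mathcal{L}_e$.

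Having isolated such an innermost component $L^\ast$, I would apply the induction hypothesis to $\mathcal{L}_e \setminus \{L^\ast\}$ (still a collection of $t-1$ non-crossing circle components all crossing $e$), obtaining
\[
\Bigl|V\Bigl(\bigcup_{J\in \mathcal{L}_e\setminus\{L^\ast\}}J\Bigr)\Bigr|\geq (t-1)+2+\sum_{J\in \mathcal{L}_e\setminus\{L^\ast\}}(|V(J)|-3).
\]
Then I need to show that adding $L^\ast$ back contributes at least $1 + (|V(L^\ast)|-3)$ new vertices, i.e. that $L^\ast$ has at least $|V(L^\ast)|-2$ vertices not already in $\bigcup_{J\neq L^\ast}J$. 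Since $L^\ast$ crosses $e$, it has a vertex on each side of $e$; the innermost choice guarantees that on (at least) one side the vertices of $L^\ast$ strictly between its two extreme vertices on that side, together with possibly one extreme vertex, are disjoint from all other components. Counting these carefully — the two "extreme" vertices of $L^\ast$ relative to $e$ may be shared, but all the remaining $|V(L^\ast)|-2$ interior ones are private to $L^\ast$ by the innermost property — yields the needed bound, and combining with the inductive inequality completes the step.

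The main obstacle I anticipate is making the notion of "innermost component" precise and verifying rigorously that it contributes the claimed number of fresh vertices; this requires setting up the linear order on $\mathcal{L}_e$ induced by the non-crossing condition and being careful about which (at most two) vertices of $L^\ast$ might be shared with neighbouring components in that order. An alternative, possibly cleaner, route that avoids choosing an innermost element: process the components one at a time in the order in which their spans nest along $e$, and at each step argue that the newly added component brings in all but at most two of its vertices as new — the two possibly-shared vertices being the ones "closest" to the previously processed components on either side of $e$. Either way the combinatorial heart is the same: non-crossing plus "all cross $e$" forces enough disjointness of vertex sets that the per-component deficit is at most $2$ rather than $3$, and the $t$ here collects the $+1$ savings, one per component beyond the trivial bound $\sum_J(|V(J)|-3)+3$ for a single merged component spanning $e$.
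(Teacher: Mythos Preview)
Your proposal is correct and matches the paper's approach: the paper also linearly orders the components $L_1,\dots,L_k$ by the nesting of the border chords they present to $e$, and then shows iteratively that $V(L_1\cup\dots\cup L_i)$ lies in a single closed zone of $L_{i+1}$, so $|V(L_1\cup\dots\cup L_i)\cap V(L_{i+1})|\le 2$, which is precisely your ``alternative, possibly cleaner route.'' Your primary ``innermost'' variant is just this same iteration rephrased as peeling off an extremal component in the order; note that taking the \emph{outermost} component $L_k$ makes the ``shares at most two vertices with the rest'' step immediate from the paper's containment statement, whereas for the innermost one you still need the nesting of zones to see that the at-most-two shared vertices with each $L_j$ are all among the same pair.
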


\begin{proof}
The result is straightforward if $|\mathcal{L}_e|=1$. So let us assume that the collection $\mathcal{L}_e$ contains $k\geq 2$ circle components. Since all the vertices of a component $L\in \mathcal{L}_e$ (and thus, all the internal and border chords of $L$) are completely contained in a zone of every other component $K\in \mathcal{L}_e\setminus \{L\}$ (see Section~\ref{sec:minimal_comp} for a definition of the zone graph and related concepts), we know that $e$ cannot be an internal or a border chord of any $L\in \mathcal{L}_e$. By Lemma \ref{general_crossing} (see Section~\ref{sec:circle_comp}), $e$ must then cross a border chord of every component. In fact, if both extremes of $e$ are outside $V(L)$ then $e$ must cross exactly two border chords $f_L$ and $g_L$ of $L$, and if one extreme is in $V(L)$ and the other is outside, then $e$ only crosses one border chord $f_L$ of $L$. Abusing notation, let $g_L=f_L$ in this case.

Note that if we draw $C_n$ in the plane as a circle, and we draw the links as straight line segments, then $e$ crosses all the border chords in the multiset $\{f_L,g_L\}_{L\in \mathcal{L}_e}$ in certain order. In fact, one can enumerate the circle-components of $\mathcal{L}_e$ as $L_1, \dots, L_k$ in such a way that $e$ crosses the chords described above in the order $f_{L_1},g_{L_1}, f_{L_2}, g_{L_2}, \dots, f_{L_k}, g_{L_k}$, where we may have that for some $i$, $g_{L_i}$ is the same as $f_{L_{i+1}}$. 
    (To be more formal, w.l.o.g. we can assume that one of the endpoints of $e$ is the vertex $n$, and the other some vertex $j\in [n-1]$. Then all border chords in the list above can be seen as intervals in $[1, n-1]$ that contains $j$ and that does not cross each other. Therefore, they can be sorted by inclusion, yielding the order above. The fact that both border chords of a component appear before both border chords of the next one follows from the fact that every circle component is fully contained in a closed zone of every other in the list.)     For every $i\in [k-1]$, all vertices in $V(L_1\cup \dots \cup L_i)$ are in the closed zone defined by the border chord $f_{L_{i+1}}$ of $L_{i+1}$. From here we conclude that $V(L_1\cup \dots \cup L_i)$ and $V(L_{i+1})$ may intersect in at most 2 vertices. Therefore, $|V(L_1\cup \dots \cup L_{i+1})|\geq |V(L_1\cup \dots \cup L_{i})|+|V(L_{i+1})|-2$. Iterating this argument, we obtain  $|V(L_1\cup \dots \cup V_k)|\geq 2+ \sum_{i=1}^k(|V(L_{i})|-2)= k+2 + \sum_{i=1}^k(|V(L_{i})|-3)$ as needed.   \end{proof}
    
\begin{lemma}\label{atmost1}
Let $e$ be a chord in $C_n$, then $e$ crosses at most 1 circle component of $\mathcal{L}$. 
\end{lemma}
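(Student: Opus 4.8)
The plan is to argue by contradiction from the $(\alpha,N_{\max})$-criticality of $F$ (recall $N_{\max}\ge 1$): if a link $e$ crossed two components of $\mathcal{L}$, then adding the single link $e$ to $F$ would already produce a singleton-free superset whose marginal utility gain is at least the number of newly covered vertices — which is precisely what being $(\alpha,N_{\max})$-critical forbids, since the candidate set $K=\{e\}$ has size $1\le N_{\max}$. (The lemma is applied to links $e\in S$, and membership in $S$ is exactly what lets us feed $K=\{e\}$ to the criticality condition.)

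So suppose $e$ crosses precisely the distinct components $J_1,\dots,J_t\in\mathcal{L}$ with $t\ge 2$. A link contained in a component $J$ of $F$ only crosses links of $J$ itself; hence, as $e$ crosses links of at least two components, $e\notin F$, i.e.\ $e\in S\setminus F$. In the circle graph of $F\cup\{e\}$ the vertex $e$ is adjacent exactly to the links it crosses, so the only effect of adding $e$ is that $J_1,\dots,J_t$ together with $\{e\}$ merge into one component $L'=\{e\}\cup J_1\cup\dots\cup J_t$, while $F\cup\{e\}$ stays singleton-free (adding an edge to the circle graph creates no isolated vertices). Expanding the definition of utility then yields
\[
U(F\cup\{e\})-U(F) \;=\; (3t-4) + |V(L')| - \textstyle\sum_{i=1}^{t}|V(J_i)|.
\]

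The crux is to lower bound $|V(L')|$, i.e.\ to control how much the $V(J_i)$ overlap, and this is exactly what Lemma~\ref{technical} provides: the components $J_1,\dots,J_t$ are pairwise non-crossing and all cross $e$, so $|V(J_1\cup\dots\cup J_t)|\ge t+2+\sum_{i=1}^{t}(|V(J_i)|-3)$. Writing $e=ab$ and using $V(L')=\{a,b\}\cup V(J_1\cup\dots\cup J_t)$, we obtain
\[
|V(L')| - \textstyle\sum_{i=1}^{t}|V(J_i)| \;\ge\; (2-2t) + \bigl|\{a,b\}\setminus V(J_1\cup\dots\cup J_t)\bigr|,
\]
hence $U(F\cup\{e\})-U(F)\ge (t-2) + \bigl|\{a,b\}\setminus V(J_1\cup\dots\cup J_t)\bigr|$. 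Since $t\ge 2$ and $V(J_1\cup\dots\cup J_t)\subseteq V(F)$, the right-hand side is at least $|\{a,b\}\setminus V(F)| = |V(F\cup\{e\})\setminus V(F)| \ge (1-\alpha)\,|V(F\cup\{e\})\setminus V(F)|$, using $1-\alpha<1$. Thus $K=\{e\}$ witnesses that $F$ is not $(\alpha,N_{\max})$-critical, a contradiction, so $e$ crosses at most one component of $\mathcal{L}$.

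The only real work is the bookkeeping in the two displayed lines; everything else (why adding $e$ merges exactly the crossed components and nothing more, why singleton-freeness is preserved, why $e\notin F$) is immediate. The main point to get right is that Lemma~\ref{technical} is precisely the tool needed to bound the overlap of the merged vertex sets: a naive pairwise-nesting argument handles $t=2$, but Lemma~\ref{technical} packages all $t\ge 2$ uniformly.
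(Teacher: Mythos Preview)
Your proof is correct and follows the same approach as the paper: assume $e$ crosses $t\ge 2$ components, compute the marginal utility gain of $F\cup\{e\}$ via Lemma~\ref{technical}, and contradict $(\alpha,N_{\max})$-criticality with $K=\{e\}$. The paper splits into the cases $t\ge 3$ (invoking Lemma~\ref{technical}) and $t=2$ (using the direct bound $|V(L_1)\cap V(L_2)|\le 2$, which is just the $t=2$ instance of Lemma~\ref{technical}), with a further subcase on whether $e$ has an endpoint outside $V(F)$; your unified treatment of all $t\ge 2$ is slightly cleaner but otherwise identical in substance.
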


\begin{proof}
Let $\mathcal{L}_e$ be the circle components in $\mathcal{L}$ that cross $e$, $X=\bigcup_{J\in \mathcal{L}_e}J$ so that $X\cup \{e\}$ is the unique circle-component created by adjoining $e$ to $F$. Note that if $e$ crosses at least one circle component then $F\cup \{e\}$ is singleton-free. 

Suppose first that $|\mathcal{L}_e|\geq 3$. By Lemma \ref{technical} we have that $|V(X\cup \{e\})|\geq |V(X)|\geq  5+\sum_{J\in \mathcal{L}_e}(|V(J)|-3)$. Therefore,
    \begin{align*}
    U(F\cup \{e\}) &= -|F\cup \{e\}| + (|V(X\cup\{e\})|-3) +\sum_{J \in \mathcal{L}\setminus \mathcal{L}_e}(|V(J)|-3)\\
&\geq -(|F| + 1) + 2+\sum_{J\in \mathcal{L}}(|V(J)|-3))\geq  U(F)+1.
\end{align*}
Since $e$ connects at most 2 new vertices, $U(F\cup \{e\})-U(F) \geq 1 \geq |V(F\cup \{e\})\setminus V(F)|/2 \geq (1-\alpha)|V(F\cup \{e\}\setminus V(F)|$, which is not possible since $F$ is $(\alpha,N_{\max})$-critical and $N_{\max}\geq 1$.

Suppose now that $|\mathcal{L}_e|=2$, say $\mathcal{L}_e=\{L_1, L_2\}$. Since $L_1$ and $L_2$ do not cross, $|V(L_1)\cap V(L_2)|\leq 2$ and so $|V(L_1)|+|V(L_2)|-2\leq |V(L_1\cup L_2)|$. Then,
\begin{align*}
U(F\cup \{e\})-U(F)&=-1+(|V(Y)|-3)-(|V(L_1)|-3)-(|V(L_2)|-3)\\
&=2 + |V(\{e\}\cup L_1 \cup L_2)|-|V(L_1)|-|V(L_2)|\\
&\geq |V(\{e\}\cup L_1\cup L_2)|-|V(L_1\cup L_2)|. 
\end{align*}

If $e$ has at least one endpoint in $V'=V(C_n)\setminus V(F)$, then the right hand side of the equation above is at least 1, and thus $U(F\cup \{e\})-U(F) \geq  1\geq |V(F\cup \{e\})\setminus V(F)|/2 \geq (1-\alpha)|V(F\cup \{e\}\setminus V(F)|$, which is again not possible.

On the other hand, if $|V(F\cup \{e\})\setminus V(F)|=0$, we have
$U(F\cup \{e\})-U(F)\geq |V(\{e\}\cup L_1\cup L_2)|-|V(L_1\cup L_2)| \geq 0 = (1-\alpha)|V(F\cup \{e\}\setminus V(F)|$, which is also not possible.
\end{proof}

\begin{proof}[Proof of Lemma~\ref{zeros-LP}] We will do this in two steps, first we will analyze all the possible combinations of $ij$, where $i$, $j \in \{A,B,C,D\}$.

\paragraph{Step 1: Exhaustive analysis of the variables of the Linear Program}
    
\paragraph{Case 1: \texorpdfstring{$ij=aa$}{ij=aa}} If a link $e$ with both extremes in $A$ crosses some $m \in M$ or some $p\in P$, then both endpoints of $e$ are in different components and $e$ crosses both. This cannot happen due to Lemma \ref{atmost1}. Therefore $\OPT_{aa}^M=\OPT_{aa}^P=\OPT_{aa}^{MP}=\emptyset$.

\paragraph{Case 2: \texorpdfstring{$ij=ab$}{ij=ab}} If a link $e$ with extremes in $A$ and $B$ crosses some link of $M$, then we deduce that both endpoints are in different components and $e$ crosses exactly one of them (it cannot
cross two due to Lemma \ref{atmost1}). But then $e$ crosses some $p\in P$. We conclude that $\OPT_{ij}^M=\emptyset$. But $e$ cannot cross both a link of $M$ and one of $P$, by Lemma \ref{e_crosses_at_most_k_edges_in_M} with $X(e)=1$, $V_F(e)=2$ and $V_M(e)=0$, we get that $e$ should connect no more than $\bigl\lceil \frac{1-2\alpha}{2\alpha -1}\bigr\rceil < 0$ links of $M$. Therefore, $\OPT_{ij}^{MP}=\emptyset$.

\paragraph{Case 3: \texorpdfstring{$ij=ac$}{ij=ac}} Any link with extremes in $A$ and $C$ must cross $P$, therefore $\OPT_{ac}^M=\OPT_{ac}^R=0$. Using Lemma \ref{e_crosses_at_most_k_edges_in_M} with $X(e)=1$, $V_F(e)=1$ and $V_M(e)=1$, we get that any link in $\OPT_{ac}^{MP}$ connects at most $\ell_{ac}^{MP}:=\bigl\lceil \frac{2-2\alpha}{2\alpha -1}\bigr\rceil$ links of $M$.

\paragraph{Case 4: \texorpdfstring{$ij=ad$}{ij=ad}} Any link $e$ with extremes in $A$ and $D$ must cross $P$, 
therefore $\OPT_{ac}^M=\OPT_{ac}^R=0$. Using Lemma \ref{e_crosses_at_most_k_edges_in_M} with $X(e)=1$, $V_F(e)=1$ and $V_M(e)=0$, we get that any link in $\OPT_{ac}^{MP}$ connects at most $\ell_{ad}^{MP}=\bigl\lceil \frac{2-3\alpha}{2\alpha -1}\bigr\rceil_+$ links of $M$.

\paragraph{Case 5: \texorpdfstring{$ij=bb$}{ij=bb}} Any link $e$ with both extremes in $B$ cannot cross $M$ and $P$ at the same time, since by Lemma \ref{e_crosses_at_most_k_edges_in_M} with $X(e)=1$, $V_F(e)=2$ and $V_M(e)=0$, any such link would connect at most $\bigl\lceil \frac{1-2\alpha}{2\alpha -1}\bigr\rceil< 0$ links of $M$. A link $e$ that only crosses $M$, connects at most $\ell_{bb}^{M}=\bigl\lceil \frac{3-2\alpha}{2\alpha -1}\bigr\rceil$ links of $M$, using lemma \ref{e_crosses_at_most_k_edges_in_M} with $X(e)=0$, $V_F(e)=2$ and $V_M(e)=0$.

\paragraph{Case 6: \texorpdfstring{$ij=bc$}{ij=bc}} Using Lemma  \ref{e_crosses_at_most_k_edges_in_M}, with $X(e)=0$, $V_F(e)=1$ and $V_M(e)=1$, we get that links with extremes in $B$ and $C$ that do not cross $P$ connect at most $\ell_{bc}^{M}=\bigl\lceil \frac{4-2\alpha}{2\alpha -1}\bigr\rceil$ links of $M$. And using $X(e)=1$, $V_F(e)=1$ and $V_M(e)=1$ we note that links with extremes in $B$ and $C$ that do cross $P$, connect at most $\ell_{bc}^{MP}=\bigl\lceil \frac{2-2\alpha}{2\alpha -1}\bigr\rceil$ links of $M$.

\paragraph{Case 7: \texorpdfstring{$ij=bd$}{ij=bd}} 
Using Lemma~\ref{e_crosses_at_most_k_edges_in_M} with $X(e)=0$, $V_F(e)=1$ and $V_M(e)=0$, we  get that links with extremes in $B$ and $D$ that do not cross $P$, connect at most $\ell_{bd}^{M}=\bigl\lceil \frac{4-3\alpha}{2\alpha -1}\bigr\rceil$ links of $M$. And using  $X(e)=1$, $V_F(e)=1$ and $V_M(e)=0$ we get  that links with extremes in $B$ and $D$ that do cross $P$, connect at most $\ell_{bd}^{MP}=\bigl\lceil \frac{2-3\alpha}{2\alpha -1}\bigr\rceil$ links of $M$.

\paragraph{Case 8: \texorpdfstring{$ij=cc$}{ij=cc}} A link with both extremes in $C$ cannot cross $P$ due to Lemma \ref{atmost1}. Then $\OPT_{cc}^P=\OPT_{cc}^{MP}=\emptyset$. We also note, using Lemma \ref{e_crosses_at_most_k_edges_in_M} with $X(e)=0$, $V_F(e)=0$ and $V_M(e)=2$, that these links can connect at most  $\ell_{cc}^{M}=\bigl\lceil \frac{5-2\alpha}{2\alpha -1}\bigr\rceil$ links of $M$. 

\paragraph{Case 9: \texorpdfstring{$ij=cd$}{ij=cd}} A link with an extreme in $C$ and other in $D$ cannot cross $P$ due to Lemma \ref{atmost1}; thus $\OPT_{cd}^{P}=\OPT_{cd}^{MP}=\emptyset$. We also note, using Lemma \ref{e_crosses_at_most_k_edges_in_M} with $X(e)=0$, $V_F(e)=0$ and $V_M(e)=1$, that these links can connect  at most $\ell_{cd}^{M}=\bigl\lceil \frac{5-3\alpha}{2\alpha -1}\bigr\rceil$ links of $M$.

\paragraph{Case 10: \texorpdfstring{$ij=dd$}{ij=dd}} In this case $\OPT_{dd}=\emptyset$ because if we had such a link we would have added it to the maximal matching $M$.
\medskip

\paragraph{Step 2: Further simplifications.}

The LP in its current form is quite cumbersome, so we will further simplify it by fixing some variables to zero and proving that this can be done without changing the optimal value that can be achieved. We call $x^*, x_A^*, x_B^*, x_M^*$ an optimal solution to the LP.

\begin{itemize}
\item $x_{cd}^{R}=0$: If $x_{cd}^{R*} = \varepsilon > 0$. Then we can define a new solution $(x', x_A', x_B', x_M')$ by setting $x_{cd}^{R'}= x_{cd}^{R*} - \varepsilon$, $x_{cd}^{M'}=x_{cd}^{M*} + \varepsilon$, maintaining feasibility and getting the same objective value.

\item $x_{bd}^{R}=0$: If $x_{bd}^{R*} = \varepsilon > 0$. Then we can define a new solution $(x', x_A', x_B', x_M')$ by setting $x_{bd}^{R'}= x_{bd}^{R*} - \varepsilon$, $x_{bd}^{M'}=x_{bd}^{M*} + \varepsilon$, maintaining feasibility and getting the same objective value.

\item $x_{bc}^{R}=0$: If $x_{bc}^{R*} = \varepsilon > 0$. Then we can define a new solution $(x', x_A', x_B', x_M')$ by setting $x_{bc}^{R'}= x_{bc}^{R*} - \varepsilon$, $x_{bc}^{M'}=x_{bc}^{M*} + \varepsilon$, maintaining feasibility and getting the same objective value.

\item $x_{ab}^{R}=0$: If $x_{ab}^{R*} = \varepsilon > 0$. Then we can define a new solution $(x', x_A', x_B', x_M')$ by setting $x_{ab}^{R'}= x_{bc}^{R*} - \varepsilon$, $x_{aa}^{R*}=x_{aa}^{R*} + \varepsilon/2$, $x_{bb}^{M*}=x_{bb}^{M*} + \varepsilon/4$, $x_{bb}^{R*}=x_{bb}^{R*} + \varepsilon/4$, maintaining feasibility and getting the same objective value.

\item $x_{bb}^{R}=0$: If $x_{bb}^{R*} = \varepsilon > 0$. Then we can define a new solution $(x', x_A', x_B', x_M')$ by setting $x_{bb}^{R'}= x_{bb}^{R*} - \varepsilon$, $x_{bb}^{M'}=x_{bb}^{M*} + \varepsilon$, maintaining feasibility and getting the same objective value.

\item $x_{ac}^{P}=0$: If $x_{ac}^{P*} = \varepsilon > 0$. Then we can define a new solution $(x', x_A', x_B', x_M')$ by setting $x_{ac}^{P'}= x_{ac}^{P*} - \varepsilon$, $x_{ac}^{MP'}=x_{ac}^{MP*} + \varepsilon$, maintaining feasibility and getting the same objective value.

\item $x_{bc}^{P}=0$: If $x_{bc}^{P*} = \varepsilon > 0$. Then we can define a new solution $(x', x_A', x_B', x_M')$ by setting $x_{bc}^{P'}= x_{bc}^{P*} - \varepsilon$, $x_{bd}^{P'}=x_{bd}^{P*} + \varepsilon$, maintaining feasibility and getting the same objective value.

\item $x_{bd}^{P}=0$:  If $x_{bd}^{P*} = \varepsilon > 0$. Then we can define a new solution $(x', x_A', x_B', x_M')$ by setting $x_{bd}^{P'}= x_{bd}^{P*} - \varepsilon$, $x_{ad}^{P'}=x_{ad}^{P*} + \varepsilon$, $x_A^{'} = x_A^{*} + \varepsilon$, $x_B^{'} = x_B^{*} + \varepsilon$, maintaining feasibility and getting the same objective value.

\item $x_{cc}^{R}=0$: This does not affect as it does not appear in any constraint. 

\end{itemize}
\end{proof}

After all these simplifications, we obtain the following LP.

\begin{align*}
\text{(LP)}\quad  \min (x_M+ x_{aa}^R + x_{ab}^{P} +  x_{bb}^{M} + x_{bb}^{MP} + x_{cc}^{M} &+ x_{ad}^{P} + x_{ad}^{MP} \\
 + x^{MP}_{ac} + x^M_{bc} + x_{bc}^{MP}
+ x_{bd}^M + x_{bd}^{MP} + x_{cd}^M) & \\
2x_{aa}^R  + x_{ab}^{P} +  x_{ad}^{P} + x_{ad}^{MP} + x_{ac}^{MP}  & \geq x_A\\
x_{ab}^{P} + 2(x_{bb}^M + x_{bb}^P)  + x_{bc}^M + x_{bc}^{MP}  + x^M_{bd} + x_{bd}^{MP} & \geq x_B\\
x^P_{ad} + x_{ad}^{MP} + x^M_{bd} + x_{bd}^{MP} + x_{cd}^{M} & \geq n-2x_M-x_A-x_B\\
+ x_{ab}^{P} + x_{ad}^P + x_{ad}^{MP} + x^{MP}_{ac} + x_{bb}^P + x_{bc}^{MP} + x_{bd}^{MP} & \geq x_B/2\\
 \ell_{bb}^M x_{bb}^M + \ell_{ad}^{MP} x_{ad}^{MP} + \ell_{ac}^{MP} x_{ac}^{MP} + \ell_{bd}^M x^M_{bd} 
& \\
+ \ell_{bc}^M x^M_{bc}
+ \ell_{ac}^{MP} x_{bc}^{MP}
+ \ell_{cd}^M x_{cd}^M +
\ell_{ad}^{MP} x_{bd}^{MP} 
& \\
+\ell_{cc}^{M} x_{cc}^M & \geq x_M\\
x_A+x_B&\geq|V(F)|\\
2x_M+x_A+x_B&\leq n\\
x,x_A,x_B,x_M&\geq 0.
\end{align*}

The next observation follows from the fact that the perimeter $P(F)$ is a set of chords covering all the vertices in $B(F)$. 
\begin{proposition}\label{P_is_greater_than_B/2}
$|P(F)|\geq |B(F)|/2$.
\end{proposition}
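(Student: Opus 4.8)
\begin{proof}
By definition, $B(F)$ is precisely the set of vertices of the cycle that are incident to at least one border chord of some circle component of $\mathcal{L}$, that is, $B(F)=V(P(F))$ where $P(F)=\bigcup_{J\in\mathcal{L}}P(J)$. Since every chord in $P(F)$ has exactly two endpoints, the set of chords $P(F)$ can cover at most $2|P(F)|$ distinct vertices; hence $|B(F)|=|V(P(F))|\le 2|P(F)|$, which is equivalent to the claim.
\end{proof}

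Regarding the approach: the statement is essentially an edge-cover counting bound, so I would prove it in a single step. First I would recall that $P(F)$ is a set of chords whose incident vertices are, by definition, exactly the border vertices $B(F)$. Then I would invoke the trivial bound that a set of $t$ chords is incident to at most $2t$ vertices, applied with $t=|P(F)|$, to conclude $|B(F)|\le 2|P(F)|$. I do not expect any genuine obstacle here: the only thing to be careful about is matching the definitions (a border chord may connect two border vertices belonging to different border chords of the same component, but this does not affect the counting argument, which only uses that each chord contributes at most two endpoints).
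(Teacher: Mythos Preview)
Your proof is correct and matches the paper's own argument essentially verbatim: the paper simply notes that $P(F)$ is a set of chords covering all vertices in $B(F)$, from which the bound $|P(F)|\ge |B(F)|/2$ follows immediately since each chord has two endpoints.
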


\begin{lemma}
The assignment $x_{ij}^k=|\OPT_{ij}^k|$, $x_A=|A|$, $x_B=|B|$, $x_M=|M|$ is a feasible solution for LP$(n,|V(F)|,\alpha)$.
\end{lemma}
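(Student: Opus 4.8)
The plan is to check LP$(n,|V(F)|,\alpha)$ (Figure~\ref{fig:LP}) constraint by constraint for the proposed assignment, which simply records the (unnormalized) cardinalities $|\OPT_{ij}^k|$, $|A|$, $|B|$, $|M|$. The bookkeeping rows need nothing new: nonnegativity is automatic since all entries are cardinalities; the equalities $x_{ij}^k=0$ for $(ij,k)\in\mathcal{Z}$ are exactly the first assertion of Lemma~\ref{zeros-LP} ($\OPT_{ij}^k=\emptyset$ there); $x_A+x_B\ge|V(F)|$ holds with equality because $A$ and $B$ partition $V(F)$; and $2x_M+x_A+x_B\le n$ holds because $V(M)$ (of size $2|M|$) and $V(F)=A\cup B$ are disjoint subsets of $[n]$, as $M$ consists of links with both endpoints in $V'=V(C_n)\setminus V(F)$. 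The constants $\ell_{ij}^k$ appearing on the left of the $M$-connecting row are valid upper bounds by the second assertion of Lemma~\ref{zeros-LP} (proved via Lemma~\ref{e_crosses_at_most_k_edges_in_M}).

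For the three covering rows I would combine feasibility of $\OPT$ with the observation that $M$ avoids $A$, $B$ and $D$. By Lemma~\ref{Every_pair_must_be_crossed}, every vertex of $C_n$ lies on a link of $\OPT$; since links of $M$ have both endpoints in $V'$ while $A\cup B=V(F)$, and since $C=V(M)$ and $D=V(C_n)\setminus(A\cup B\cup C)$ are disjoint, no link of $M$ touches $A$, $B$ or $D$. Hence every vertex of $A$ is incident to a link of $\OPT\setminus M$, and counting the (link, endpoint)-incidences between $\OPT\setminus M$ and $A$ yields exactly $\sum_k(2x_{aa}^k+x_{ab}^k+x_{ac}^k+x_{ad}^k)$, which is therefore at least $|A|=x_A$. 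The $B$-row is verified identically, and the $D$-row as well once one notes $|D|=n-2x_M-x_A-x_B$ (because $A,B,C,D$ partition $[n]$ and $|C|=2|M|$).

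For the perimeter row $\sum_{ij}(x_{ij}^P+x_{ij}^{MP})\ge x_B/2$, I would start from $|P(F)|\ge|B|/2$ (Proposition~\ref{P_is_greater_than_B/2}). Each chord in $P(F)$, being a potential separating pair, must be crossed by some link of $\OPT$ (feasibility); by Lemma~\ref{links_can_cross_at_most_one_chord} a link with both endpoints in $V'$—in particular any link of $M$—crosses no chord of $P(F)$, so each perimeter chord is crossed by a link of $\OPT\setminus M$; and, again by that lemma, any single link crosses at most one chord of $P(F)$. Thus the map sending each perimeter chord to a link of $\OPT\setminus M$ crossing it is injective, so at least $|P(F)|\ge|B|/2$ links of $\OPT\setminus M$ cross the perimeter, and those are precisely the links counted by $\sum_{ij}(x_{ij}^P+x_{ij}^{MP})$.

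The step I expect to be the main obstacle is the $M$-connecting row $\sum_{ij}(\ell_{ij}^Mx_{ij}^M+\ell_{ij}^{MP}x_{ij}^{MP})\ge x_M$. The idea is that $\OPT$, being an edge cover, is a circle component by Lemma~\ref{circle-component-characterization}, so its circle graph is connected; let $M_1,\dots,M_t$ be the connected components of the circle graph of $M$, so $\sum_i|M_i|=|M|$. If some $M_i$ were crossed by no link of $\OPT\setminus M$, then $M_i$ would be a union of connected components of the circle graph of $\OPT$ and hence, by connectedness, equal to $\OPT$, forcing $\OPT=M$ to be a (perfect) matching—a degenerate case where $|\OPT|=n/2$ already attains the trivial lower bound and can be excluded. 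So every $M_i$ is crossed by some link of $\OPT\setminus M$. Since a link $e\in\OPT\setminus M$ connects exactly the union of those $M_i$ it crosses, the number of links of $M$ that $e$ connects equals $\sum_{i:\,e\text{ crosses }M_i}|M_i|$; summing over $e\in\OPT\setminus M$ and interchanging the order of summation gives $\sum_i|M_i|\cdot|\{e:e\text{ crosses }M_i\}|\ge\sum_i|M_i|=|M|$. Finally, links lying in no $M$- or $MP$-class connect no link of $M$, and by Lemma~\ref{e_crosses_at_most_k_edges_in_M} (as tabulated in Lemma~\ref{zeros-LP}) a link in $\OPT_{ij}^k$ with $k\in\{M,MP\}$ connects at most $\ell_{ij}^k$ links of $M$; hence $\sum_{ij}(\ell_{ij}^Mx_{ij}^M+\ell_{ij}^{MP}x_{ij}^{MP})\ge\sum_{e\in\OPT\setminus M}(\#\text{ links of }M\text{ connected by }e)\ge|M|=x_M$. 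Once every row is verified, the assignment is LP-feasible.
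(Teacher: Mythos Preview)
Your proof is correct and follows essentially the same approach as the paper's own proof, just with considerably more detail. The paper dispatches the five structural constraints in one sentence each (edge-cover for $A,B,D$; Lemma~\ref{links_can_cross_at_most_one_chord} plus Proposition~\ref{P_is_greater_than_B/2} for the perimeter row; Lemma~\ref{e_crosses_at_most_k_edges_in_M} for the $M$-connecting row), and your write-up expands each of these into the explicit counting argument. Your treatment of the $M$-connecting row via the component decomposition $M_1,\dots,M_t$ and the double-counting identity is in fact more careful than the paper's terse ``all links of $M$ must be connected to the rest of the solution''; the degenerate case $\OPT=M$ you flag is glossed over in the paper as well, and your remark that it forces $|\OPT|=n/2$ (so the LP bound is not needed there) is the right way to dismiss it.
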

\begin{proof} The first, second and third constraints hold immediately as $A$, $B$ and $D$ must be covered by links in $\OPT$, and $|D|=n-2|M|-|A|-|B|$. The fourth constraint comes from the fact that the chords of $P$ need to be crossed by some link of $\OPT\setminus M$, but any link of $\OPT\setminus M$ can cross at most one chord in $P$ due to Lemma~\ref{links_can_cross_at_most_one_chord}; hence, the left hand side is at least $|P|$, which, by Proposition ~\ref{P_is_greater_than_B/2}, is at least $|B|/2$. The fifth constraint follows since all the links of $M$ must be connected to the rest of the solution, but the number of links that a link from $S\setminus F$ can cross is bounded thanks to Lemma~\ref{e_crosses_at_most_k_edges_in_M}. The sixth and seventh constraints hold since $|A|+|B|=|V(F)|$ and $2|M|+|A|+|B|=n-|C|\leq n$. 
\end{proof}

For $1/2<\alpha \leq 1$, and $0\leq x\leq 1$, let \begin{align*}
R(\alpha)&=3+3\ell_{cd}^M(\alpha), & W(x,\alpha)&=(R(\alpha)-S(\alpha)x)/T(\alpha),\\
S(\alpha)&=3+2\ell_{bd}^M(\alpha)+\ell_{ad}^{MP}(\alpha), & f_\alpha&=(2R(\alpha)-T(\alpha))/2S(\alpha)=3/2S(\alpha),\\
T(\alpha)&=3+6\ell_{cd}^M(\alpha), &  \psi_\alpha(x)&=(3-x)/W(x,\alpha).
\end{align*}

Notice that $W(x,\alpha)$ is decreasing in $\alpha$, and $f_\alpha$ is defined so that $W(f_\alpha,\alpha)=1/2$, implying that $\psi_\alpha(f_\alpha)=6-2f_\alpha$. 
We also observe that $3S(\alpha) - R(\alpha) = 6 + 6\ell_{bd}^M + 3\ell_{ad}^{MP} - 3\ell_{cd}^M \geq 6 + 6\frac{4-3\alpha}{2\alpha-1} + 3\frac{2-3\alpha}{2\alpha-1} - 3\frac{6 - 5\alpha}{2\alpha -1} =
12\frac{1 - \alpha}{2\alpha-1}>0$  for all $\alpha > \frac{1}{2}$.
In particular, the function $\psi_\alpha(x)$  satisfies $\psi'_{\alpha}(x)=\frac{T(\alpha)}{(R(\alpha)-S(\alpha)x)^2}(3S(\alpha)-R(\alpha))\geq 0$, so it is an increasing function. This allows to prove the following and to provide improved approximation guarantees.

\begin{lemma} \label{lema:lpvalue} Let $F$ be an $(\alpha,N_{\max{}})$-critical set with $N_{\max{}}>\ell_{cc}^M$. 
Then the optimal value of LP$(n,|V(F)|,\alpha)$ is $\max\{n/2, nW(|V(F)|/n,\alpha)\}$. In particular, this is equal to $n/2$ 
if and only if $|V(F)|/n\geq f_\alpha$.
\end{lemma}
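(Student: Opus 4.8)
The plan is to compute the optimal value of the simplified LP directly by exhibiting an optimal primal solution and a matching dual certificate, then to read off the threshold condition from the formula $W(|V(F)|/n,\alpha)$. First I would normalize by dividing all quantities by $n$, so that the constraint $2x_M+x_A+x_B\le n$ becomes $2x_M+x_A+x_B\le 1$ and $x_A+x_B\ge|V(F)|/n=:\varphi$; the objective is the total number of links outside $M$ plus $x_M$. The feasible solution constructed earlier shows the LP value is at most $\opt$; the content of the lemma is the reverse, i.e.\ that $\opt$ can indeed be as small as $\max\{1/2,\,W(\varphi,\alpha)\}$ and no smaller, so both a construction and a lower bound are needed.

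For the lower bound I would use LP duality on the simplified LP. The key structural observation is that the "expensive" way to cover the vertices of $D$ (there are $n-2x_M-x_A-x_B$ of them, and each such link covers at most one $D$-vertex since $x_{dd}=0$) forces many links, while the "cheap" way — putting those vertices into $M$ — is limited because each link of $\OPT\setminus M$ connects at most $\ell_{ij}^k$ links of $M$ (the fifth constraint), and the cheapest connector ratios are exactly the ones appearing in $S(\alpha)=3+2\ell_{bd}^M+\ell_{ad}^{MP}$ and $T(\alpha)=3+6\ell_{cd}^M$. Concretely I expect the binding constraints at the optimum to be: the third constraint (covering $D$), the fifth constraint (connecting $M$), the sixth constraint $x_A+x_B\ge\varphi$, and possibly the fourth (perimeter, $\ge x_B/2$); the dual solution assigns weights to these so that the reduced cost of every variable is nonnegative, and the resulting dual objective is $R(\alpha)/T(\alpha)-(S(\alpha)/T(\alpha))\varphi=W(\varphi,\alpha)$ after substituting $\varphi=x_A+x_B$. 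I would then take the maximum with $1/2$, which comes from the always-present constraints $2x_M+x_A+x_B\le 1$ and nonnegativity forcing $x_M+(\text{rest})\ge 1/2$ independently (this is just Lemma~\ref{Every_pair_must_be_crossed} reappearing inside the LP).

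For the matching primal construction, when $\varphi\le f_\alpha$ I would put almost all of the budget into $M$: set $x_A,x_B$ near their minimum $\varphi$ (split so the perimeter constraint is satisfiable), make $x_M$ as large as $2x_M+x_A+x_B\le1$ allows, cover $D$ using the cheapest available link types (those realizing $\ell_{cd}^M$, contributing the $T(\alpha)$ coefficient for covering-plus-connecting), and connect $M$ using links of type $cd$ and the border types with coefficients $\ell_{bd}^M,\ell_{ad}^{MP}$; the arithmetic should collapse to objective value exactly $W(\varphi,\alpha)$, and the identity $W(f_\alpha,\alpha)=1/2$ (noted right after the definitions) tells us this equals $1/2$ precisely at $\varphi=f_\alpha$, and exceeds $1/2$ for $\varphi<f_\alpha$ since $W$ is decreasing in its first argument. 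When $\varphi\ge f_\alpha$ the vertex-covering requirement $x_A+x_B\ge\varphi$ is so large that we can no longer afford enough of $M$ to beat the trivial bound, and the optimum is $1/2$, attained by a solution with $x_M$ small and $x_A+x_B=\varphi$; here I would just verify feasibility of such a point with objective $1/2$.

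The main obstacle I anticipate is bookkeeping in the duality argument: the simplified LP still has around a dozen variables and five nontrivial inequality constraints, so guessing the correct set of tight constraints and the dual multipliers — and checking that \emph{every} one of the dozen reduced costs is nonnegative, using the explicit values of all the $\ell_{ij}^k(\alpha)$ and monotonicity facts like $\ell_{cd}^M\le\ell_{cc}^M<N_{\max}$ and the inequality $3S(\alpha)-R(\alpha)>0$ derived above — is where the real work lies. A cleaner alternative, which I would try first, is to eliminate variables by hand: using the zero-variables from Lemma~\ref{zeros-LP} and the substitutions that preserve optimality, reduce to an LP in essentially the two aggregate quantities $x_M$ and $\delta:=n-2x_M-x_A-x_B=|D|$, express the objective as a function of $(x_M,\delta,\varphi)$ subject to $\delta\ge0$, $x_M\ge0$, $2x_M+\varphi\le1$ (taking $x_A+x_B=\varphi$ at optimum), and optimize that low-dimensional program explicitly; the boundary cases of this two-variable optimization are exactly the two pieces $nW(\varphi,\alpha)$ and $n/2$, with the crossover at $\varphi=f_\alpha$.
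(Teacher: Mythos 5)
Your plan is essentially the paper's proof: the paper likewise computes the LP value by exhibiting an explicit optimal primal solution supported on precisely the variables $x_{ad}$, $x_{bd}^M$, $x_{cd}^M$, $x_A$, $x_B$, $x_M$ that you single out (Table~\ref{PL_Optimum2}; this is what makes $S(\alpha)=3+2\ell_{bd}^M+\ell_{ad}^{MP}$ and $T(\alpha)=3+6\ell_{cd}^M$ appear), together with a matching dual certificate (Table~\ref{PL_Optimum3}), and reads off the threshold from the identity $W(f_\alpha,\alpha)=1/2$ exactly as you note. The one small divergence is the regime $|V(F)|/n\ge f_\alpha$: rather than building a primal point of value $n/2$ as you propose, the paper observes that enlarging $|V(F)|$ only enlarges the primal feasible region (so the optimum is nonincreasing in $|V(F)|$, hence $\le n/2$ past the threshold) and pairs that with a $|V(F)|$-independent dual feasible point of value $n/2$; either route closes the case.
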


To do this, we study the dual problem of the LP and then we solve both programs using LP solvers, obtaining feasible points that give the same objective value. The latter fact certifies that the point obtained for the primal is indeed an optimal solution of our problem.

The dual problem of the LP is the following

\begin{minipage}{0.5\textwidth}
\begin{align*}
\text{(D)}\quad  \max y_3(n-|V(F)|)+y_6|V(F)|-y_7 n\\
2y_1&\leq 1\\
2y_2 + \ell_{bb}^M y_5 &\leq 1 \\
y_1+y_3+2y_4&\leq 1\\
y_1+y_2+2y_4&\leq 1\\
2y_2+2y_4&\leq 1\\
y_2+y_3+\ell_{w,b}y_5&\leq 1\\
y_2+\ell_{m,b}y_5&\leq 1\\
y_3+\ell_{w,m}y_5&\leq 1\\
\end{align*}
\end{minipage}%
\qquad
\begin{minipage}{0.28\textwidth}
\begin{align*}
\ell_{cc} y_5&\leq 1\\
y_2 + \ell_{bc}^M y_5&\leq 1\\
y_1 + 2y_4 + \ell_{ac}^{MP} y_5&\leq 1\\
y_1 + y_3 + 2y_4 + \ell_{ad}^{MP} y_5&\leq 1\\
y_2 + y_3 + 2y_4 + \ell_{ad}^{MP} y_5&\leq 1\\
2y_3-y_5-2y_7&\leq 1\\
-y_1+y_6-y_7&\leq 0\\
-y_2-y_4+y_6-y_7&\leq 0\\
y_1,y_2,y_3,y_4,y_5,y_6,y_7&\geq 0\\
\end{align*}
\end{minipage}

\begin{table}\label{table1}
\centering
\setlength{\tabcolsep}{0.5em}

\renewcommand{\arraystretch}{1.9}
\begin{tabular}{|c|c|c|c|c|}
\hline
Variable & optimal assignment for $I_1 (\alpha > 2/3)$ & optimal assignment for $I_1 (\alpha \leq 2/3)$ \\
\hline
$x_A$ &  $\displaystyle\frac{1}{3}|V(F)|$ & $\displaystyle \frac{1}{3}|V(F)|$ \\
$x_B$ & $\displaystyle \frac{2}{3}|V(F)|$ & $\displaystyle \frac{2}{3}|V(F)|$ \\
$x_M$ & $x_M^{(1)}$ & $x_M^{(1)}$ \\
$x_{ad}^P$  & $\displaystyle \frac{1}{3}|V(F)|$ & $0$ \\
$x_{ad}^{MP}$  & $0$ & $\displaystyle \frac{1}{3}|V(F)|$ \\
$x_{bd}^M$ & $\displaystyle \frac{2}{3}|V(F)|$ & $\displaystyle \frac{2}{3}|V(F)|$\\
$x_{cd}^M$ & $\displaystyle x_{cd}^{M(1)}$ & $x_{cd}^{M(1)}$ \\
Other & $0$ & $0$ \\
\hline
\end{tabular}
\caption{An optimal solution of the LP as a function of $|V(F)|$, where we denote $X_M^{(1)} = \displaystyle \frac{3\ell_{cd}^{M}n + (\ell_{ad}^{MP} + 2\ell_{bd}^{M} - 6\ell_{cd}^{M})|V(F)|}{(3 +6\ell_{cd}^{M})}$, and $\displaystyle  x_{cd}^{M(1)} =\frac{3n - |V(F)|(2\ell_{ad}^{MP} + 4\ell_{bd}^M+6)}{3+6\ell_{cd}^M} $.}\label{PL_Optimum2}
\end{table} 

We observe that any solution that is primal feasible for (LP) is also feasible when we decrease the value of $|V(F)|$. From here we get that the feasible set for (LP) increases as $|V(F)|$ increases, therefore the optimal value of (LP) must be a decreasing function of $|V(F)|$.

The behavior of LP depends on whether $|V(F)|$ is in $\displaystyle I_1 = \left[0, \frac{3n}{2\ell_{ad}^{MP} + 4\ell_{bd}^M + 6}\right)$ or in $\displaystyle I_2=\left[\frac{3n}{2\ell_{ad}^{MP} + 4\ell_{bd}^M + 6}, n\right]$. The form of the solution also depends on whether $\alpha\in [1/2, 2/3]$ or $\alpha>2/3$. 

In Table \ref{PL_Optimum2} we describe some optimal primal and dual solutions for the case in which $|V(F)|\in I_1$. It is easy to check that they are feasible and that their common LP value is equal to 
\[\frac{(3+3\ell_{cd}^{M})n-(3+2\ell_{bd}^{M} + \ell_{ad}^{MP})|V(F)|}{3+6\ell_{cd}^{M}}.\]

Note that when $|V(F)|$ is at the right extreme of $I_1$, the LP value above equals $n/2$. Therefore, if $|V(F)|\in I_2$, then the optimal value of LP must be at most $n/2$. In the second column of Table \ref{PL_Optimum3} we show an assignment of value $n/2$, that is dual feasible for $|V(F)|\in I_1\cup I_2$. Using that the primal value is always larger than the value of any dual feasible solution, we conclude that for $|V(F)|\in I_2$, (LP) has value exactly $n/2$, completing the proof of the lemma.

\begin{table}\label{table2}
\centering
\renewcommand{\arraystretch}{2.6}
\setlength{\tabcolsep}{1em}
\begin{tabular}{|c|c|c|}
\hline
Variable &  optimal assignment for $|V(F)|\in I_1$ & feasible assignment for $|V(F)|\in I_1\cup I_2$\\
\hline
$y_1$ &  $\displaystyle \frac{3\ell_{cd}^M - 2\ell_{bd}^M - \ell_{ad}^{MP}}{3(2\ell_{cd}^M +1)}$ & $\displaystyle \frac{1}{2}$ \\
\hline
$y_2$ & $\displaystyle \frac{\ell_{cd}^M - \ell_{bd}^M}{2\ell_{cd}^M +1}$ & $\displaystyle \frac{1}{2}$\\
\hline
$y_3$ & $\displaystyle \frac{\ell_{cd}^M +1 }{2\ell_{cd}^M +1}$ & $\displaystyle \frac{1}{2}$\\
\hline
$y_4$ & $\displaystyle \frac{\ell_{bd}^M - \ell_{ad}^{MP}}{3(2\ell_{cd}^M +1)}$ & $0$\\
\hline
$y_5$ & $\displaystyle \frac{1}{2\ell_{cd}^M +1}$  & $0$\\
\hline
$y_6$ & $\displaystyle \frac{3\ell_{cd}^M - 2\ell_{bd}^M - \ell_{ad}^{MP}}{3(2\ell_{cd}^M +1)}$ & $\displaystyle \frac{1}{2}$\\
\hline
$y_7$ & $0$ & $0$\\
\hline
\end{tabular}
\caption{Optimal solution of the dual problem}\label{PL_Optimum3}
\end{table} 

\begin{proof}[Proof of Lemma \ref{lem:approximation_order_for_local_search_algorithm}]

By Lemma \ref{bound_for_alg}, it holds that $\alg(F)\leq (1-\alpha)(3-f)+3\alpha-2$. 
Using the bound from the LP, Lemma \ref{lema:lpvalue}, that $\Psi_\alpha(\cdot)$ is increasing and that $2\alpha\geq 1$, we get for $f\leq f_\alpha$, \[\frac{\alg(F)}{\opt} \leq (1-\alpha)\Psi_\alpha(f)+\frac{(3\alpha-2)}{\opt}\leq (1-\alpha)\Psi_\alpha(f_{\alpha})+2(3\alpha-2)=2-2(1-\alpha)f_\alpha.\]
On the other hand, if $f\geq f_{\alpha}$, then $\frac{\alg(F)}{\opt} \leq 2\alg(F)\leq2-2(1-\alpha)f_{\alpha}$. \end{proof}

\end{document}